\documentclass[reqno]{amsart}
\oddsidemargin9mm
\evensidemargin9mm 
\textwidth14.6cm 
\parskip1.3mm
\usepackage{float}
\usepackage{hyperref}
\usepackage{mathtools,amsmath,amssymb,mathrsfs}
\usepackage{graphicx}
\usepackage{tikz}
\usetikzlibrary{calc}
\usepackage{xcolor}
\usetikzlibrary{snakes}
\usepackage{amsthm}
\usepackage{enumitem}
\usetikzlibrary{arrows.meta,snakes} 
\graphicspath{{.}{figures/}}
%\setlength{\voffset}{-28.4mm}
%\setlength{\hoffset}{-1in}
%\setlength{\topmargin}{20mm}
%\setlength{\oddsidemargin}{25mm}
%\setlength{\evensidemargin}{25mm}
%\setlength{\textwidth}{160mm}
%
%\setlength{\parindent}{0pt}
%
%\setlength{\textheight}{235mm}
%\setlength{\footskip}{20mm}
%\setlength{\headsep}{50pt}
%\setlength{\headheight}{0pt}
%\voffset=-1.5cm \textheight=23cm \hoffset=-.5cm \textwidth=16cm
%\oddsidemargin=1cm \evensidemargin=1cm
%\footskip=35pt
%%\linespread{1.10}
%\parindent=20pt
%\setlength{\parindent}{0pt}

%\date{}  

\newcommand{\R}{\mathbb{R}}
\newcommand{\ha}{\frac{1}{2}}
\newcommand{\e}{\varepsilon}

\newcommand{\BBB}{\color{blue}} 
\newcommand{\EEE}{\color{black}} 
\renewcommand{\BBB}{}
\renewcommand{\EEE}{}

\theoremstyle{plain}
\begingroup
\newtheorem{theorem}{Theorem}[section]
\newtheorem{lemma}[theorem]{Lemma}
\newtheorem{proposition}[theorem]{Proposition}

\endgroup

\theoremstyle{definition}
\begingroup
\newtheorem{definition}[theorem]{Definition}
\newtheorem{remark}[theorem]{Remark}
\newtheorem{example}[theorem]{Example}
\endgroup

\newenvironment{step}[1]{\textbf{Step #1.}}{}

\usepackage[english]{babel}
%
%\title{Crystallixation in the square lattice for ionic compounds}
%\author{Manuel Friedrich and Leonard Kreutx}

%\date{}  

\begin{document}
 
\title[Crystallinity of the homogenized energy density]{Crystallinity of the homogenized energy density of periodic lattice systems}

\author{Antonin Chambolle}
% \address[Antonin Chambolle]{Ecole Polytechnique, France}
% \email{antonin.chambolle@cmap.polytechnique.fr}
\address[Antonin Chambolle]{CEREMADE, CNRS, Universit\'e Paris-Dauphine,
  PSL, France}
\email{antonin.chambolle@ceremade.dauphine.fr}

\author{Leonard Kreutz}
\address[Leonard Kreutz]{WWU M\"unster, Germany}
\email{lkreutz@uni-muenster.de}

\keywords{$\Gamma$-convergence, Ising system, Crystallinity, Wulff Shape.}

%\date{\today}  

\begin{abstract}
  We study the homogenized energy densities of periodic ferromagnetic Ising systems. We prove that, for finite range interactions, the homogenized energy density, identifying the effective limit, is crystalline, i.e. its Wulff crystal is a polytope, for which we can (exponentially) bound the number of vertices. This is achieved by deriving a dual representation of the energy density through a finite cell formula. This formula also \BBB permits \EEE easy
  numerical computations: we show a few experiments where we compute periodic
  patterns which minimize the anisotropy of the surface tension.
\end{abstract}

\subjclass[2010]{35B27, 49J45, 82B20, 82D40.}  
%49J45  	Methods involving semicontinuity and convergence; relaxation
%49M25  	Discrete approximations in optimal control
%82B20  	Lattice systems (Ising, dimer, Potts, etc.) and systems on graphs arising in equilibrium statistical mechanics
%82D40  	Statistical mechanical studies of magnetic materials
\maketitle

\setcounter{tocdepth}{1}
%\tableofcontents

\maketitle

\section{Introduction}

The study of discrete interfacial energies has attracted widespread attention in the mathematical community over last decades, with applications in various contexts such as computer vision \cite{BlaZis}, crystallization problems \cite{BlancLewin:15}, fracture mechanics \cite{BacBraCIc,BraLewOrt,FriedrichSchmidt}, or statistical physics \cite{Pre,Sta}. To give examples, in computer vision the understanding of these energies allows to investigate functional correctness of segmentation algorithms \cite{Cha99}. Whereas for crystallization problems it gives fluctuation estimates on the macroscopic shape of the crystal cluster of ground state configurations \cite{CicLeo,FriedrichKreutz:19,FriedrichKreutz:20,MaiSch}. 

In this work, we consider energies defined on discrete periodic sets  $\mathcal{L} \subset \mathbb{R}^d$ and corresponding Ising systems. We refer to \cite{AliBraCic, BraCic,  CafDLL01,CafDLL05,CozDipVal,Daneri,GiuLebLie,GiuLieSer-11,GiuSer} for an abundant literature on the derivation of continuum limits of such systems and their effective behavior. More precisely,  we consider  $\mathcal{L}$  satisfying the following two conditions (see Figure~\ref{fig:L})
\begin{itemize}
\item[(i)] (Discreteness) There exists $c>0$ such that $\mathrm{dist}(x,\mathcal{L}\setminus\{x\}) \geq c$ for all $x \in \mathcal{L}$;
\item[(ii)] (Periodicity) There exists $T \in \mathbb{N}$ such that for all $z\in \mathbb{Z}^d$, it holds that $\mathcal{L}+Tz =\mathcal{L}$;
\end{itemize}
\begin{figure}[htb]
\begin{tikzpicture}[scale=.4]
\draw[<->](11,0)--++(0,4);
\draw(11,2) node[anchor=west]{$T$};

\draw[<->,white](-7,0)--++(0,4);
\draw[white](-7,2) node[anchor=east]{$T$};

\clip (-6,-6) rectangle (10,6);

\foreach \j in {-2,...,3}{

\foreach \k in {-2,...,3}{

\draw (-4+4*\j,-4+4*\k) rectangle (4+4*\j,4+4*\k);

\draw[fill=black] (-1+4*\j,-1+4*\k) circle(.05);
\draw[fill=black] (-2+4*\j,-1.2+4*\k) circle(.05);
\draw[fill=black] (-3+4*\j,-2+4*\k) circle(.05);
\draw[fill=black] (2+4*\j,1+4*\k) circle(.05);
\draw[fill=black] (3+4*\j,1.5+4*\k) circle(.05);
\draw[fill=black] (2.8+4*\j,1.9+4*\k) circle(.05);
\draw[fill=black] (2.2+4*\j,1.3+4*\k) circle(.05);
\draw[fill=black] (-0.2+4*\j,-0.2+4*\k) circle(.05);
\draw[fill=black] (1.8+4*\j,1.8+4*\k) circle(.05);
\draw[fill=black] (-3.8+4*\j,-3.6+4*\k) circle(.05);
\draw[fill=black] (-2.6+4*\j,-2.5+4*\k) circle(.05);
\draw[fill=black] (.8+4*\j,-1.2+4*\k) circle(.05);
\draw[fill=black] (.7+4*\j,1.2+4*\k) circle(.05);
\draw[fill=black] (-1.8+4*\j,-2.2+4*\k) circle(.05);
}
}
\end{tikzpicture}
\caption{An example of the set $\mathcal{L}$}
\label{fig:L}
\end{figure}
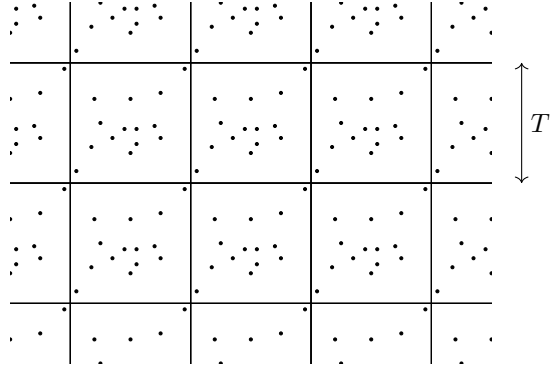 

To each function $u : \mathcal{L}\to \{0,1\}$ and each $A \subset \mathbb{R}^d$ we associate an energy
\begin{align}\label{intro:def energy}
E(u,A) = \sum_{i \in \mathcal{L}\cap A}\sum_{j \in \mathcal{L}} c_{i,j}(u(i)-u(j))^+\,,
\end{align}
where $(z)^+$ denotes the positive part of $z\in \mathbb{R}$, $c_{i,j} \colon \mathcal{L}\times \mathcal{L} \to [0,+\infty)$ are $T$-periodic, that is $c_{i+Tz,j+Tz} = c_{i,j}$ for all $i,j \in \mathcal{L}$ and $z \in \mathbb{Z}^d$ and satisfy the following decay assumption
\begin{itemize}
\item[(iii)] (Decay of interactions) For all $i \in \mathcal{L}$ there holds
\begin{align*}
\sum_{j \in \mathcal{L}} c_{i,j}|i-j| < +\infty\,.
\end{align*}
\end{itemize}
Assuming conditions (i)-(iii) (and some additional coercivity assumption) ensures that the asymptotic behavior of \eqref{intro:def energy} is well described (in a variational sense) by a continuum perimeter energy. More precisely, let us introduce a scaling parameter $\varepsilon >0$. We consider the scaled energies 
\begin{align*}
E_\varepsilon(u) = \sum_{i,j \in \varepsilon\mathcal{L}} \varepsilon^{d-1} c_{i,j}^\varepsilon(u(i)-u(j))^+\,,
\end{align*}
where  $c_{i,j}^\varepsilon = c_{i/\varepsilon,j/\varepsilon}$ and $u \colon\varepsilon \mathcal{L}\to \{0,1\}$. By identifying $u$ with its piecewise constant interpolation taking the value $u(i)$ on the Voronoi cell centered at $i \in \varepsilon\mathcal{L}$ we may regard the energies as defined on $L^1_{\mathrm{loc}}(\mathbb{R}^d, \{0,1\})$. Integral representation results \cite{AliCicRuf,AG,BraPiat} then guarantee that the energies $E_\varepsilon$ $\Gamma$-converge (see \cite{Braides:02, DalMaso:93} for an introduction to that subject) with respect to the $L^1_{\mathrm{loc}}(\mathbb{R}^d)$-topology to a continuum energy of the form
\begin{align*}
E_0(u) = \int_{\partial^*\{u=1\}} \varphi(\nu_u(x)) \,\mathrm{d}\mathcal{H}^{d-1}\,\quad u \in BV_{\mathrm{loc}}(\mathbb{R}^d;\{0,1\})\,.
\end{align*}
Here, $BV_{\mathrm{loc}}(\mathbb{R}^d;\{0,1\})$ denotes the space of functions with (locally) bounded variation and values in $\{0,1\}$, $\partial^*\{u=1\}$ denotes the reduced boundary of the level set $\{u=1\}$, $\nu_u(x)$ its measure theoretic normal at the point $x \in \partial^*\{u=1\}$, and $\mathcal{H}^{d-1}$ denotes the $(d-1)$-dimensional Hausdorff measure, see \cite{AFP} for the precise definitions of these notions. The energy density $\varphi \colon \mathbb{R}^d \to [0,+\infty)$ can be  recovered via the asymptotic cell formula
\begin{align}\label{intro:def varphi}
\varphi(\nu) := \lim_{\delta \to 0}\lim_{S\to +\infty} \frac{1}{S^{d-1}} \inf\left\{ E(u,Q^\nu_S)\colon u \colon \mathcal{L}\to\{0,1\}, u(i) = u_\nu(i) \text{ on } \mathcal{L} \setminus Q^\nu_{(1-\delta)
S}\right\}\,,
\end{align}
where 
\begin{align*}
u_\nu(x) = \begin{cases} 1&\text{if } \langle x,\nu \rangle\geq 0\,,\\
0&\text{otherwise.}
\end{cases}
\end{align*}
Here, \BBB $Q^\nu_S$ is a suitable rotation of the coordinate cube with side-length $S$  such that two faces are parallel to $\{\nu=0\}$\EEE.  In the case $\mathcal{L}=\mathbb{Z}^2$, $c_{i,j}=1$ if $|i-j|=1$ and $c_{i,j}=0$ otherwise, we have that $\varphi(\nu) =2\| \nu\|_1$, see Figure~\ref{Fig.L1}.
\begin{figure}[htb]
\begin{tikzpicture}[scale=.5]
\draw(3,0) node[anchor=north]{$\nu_1$};
\draw(0,3) node[anchor=east]{$\nu_2$};
\draw[->](0,-3)--++(0,6);
\draw[->](-3,0)--++(6,0);
\draw(2,0)--(0,2)--(-2,0)--(0,-2)--(2,0);
\draw(2.75,1.5) node{$\{\|\nu\|_1\leq 1\}$};
\end{tikzpicture}
\caption{The energy density in the case of nearest neighbor interactions on $\mathbb{Z}^2$}
\label{Fig.L1}
\end{figure}
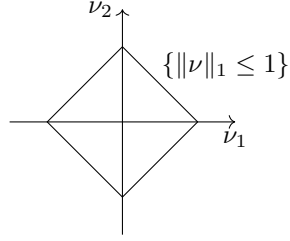

The goal of this article is to investigate the energy density $\varphi$. In particular we show, that for finite interaction range $c_{i,j}$, that is there exists $R>0$ such that $c_{i,j} = 0$ if $|i-j|>R$, then $\varphi$ is \BBB \emph{crystalline}, see \cite[Definition 3.2]{Fonseca}. \EEE This means that the solution to 
 \begin{align*}
 \min\left\{\int_{\partial^*A} \varphi(\nu_A(x))\,\mathrm{d}\mathcal{H}^{d-1} : |A|=1\right\}
\end{align*}  
is a \BBB convex polytope\EEE.
 The finite range of interaction is crucial. Indeed, example \ref{example:without(H3)} shows that for infinite range interactions this is in general not true. In \cite{BraKre,BraKre2} it is shown that, as the periodicity $T$ of the interactions tends to $+\infty$, it is possible to approximate any norm as surface energy density satisfying suitable growth conditions. We refer to \cite{AliCicRuf} for a random setting where it is shown that an isotropic energy density (and thus non-crystalline) can be obtained in the limit.

 The proof of the crystallinity in the case of finite range interactions relies on the following alternative representation \BBB formula of the density\EEE, proven in Proposition \ref{prop:main}. Namely, we prove that
 \begin{align}\label{formula alt:varphi}
\varphi(\nu) = \frac{1}{T^d}\inf\left\{E(u,Q_T)\colon u\colon \mathcal{L} \to \mathbb{R},  u(\cdot)-\langle\nu,\cdot\rangle \text{ is } T\text{-periodic}\right\}\,.
\end{align} 
This representation formula is reminiscent of the representation formula of the energy density of integral functionals obtained via homogenization of $T$-periodic integral functionals in $W^{1,p}$ \cite{ADF}. To motivate this, consider the positively $1$-homogeneous extensions of $E_\varepsilon$ defined by
\begin{align*}
F_\varepsilon(u) = \sum_{i,j \in \varepsilon\mathcal{L}} \varepsilon^{d-1} c_{i,j}^\varepsilon(u(i)-u(j))^+\,,
\end{align*}
for $u : \varepsilon\mathcal{L} \to \mathbb{R}$. The $\Gamma$-limit $F_0$ of the above sequence is clearly positively $1$-homogeneous and convex as the sequence of functionals satisfies these properties. Thus, $F_0$ admits an integral representation of the form
\begin{align*}
F_0(u) = \int f_0(\nabla u) \,\mathrm{d}x + \int f_0\left(\frac{dD_su}{d|D_su|}\right)\,\mathrm{d}|D_su|\,,
\end{align*}
where $f_0 \colon \mathbb{R}^d\to \mathbb{R}^d$ is convex and positively $1$-homogeneous, see \cite{BraidesChiado}. (We like to stress however, that this integral representation for the spin energies considered above is not proven in the literature.) Here, the important point is that the density of the singular part and the density of absolutely continuous parts agree. In the continuous setting, in \cite{ChaGiaLus,ChaTho} it has been shown that for continuous and convex densities, that satisfy a coarea formula, the $\Gamma$-convergence of sets of finite perimeter or in the space of $BV$-functions is equivalent. Thus also in their setting, the densities agree.  The density of the absolutely continuous part can be calculated via \eqref{formula alt:varphi}.  This property eventually allows us to express $\varphi$ via \eqref{formula alt:varphi} since the density of the absolutely continuous part can be calculated via \eqref{formula alt:varphi} and the density of the
singular part agrees with the energy density in \eqref{intro:def varphi}, see Proposition \ref{prop:main}. Using convex duality (see \cite{Rockafellar}) and using \eqref{formula alt:varphi} we show in Theorem \ref{theorem:main} that $\varphi$ is crystalline, and estimate an upper bound on the number of
extreme points of the corresponding Wulff shape. We would like to stress that \eqref{formula alt:varphi} is not only a useful tool in our proof but it can be used also for computational purposes as it is a finite and not an asymptotic cell formula.

The paper is organized as follows. In Section \ref{sec:Setting} we describe the mathematical setting and state the main theorems of our paper. In Section \ref{sec:proofofprop} we prove Proposition \ref{prop:main}, the alternate representation formula for $\varphi$. In Section \ref{sec:Crystallinity} we show that, in the case of finite range interactions, the density $\varphi$ is always crystalline. \BBB In Section \ref{sec:dif} we discuss some differentiability properties of $\varphi$. \EEE  We present some numerical simulations of our findings in the last chapter.

\section{Setting of the problem and statement of the main result}\label{sec:Setting}

\subsection{Notation}\label{subsec:notation}
We denote by $\mathcal{B}(\mathbb{R}^d)$ the collection of all Borel-Sets in $\mathbb{R}^d$.
For every $A \subset \mathbb{R}^d$ we denote by $|A|$ its $d$-dimensional Lebesgue measure. Given $r>0$, we denote by $(A)_r := \{x\in \mathbb{R}^d : \mathrm{dist}(x,A) <r\}$ the $r$-neighbourhood of $A$. Given $\tau \in \mathbb{R}^d$, we set $A+\tau := \{x+\tau : x\in A\}$. The set $\mathbb{S}^{d-1}:=\{\nu \in \mathbb{R}^d: |\nu|=1\}$ is the set of all $d$-dimensional unit vectors. For $v,w \in \mathbb{R}^d$ we denote by $\langle v,w\rangle$ their scalar product. We denote by $\{e_1,\ldots,e_d\}\subset \mathbb{R}^d$ the standard orthonormal basis of $\mathbb{R}^d$. Given $C \subset \mathbb{R}^d$ convex, we denote by $\mathrm{extreme}(C)$ its extreme points. Given $\rho>0$, we denote by $Q_\rho :=[-\rho/2,\rho/2)^d$ the half open cube centred in $0$ with side-length $\rho$. For $\nu \in \mathbb{S}^{d-1}$, we set $Q^\nu_\rho := R^\nu Q_\rho$, where $R^\nu$ is a rotation such that $R^\nu e_d =\nu$. Furthermore, given $x \in \mathbb{R}^d$ we set $Q^\nu_\rho(x) := x+ Q^\nu_\rho$ (resp.~$Q_\rho(x) = x+ Q_\rho$). Given $x\in \mathbb{R}^d$ and $r>0$ we denote by $B_r(x)$ the open ball with radius $r>0$  and center $x$. For $A \subset \mathbb{R}^d$ we denote by $\chi \colon \mathbb{R}^d\to \{0,1\}$ the characteristic function of the set $A$ given by
\begin{align}\label{def:characteristic}
\chi_A(x) := \begin{cases} 1 &\text{if } x\in A\,,\\
0&\text{otherwise.}
\end{cases}
\end{align}
 We denote by $\omega_d$ the volume of the unit ball in $\mathbb{R}^d$.
 Given $\nu \in \mathbb{S}^{d-1}$ we define
\begin{align}\label{def:unu}
u_\nu(x) := \begin{cases} 1 &\text{if } \langle\nu,x\rangle \geq 0\,,\\
0&\text{otherwise.}
\end{cases}
\end{align}
For $z\in \mathbb{R}$ we denote by $(z)^+:= \max\{z,0\}$ the positive part of $z$.

\subsection{Discrete energies and homogenized surface energy density}\label{subsec:discrete energies} In this paragraph we define the discrete energies we want to consider and the homogenized surface energy density.

Let $\mathcal{L} \subset \mathbb{R}^d$ satisfy the following two conditions:

\BBB {\rm (L1)} (Discreteness) \EEE There exists  $c>0$ such that for all $x\in \mathcal{L}$ there holds
\begin{align*}
\mathrm{dist}(x,\mathcal{L}\setminus\{x\}) \geq c\,.
\end{align*}

\smallskip

\BBB {\rm (L2)} (Periodicity) \EEE There exists $T \in \mathbb{N}$ such that for all $z\in \mathbb{Z}^d$ there holds
\begin{align*}
\mathcal{L} +Tz = \mathcal{L}\,.
\end{align*}

Note that the two assumptions \BBB {\rm (L1)} \EEE and \BBB {\rm (L2)} \EEE include multi-lattices, such as the hexagonal closed packing lattice in three dimensions, and \BBB Bravais \EEE lattices, such as $\mathbb{Z}^d$, or the \BBB face-centered \EEE cubic lattice in three dimensions.

We consider interaction coefficients $c_{i,j}: \mathcal{L}\times\mathcal{L} \to [0,+\infty)$ and the corresponding (localized) ferromagnetic spin energies of the form
\begin{align}\label{def:E}
E(u,A):= \sum_{i \in \mathcal{L}\cap A} \sum_{j \in \mathcal{L}} c_{i,j} (u(i)-u(j))^+\,,
\end{align}
where $u \colon \mathcal{L}\to \mathbb{R}$ and $A \in \mathcal{B}(\mathbb{R}^d)$. If $A=\mathbb{R}^d$ we omit the dependence on the set and write $E(u):= E(u,\mathbb{R}^d)$. We want to remark that we are considering interactions on the directed graph instead of the undirected graph. 

We introduce the following three \BBB hypotheses \EEE on the interaction coefficients $c_{i,j} \colon \mathcal{L}\times \mathcal{L} \to [0,+\infty)$:

\smallskip

{\rm (H1)} (Periodicity) There holds
\begin{align*}
c_{i+Tz,j+Tz} = c_{i,j} 
\end{align*}
for all $i,j \in \mathcal{L}$, $z \in \mathbb{Z}^d$.

\smallskip 

{\rm (H2)} (Decay of Interactions) For all $i \in \mathcal{L}$ there holds
\begin{align*}
\sum_{j \in \mathcal{L}} c_{i,j}|i-j| < +\infty\,.
\end{align*}

\smallskip

{\rm (H3)} (Finite Range Interactions) There exists $R >1$ such that 
\begin{align*}
c_{i,j} = 0
\end{align*}
for all $i,j\in \mathcal{L}$ such that $|i-j| \geq R$.

It is obvious, that hypothesis {\rm (H3)} implies hypothesis {\rm (H2)}. Note that, if {\rm (H1)} and {\rm (H2)} are satisfied then
\begin{align*}
\max_{i \in \mathcal{L}} \sum_{j \in \mathcal{L}} c_{i,j}|i-j| = \max_{i \in \mathcal{L}\cap Q_T} \sum_{j \in \mathcal{L}} c_{i,j}|i-j| <+\infty
\end{align*} 
and for all $R>0$, there exists $C_R>0$ such that $C_R\to 0$ as $R\to +\infty$ and
\begin{align*}
\max_{i \in \mathcal{L}} \underset{|i-j|\geq R}{\sum_{j \in \mathcal{L}}} c_{i,j}|i-j| \leq C_R\,.
\end{align*}

\begin{definition}\label{def:homogenized energy density} Let $c_{i,j}$ satisfy {\rm (H1)} and {\rm (H2)}. We then define the \emph{homogenized surface energy density} $\varphi : \mathbb{R}^d \to [0,+\infty)$ as the convex positively homogeneous function of degree one such that for all $\nu \in \mathbb{S}^{d-1}$ we have
\begin{align}\label{def:varphi}
\varphi(\nu) := \lim_{\delta \to 0}\lim_{S\to +\infty} \frac{1}{S^{d-1}} \inf\left\{ E(u,Q^\nu_S) \colon u \colon \mathcal{L}\to\{0,1\}, u(i) = u_\nu(i) \text{ on } \mathcal{L} \setminus \BBB Q^\nu_{(1-\delta)
S}\right\}\,,\EEE
\end{align}
with $u_\nu$ defined in \eqref{def:unu}.
\end{definition}

\begin{remark}\label{rem:discretetocontinuum} The definition above can be interpreted as a passage from discrete to continuum description as follows. Given $\varepsilon>0$, we consider the scaled energies
\begin{align*}
E_\varepsilon(u) := \sum_{i \in \mathcal{L}} \sum_{j \in \mathcal{L}} \varepsilon^{d-1} c_{i,j} (u(\varepsilon i)-u(\varepsilon j))^+\,,
\end{align*}
where $u : \varepsilon\mathcal{L}\to \{0,1\}$. Upon identifying $u$ with its piecewise-constant interpolation, we can regard these energies to be defined on $L^1_{\rm loc}(\mathbb{R}^d)$. We know that their $\Gamma$-limit is infinite outside the space \BBB $BV_{\rm loc}(\mathbb{R}^d;\{0,1\})$\EEE, where it has the form
\begin{align*}
E_0(u) := \int_{\partial^* \{u=1\}} \varphi(\nu)\,\mathrm{d}\mathcal{H}^{d-1}
\end{align*}
with $\varphi$ given by \eqref{def:varphi}, see for example \cite{AliCicRuf}.\footnote{Actually, the integral representation for the $\Gamma$-limit has only been shown for undirected graphs. However, a slight modification of the proof shows that it is still true for directed graphs.} Here, $\partial^*\{u=1\}$ denotes the reduced boundary of the set $\{u=1\}$  and $\mathcal{H}^{d-1}$ denotes the $(d-1)$-dimensional Hausdorff measure in $\mathbb{R}^d$ (cf.~\cite{AFP}, Chapters 2.8 and 3.5).
\end{remark}

\begin{remark}\label{rem:continuity} Testing with $u_\nu$ in \eqref{def:varphi}, using \BBB {\rm(L1)} \EEE and {\rm (H2)}, it is easy to see that $\varphi(\nu) \leq C$ for all $\nu \in \mathbb{S}^{d-1}$. Therefore, due to the convexity and the fact that it is a positively one homogeneous function of degree one, $\varphi$ is Lipschitz continuous. 
\end{remark}

\subsection{Statement of the main result}\label{subsec:main result} In this section we state the main result.

\begin{definition} \label{def:Wulff} Given $\varphi \colon \mathbb{R}^d \to [0,+\infty)$ convex, positively homogeneous of degree one, we define the Wulff set of $\varphi$ by
\begin{align*}
W_\varphi := \{ \zeta \in \mathbb{R}^d \colon \langle \zeta,\nu \rangle \leq \varphi(\nu) \text{ for all } \nu \in \mathbb{S}^{d-1}\}\,.
\end{align*}
We say that $\varphi$ is crystalline, if $W_\varphi$ is a polytope.
\end{definition}
\begin{remark}\label{rem:duality} From the definition of the Wulff set, it is clear that
\begin{align*}
\varphi(\nu) = \sup_{\zeta \in W_\varphi} \langle\nu,\zeta\rangle\,.
\end{align*}
Furthermore, one can check, that if $\varphi$ is crystalline, then the set $\{\varphi \leq 1\}$ is a polytope.
\end{remark}

The next proposition shows that, we obtain a finite cell formula in order to calculate $\varphi$ instead of the asymptotic one, given in \eqref{def:varphi}. We think that this result in itself is interesting, since it allows for calculations on finite size systems in order to compute $\varphi$ for general Ising systems. This result is in spirit very close to \cite{BouDal,BraidesChiado}, where convex and positively $1$-homogenous continuum energies are considered. In this case, the surface energy density and the energy with respect to the absolutely continuous part coincide.
\BBB
For $k \in \mathbb{N}$ let $\Lambda:= Tk$ and denote by
\begin{align}\label{def:T-periodic functions}
\mathcal{A}_\mathrm{per}(Q_\Lambda;\mathbb{R}):=\{u\colon \mathcal{L}\to \mathbb{R} \colon u(x+\Lambda z)=u(x) \text{ for all } z \in \mathbb{Z}^d\}
\end{align}
be the space of $\Lambda$-periodic functions.
\EEE
\begin{proposition}\label{prop:main} Let $c_{i,j} \colon \mathcal{L} \times \mathcal{L} \to [0,+\infty)$ be interaction coefficients such that {\rm (H1)} and {\rm (H2)} hold true. Then
\begin{align}\label{eq:propvarphi}
\varphi(\nu) = \frac{1}{T^d}\inf\left\{E(u,Q_T)\colon u\colon \mathcal{L} \to \mathbb{R},  u(\cdot)-\langle\nu,\cdot\rangle \BBB \in \mathcal{A}_\mathrm{per}(Q_T;\mathbb{R})\EEE\right\}\,.
\end{align} 
\end{proposition}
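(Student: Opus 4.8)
The plan is to establish the identity in \eqref{eq:propvarphi} by proving two inequalities, passing through an auxiliary quantity. Write $\psi(\nu)$ for the right-hand side of \eqref{eq:propvarphi}, i.e.\ the infimum of $\frac{1}{T^d}E(u,Q_T)$ over all $u\colon\mathcal L\to\mathbb R$ with $u(\cdot)-\langle\nu,\cdot\rangle$ being $T$-periodic. The first observation is that, since the energy $E$ is convex in $u$ and the admissible class is a convex set (an affine subspace), $\psi$ is the value of a convex minimization problem; moreover by periodicity one checks that $\psi$ is positively $1$-homogeneous in $\nu$ and finite (testing with the affine function $u(x)=\langle\nu,x\rangle$ and using (H1)--(H2) gives $\psi(\nu)\le C$). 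By scaling and periodicity one also sees that replacing $Q_T$ by $Q_\Lambda$ with $\Lambda=Tk$ and enlarging the admissible class to $\Lambda$-periodic perturbations gives the same value, i.e.\ $\psi(\nu)=\frac{1}{\Lambda^d}\inf\{E(u,Q_\Lambda): u(\cdot)-\langle\nu,\cdot\rangle\in\mathcal A_\mathrm{per}(Q_\Lambda;\mathbb R)\}$; this is the standard subadditivity/superadditivity argument for periodic cell problems and will be used to pass to large cells freely.

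For the inequality $\varphi(\nu)\le\psi(\nu)$, I would take a near-optimal $\Lambda$-periodic competitor $u$ for $\psi(\nu)$ (with $\Lambda=Tk$), so that $v(x):=u(x)-\langle\nu,x\rangle$ is $\Lambda$-periodic, and use it to build a test function for the asymptotic cell formula \eqref{def:varphi} in a large cube $Q^\nu_S$. The natural idea is a ``layer-cake'' / coarea construction: for $t\in\mathbb R$ set $u^t(i):=\chi_{\{u(i)\ge t\}}(i)$, which is a $\{0,1\}$-valued function; since $E(\cdot,A)$ decomposes along super-level sets one has $\int_{\mathbb R}E(u^t,A)\,\mathrm dt=E(u,A)$ for the relevant affine-type functions (this is precisely the coarea structure of $(u(i)-u(j))^+=\int_{\mathbb R}(\chi_{\{u(i)\ge t\}}-\chi_{\{u(j)\ge t\}})^+\,\mathrm dt$). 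On the cube $Q^\nu_S$ the level set $\{u\ge t\}=\{v\ge t-\langle\nu,\cdot\rangle\}$ is, up to a bounded boundary layer of width $O(1)$ (controlled by $\|v\|_\infty$, which is finite by periodicity), a half-space-type set with normal $\nu$; choosing $t$ in a bounded range and averaging, one finds some $u^t$ which agrees with $u_\nu$ outside $Q^\nu_{(1-\delta)S}$ after a harmless modification near the boundary, and whose energy on $Q^\nu_S$ is at most $S^{d-1}\psi(\nu)+o(S^{d-1})$ by periodicity (the cube contains $\approx (S/\Lambda)^d$ copies of the period cell, but only $\approx(S/\Lambda)^{d-1}$ of them meet the ``interface'' and contribute, the rest contributing zero since $u^t$ is eventually constant there). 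Dividing by $S^{d-1}$, sending $S\to\infty$ and then $\delta\to0$ yields $\varphi(\nu)\le\psi(\nu)$.

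For the reverse inequality $\psi(\nu)\le\varphi(\nu)$, I would start from a near-optimal competitor $u$ for the asymptotic formula \eqref{def:varphi} on a large cube $Q^\nu_S$, which is $\{0,1\}$-valued and equals $u_\nu$ near $\partial Q^\nu_S$. The goal is to extract from it, by an averaging-of-translates argument, a $\Lambda$-periodic-modulo-affine competitor for $\psi$. Concretely, one chooses $\Lambda=Tk$ with $k$ large, partitions $Q^\nu_S$ into roughly $(S/\Lambda)^{d-1}$ slabs parallel to the interface (each slab being a ``column'' of period cells crossing $\{\langle\nu,\cdot\rangle=0\}$), and observes by a pigeonhole/averaging argument that at least one such column carries energy at most $\frac{1}{(S/\Lambda)^{d-1}}E(u,Q^\nu_S)\approx\Lambda^{d-1}\varphi(\nu)$. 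Restricting $u$ to that column and extending it $\Lambda$-periodically in the $d-1$ directions tangent to $\nu$ — while on that single period in the $\nu$-direction the function already looks like $u_\nu$ at top and bottom, so that $u(\cdot)-\langle\nu',\cdot\rangle$ can be arranged to be periodic for a lattice direction $\nu'$ close to $\nu$ — produces an admissible competitor; the subtlety is that periodic extension must be compatible with the lattice $\mathcal L$ (use that $\Lambda$ is a multiple of $T$, so $\mathcal L$ is $\Lambda$-periodic) and with the affine ansatz, which forces one to work with rational normals or with an enlarged cell and then pass to the limit using continuity of $\varphi$ (Remark~\ref{rem:continuity}) and of $\psi$. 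After renormalizing, $\psi(\nu)\le\varphi(\nu)+o(1)\to\varphi(\nu)$.

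The main obstacle is the reverse inequality, and specifically the bookkeeping in the averaging argument: one must produce a competitor that is genuinely $\Lambda$-periodic-up-to-an-affine-function on the period cell $Q_\Lambda$ of the lattice, starting from a competitor on a tilted cube $Q^\nu_S$ whose period structure and the direction $\nu$ are generically incommensurate. I expect this to be handled by first proving the formula for rational directions $\nu$ (where $Q^\nu_S$ can be chosen to contain an exact number of lattice period cells and the periodic extension is clean), getting equality there, and then using the Lipschitz continuity of both $\varphi$ and $\psi$ (the latter again from testing with affine functions and convexity) to extend the identity to all $\nu\in\mathbb S^{d-1}$ by density. A secondary technical point is controlling the boundary layer in the first inequality — ensuring the modified super-level set exactly matches $u_\nu$ outside $Q^\nu_{(1-\delta)S}$ costs only $o(S^{d-1})$ in energy because the layer has thickness $O(1)$ and $(d-1)$-dimensional area $O(S^{d-2})\cdot O(1)$, which is negligible — but this is routine.
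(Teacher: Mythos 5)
Your overall route --- coarea for the inequality $\varphi\le\text{RHS}$, a stacking construction in the opposite direction, and rational directions plus continuity to treat general $\nu$ --- is the same as the paper's, which proceeds through the chain $\varphi = \psi = \phi = \phi_{\mathrm{per}}$ with $\phi_{\mathrm{per}}$ being the cell quantity on the right-hand side (Lemmas~\ref{lemma:Treduction}, \ref{lemma: per=aff}, \ref{lemma:QnueqQ}, \ref{lemma:sureqaff}). However, your construction for the harder inequality contains a genuine gap.

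For the direction $\text{RHS}\le\varphi$, you propose to pick a low-energy column from a near-optimal $\{0,1\}$-valued competitor $u$ on $Q^\nu_S$, extend it tangentially by periodicity, and claim that then "$u(\cdot)-\langle\nu',\cdot\rangle$ can be arranged to be periodic." This cannot work as stated: $u$ is bounded (values in $\{0,1\}$) while $\langle\nu',\cdot\rangle$ is unbounded, so $u-\langle\nu',\cdot\rangle$ is never periodic, no matter how the column is chosen. The missing ingredient is a \emph{rescaling} of the $\{0,1\}$-valued competitor by the period size. The paper (proof of Lemma~\ref{lemma:QnueqQ}, Step~1) builds
\[
u_2(i) = S_1\Bigl(u_1(i-z)-\tfrac{1}{2}\Bigr) + \langle\nu,z\rangle,\qquad i\in Q^\nu_{S_1}(z),
\]
where the factor $S_1$ blows up the $O(1)$ jump of the interface competitor $u_1$ into an $O(S_1)$ jump; together with the offsets $\langle\nu,z\rangle$ the stacked pieces form a staircase that is affine at large scales, which is the required competitor. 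Without this scaling your extracted column is simply not admissible. A secondary issue with column extraction is that periodizing tangentially creates new cross-boundary interactions you do not control, whereas the paper's construction starts from a small-cube interface competitor that already coincides with the half-space configuration near its boundary, so adjacent tiles match automatically. Finally, the bridges you call "routine" --- rotated versus coordinate cube (Lemma~\ref{lemma:sureqaff}) and periodic cell versus asymptotic affine-boundary formula (Lemma~\ref{lemma: per=aff}) --- do require the cutoff arguments you anticipate, but controlling long-range tails under (H2) alone (no finite range) in those arguments is what makes the $C_S^\delta\to 0$ bookkeeping nontrivial, so they should not be dismissed.
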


\begin{theorem}\label{theorem:main} Let $c_{i,j} \colon \mathcal{L}\times \mathcal{L} \to [0,+\infty)$ be interaction coefficients such that {\rm (H1)} and {\rm (H3)} hold true. Then, the homogenized surface energy density $\varphi$ is crystalline. Denote by
\begin{align*}
N:=\#\{(i,j) \in \mathcal{L} \cap Q_T \times \mathcal{L}  \colon c_{i,j} \neq 0 \}\,.
\end{align*}
 Then,
\begin{align*}
\#\mathrm{extreme}(W_{\varphi}) \leq 3^{N}\,.
\end{align*}
\end{theorem}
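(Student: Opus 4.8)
The plan is to use the finite cell formula \eqref{eq:propvarphi} from Proposition \ref{prop:main} together with convex duality. By Remark \ref{rem:duality}, it suffices to show that the Wulff set $W_\varphi$ is a polytope, and since $W_\varphi$ is the subdifferential-type set $\{\zeta : \langle\zeta,\nu\rangle\le\varphi(\nu)\ \forall\nu\}$, one may equivalently analyze the epigraph or the level set $\{\varphi\le 1\}$. The key observation is that \eqref{eq:propvarphi} expresses $\varphi(\nu)$ as the infimal value of a \emph{linear program} (in fact a min of a polyhedral convex function) whose data depends on $\nu$ only \emph{affinely}. Indeed, fix $k=1$ so $\Lambda=T$; a competitor $u$ with $u(\cdot)-\langle\nu,\cdot\rangle$ $T$-periodic is parametrized by the finitely many real values $w(i):=u(i)-\langle\nu,i\rangle$ for $i\in\mathcal{L}\cap Q_T$ (extended periodically), and then
\[
E(u,Q_T)=\sum_{i\in\mathcal{L}\cap Q_T}\sum_{j\in\mathcal{L}}c_{i,j}\bigl(w(i)-w(j)+\langle\nu,i-j\rangle\bigr)^+ .
\]
By (H3) only finitely many pairs $(i,j)$ contribute (those with $c_{i,j}\ne 0$, of which there are $N$ counting $i$ modulo $T$). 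So $\varphi(\nu)=T^{-d}\inf_{w}\sum_{(i,j)}c_{i,j}(w(i)-w(j)+\langle\nu,i-j\rangle)^+$, a parametric LP.

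Next I would invoke convex-duality / LP-duality: write each term $(t)^+=\max_{s\in[0,1]}st$, so that
\[
\varphi(\nu)=\frac{1}{T^d}\inf_{w}\ \max_{s:(i,j)\mapsto s_{i,j}\in[0,1]}\ \sum_{(i,j)}c_{i,j}s_{i,j}\bigl(w(i)-w(j)+\langle\nu,i-j\rangle\bigr).
\]
This is a finite minimax with the inner objective bilinear in $(w,s)$ and the $s$-domain a compact polytope (a cube $[0,1]^N$); by the minimax theorem we may swap $\inf$ and $\max$. The inner $\inf_w$ of a linear function of $w$ is $-\infty$ unless the coefficient of each $w(i)$ vanishes — this is the "divergence-free" constraint $\sum_j c_{i,j}s_{i,j}-\sum_j c_{j,i}s_{j,i}=0$ for every $i$ (a flow-conservation condition). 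On the feasible set $\mathcal{F}\subset[0,1]^N$ cut out by these linear equalities, the objective reduces to the linear functional $s\mapsto T^{-d}\sum_{(i,j)}c_{i,j}s_{i,j}\langle\nu,i-j\rangle=\langle\nu,\zeta(s)\rangle$ where $\zeta(s):=T^{-d}\sum_{(i,j)}c_{i,j}s_{i,j}(i-j)\in\mathbb{R}^d$. Hence $\varphi(\nu)=\max_{s\in\mathcal{F}}\langle\nu,\zeta(s)\rangle=\max_{\zeta\in Z}\langle\nu,\zeta\rangle$ with $Z:=\zeta(\mathcal{F})$, a linear image of the polytope $\mathcal{F}$, therefore itself a polytope. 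Comparing with Remark \ref{rem:duality} gives $W_\varphi=Z$ (after checking both describe the same support function, which follows since the support function determines a compact convex set), so $W_\varphi$ is a polytope and $\varphi$ is crystalline.

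For the vertex count: $\mathcal{F}$ is a face-subset of the cube $[0,1]^N$ intersected with a linear subspace; its vertices are among points determined by choosing, for each of the $N$ coordinates, one of the possibilities "$=0$", "$=1$", or "free/determined by the equalities" — crudely at most $3^N$ configurations. Each vertex of $Z=\zeta(\mathcal{F})$ is the image of at least one vertex of $\mathcal{F}$, so $\#\mathrm{extreme}(W_\varphi)=\#\mathrm{extreme}(Z)\le\#\mathrm{extreme}(\mathcal{F})\le 3^N$. (One should be slightly careful that $\mathcal{F}$ could be unbounded a priori, but $\mathcal{F}\subseteq[0,1]^N$ is bounded, so it is a genuine polytope and the Minkowski–Weyl argument applies.)

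The main obstacle I anticipate is making the minimax exchange fully rigorous and identifying the dual polytope correctly: one must confirm that $\inf_w$ is finite (equivalently, that the infimum defining $\varphi(\nu)$ in \eqref{eq:propvarphi} is attained and finite — which it is, since $\varphi$ is finite by Remark \ref{rem:continuity} and the objective is coercive modulo the "constant shift" $w\mapsto w+c$ invariance), and that after eliminating $w$ the residual flow-feasibility set $\mathcal{F}$ is exactly the right polytope whose support function in the direction $\nu$ is $\varphi(\nu)$. A secondary technical point is bookkeeping with the periodicity: the index $i$ runs over $\mathcal{L}\cap Q_T$ but $j$ over all of $\mathcal{L}$, and the flow-conservation at $i$ must account for incoming edges from periodic translates; using (H1) this reduces cleanly to a finite graph on the $T$-torus, but the reindexing needs to be written carefully so that the count $N$ in the statement matches the number of edge-variables $s_{i,j}$.
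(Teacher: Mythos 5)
Your proposal is correct and follows essentially the same strategy as the paper's proof: invoke the finite cell formula from Proposition~\ref{prop:main}, introduce dual variables for the $(\cdot)^+$ terms, exchange $\inf$ and $\sup$, obtain the flow--conservation (divergence-free) constraint by eliminating $u$, identify $W_\varphi$ as the linear image of a box-intersected-with-a-subspace, and count extreme points by choosing a $\{0,1,\text{free}\}$ labeling for each of the $N$ edge coordinates. Your rescaled variables $s_{i,j}\in[0,1]$ are equivalent to the paper's $\alpha_{i,j}\in[0,c_{i,j}]$. The one substantive difference is how the minimax exchange is justified: you apply a finite-dimensional minimax theorem directly (entirely legitimate since {\rm (H3)} already makes the saddle problem finite-dimensional and the $s$-domain compact), while the paper derives the easy inequality ``$\geq$'' for general {\rm (H2)}, then truncates to finite range to apply Rockafellar's Corollary~31.2.1, and recovers the original $\varphi$ by monotone $\Gamma$-convergence in $R$ --- a detour that is superfluous once {\rm (H3)} is assumed. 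Your vertex-count sketch is informal but identical in spirit to the paper's (the paper sharpens it to $\binom{N}{k}2^{N-k}$, with $k$ the codimension of the divergence-free subspace, before bounding by $3^N$); make sure to note, as the paper does, that the map from a vertex to a chosen spanning labeling is injective because the labeling determines the vertex, and that the periodicity bookkeeping indeed reduces the conservation law at each $i\in\mathcal{L}\cap Q_T$ to a finite sum over translates of $Q_T$.
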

The next example shows that without assumption \BBB{\rm (H3)} \EEE Theorem \ref{theorem:main} fails to hold true.
\begin{example}\label{example:without(H3)}  To construct the example we first observe that if $f \colon\mathbb{R}^d \to [0,+\infty)$ is crystalline, then $D^2f$ is a Radon-measure with support contained in finitely many hyper-planes. To see this, note that if $f \colon\mathbb{R}^d \to [0,+\infty)$ is crystalline, then  there exist $ \{\xi_k\}_{k=1}^N \subset \mathbb{R}^d$ such that
\begin{align*}
f(\nu) = \max_{1\leq k \leq N} \langle\xi_k,\nu\rangle\,.
\end{align*}
Here, we assume that $ \{\xi_k\}_{k=1}^N$ is chosen minimal, \BBB i.e.~if we set $V_k =\left\{ \nu \in \mathbb{R}^d \colon f(\nu) = \langle\xi_k,\nu\rangle\right\}$, then $ |V_k| >0$ for all $k=1,\ldots,N$. \EEE This assumption ensures that all the vectors $\xi_k$ play an active role in the definition of $f$. Now, $Df \in BV_{\mathrm{loc}}(\mathbb{R}^d;\mathbb{R}^d)$ is given by
\begin{align*}
Df(\nu) =\sum_{k=1}^N \chi_{V_k}(\nu)\xi_k\,,
\end{align*}
\BBB with $V_k$ defined above. \EEE Then
\begin{align*}
D^2f(\nu) = \sum_{1\leq k < j \leq N} (\xi_k-\xi_j)\otimes \nu_{kj} \mathcal{H}^{d-1}\lfloor_{\partial V_k \cap \partial V_j}\,,
\end{align*}
where 
$ \partial V_k \cap \partial V_j = \left\{ \nu \in \mathbb{R}^d \colon f(\nu) = \langle\xi_k,\nu\rangle=\langle\xi_j,\nu\rangle\right\}$ and $\nu_{kj} \in \mathbb{S}^{d-1}$ denotes the normal pointing towards the set $V_k$.

Let now $\mathcal{L} = \mathbb{Z}^d$ and $c_{i,j}=c_{j-i}=c_{i-j}$ (in the following denoted by $\{c_\xi\}_{\xi \in \mathbb{Z}^d}$) be such that $c_\xi >0$ for all $\xi \in \mathbb{Z}^d$ and 
\begin{align*}
\sum_{\xi \in \mathbb{Z}^d} c_\xi |\xi| <+\infty\,.
\end{align*}
It is then obvious that $c_{i,j}$ is $1$-periodic, (H1) and (H2) hold true, but (H3) is violated. Therefore, due to Proposition \ref{prop:main}, we have
\begin{align*}
\varphi(\nu) = \sum_{\xi \in \mathbb{Z}^d} c_\xi |\langle\xi,\nu\rangle|\,.
\end{align*}
This is true, since the only admissible functions in the minimum problem given by Proposition \ref{prop:main}  are $u_\nu(i) = \langle \nu,i\rangle+c$ for some $c \in \mathbb{R}$. We claim that
\begin{align*}
D\varphi(\nu) = \sum_{\xi \in \mathbb{Z}^d} \mathrm{sign}(\langle\xi, \nu\rangle) \, c_\xi \,\xi\,, 
\end{align*}
where $\mathrm{sign} \colon \mathbb{R} \to \mathbb{R}$ is defined by 
\begin{align*}
\mathrm{sign}(t) = \begin{cases} 1 & t \geq 0;\\
-1&t<0.
\end{cases}
\end{align*}
 Therefore 
\begin{align}\label{eq:D2varphi}
D^2\varphi = 2\sum_{\xi \in \mathbb{Z}^d} c_\xi \xi \otimes \frac{\xi}{|\xi|} \mathcal{H}^{d-1}\lfloor_{\{\nu\colon\langle\xi,\nu\rangle=0\}}\,.
\end{align}
This can be seen by approximation. Consider $\varphi_R :\mathbb{R}^d\to \mathbb{R}$ defined by
\begin{align*}
\varphi_R(\nu) = \underset{|\xi|\leq R}{\sum_{\xi \in \mathbb{Z}^d}} c_\xi |\langle\xi,\nu\rangle|\,,\quad D\varphi_R(\nu) = \underset{|\xi|\leq R}{\sum_{\xi \in \mathbb{Z}^d}} \mathrm{sign}(\langle\xi, \nu\rangle) \, c_\xi \,\xi\,, 
\end{align*}
Then
\begin{align*}
D^2\varphi_R =2\underset{|\xi| \leq R}{\sum_{\xi \in \mathbb{Z}^d}} c_\xi \xi \otimes \frac{\xi}{|\xi|} \mathcal{H}^{d-1}\lfloor_{\{\nu\colon \langle\xi,\nu\rangle=0\}}\,.
\end{align*}
Now
\begin{align*}
|D^2\varphi_R|(B_r) \leq Cr^{d-1} \sum_{\xi \in \mathbb{Z}^d} c_\xi |\xi|\,,
\end{align*}
so the total variation of $D^2\varphi_R$ is (locally) uniformly bounded with limiting measure $D^2\varphi$ and $D\varphi_R \to D\varphi$ in $L^1_{\mathrm{loc}}(\mathbb{R}^d;\mathbb{R}^d)$, actually weakly in $BV$. Hence, \eqref{eq:D2varphi} is shown. Now, since $c_\xi >0$ for all $\xi \in \mathbb{Z}^d$ it is obvious that $D^2\varphi$ is not supported on finitely many hyper-planes. Thus $\varphi$ cannot be crystalline. 

\BBB Note that $\varphi$ is differentiable in totally irrational directions.\footnote{$p$ is totally irrational if there is no $q\in \mathbb{Z}^d\setminus \{0\}$ such that $\langle q,p\rangle=0$.}
A similar property is known to hold, in the continuous setting~\cite{AuBa06,CGN14}, for homgenized surface tensions.
We can state a result showing that this is still the case in the discrete
setting, under assumptions (H1) and (H2).
\end{example}
\BBB
\begin{proposition}\label{prop:dif}
  Under the assumptions of Proposition~\ref{prop:main}, $\varphi$ is differentiable in any
  totally irrational direction.
\end{proposition}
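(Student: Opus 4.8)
The plan is to use the finite cell formula \eqref{eq:propvarphi} from Proposition~\ref{prop:main} together with the fact that the infimum there is over an essentially finite-dimensional family of competitors, so that $\varphi$ is, locally near a totally irrational direction $\bar\nu$, the infimum of a family of smooth (in fact affine) functions of $\nu$, and then to show that at such a direction this infimum is actually attained in a way that forces differentiability. First I would write, for $u\colon\mathcal L\to\mathbb R$ with $u(\cdot)-\langle\nu,\cdot\rangle$ being $T$-periodic, $u(x)=\langle\nu,x\rangle+v(x)$ where $v$ is $T$-periodic; since $v$ is determined by its values on the finitely many points $\mathcal L\cap Q_T$, the energy $E(u,Q_T)$ becomes a function $G(\nu,v)$ that is convex and piecewise affine in $(\nu,v)\in\mathbb R^d\times\mathbb R^{\mathcal L\cap Q_T}$ — each term $c_{i,j}(u(i)-u(j))^+=c_{i,j}(\langle\nu,i-j\rangle+v(i)-v(j))^+$ is a max of two affine functions. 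Hence $\varphi(\nu)=\tfrac1{T^d}\inf_v G(\nu,v)$ is a partial infimum of a convex piecewise-affine function; it is convex, finite, and a finite infimum of affine pieces in each region, so it is itself piecewise affine on a polyhedral decomposition of $\mathbb R^d$.

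The key observation is the following: a convex positively $1$-homogeneous piecewise-affine function fails to be differentiable at $\bar\nu$ exactly when $\bar\nu$ lies on a facet of the polyhedral decomposition of dimension $\le d-2$ relative to the break locus, equivalently when the subdifferential $\partial\varphi(\bar\nu)$ — which is the face of the Wulff shape $W_\varphi$ exposed by $\bar\nu$ — has dimension $\ge 1$. So I would argue by contradiction: suppose $\varphi$ is not differentiable at the totally irrational $\bar\nu$; then there are two distinct $\zeta_1,\zeta_2\in W_\varphi$ with $\langle\zeta_1,\bar\nu\rangle=\langle\zeta_2,\bar\nu\rangle=\varphi(\bar\nu)$, hence $\langle\zeta_1-\zeta_2,\bar\nu\rangle=0$ with $\zeta_1-\zeta_2\ne 0$. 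The heart of the matter is then to show that the relevant $\zeta$'s — which arise as (sub)gradients of the piecewise-affine pieces of $\varphi$, i.e. as $\nabla_\nu G(\nu,v)=\tfrac1{T^d}\sum_{i,j}c_{i,j}\,\sigma_{i,j}(i-j)$ where $\sigma_{i,j}\in\{0,1\}$ records which terms are "active" — lie in the integer-span-type lattice generated by the finitely many vectors $\{i-j : c_{i,j}\ne0, \ i\in\mathcal L\cap Q_T\}\subset\tfrac1{?}\mathbb Z^d$; more precisely, since $\mathcal L+T\mathbb Z^d=\mathcal L$ the differences $i-j$ for interacting pairs take finitely many values, and any difference $\zeta_1-\zeta_2$ of exposed gradients is a finite integer combination (with coefficients bounded by the $c_{i,j}$'s, but the direction is rational) of these finitely many vectors. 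Hence $\zeta_1-\zeta_2$ lies in a fixed countable set of rational directions $q$, and $\langle q,\bar\nu\rangle=0$ contradicts total irrationality of $\bar\nu$.

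The main obstacle I anticipate is making the lattice/rationality claim precise: the differences $i-j$ need not themselves be integer vectors when $\mathcal L$ is a multi-lattice, and the "gradient of an active piece" of $\varphi$ is $\tfrac1{T^d}\sum c_{i,j}\sigma_{i,j}(i-j)$, whose $\zeta_1-\zeta_2$ is $\tfrac1{T^d}\sum c_{i,j}(\sigma^{(1)}_{i,j}-\sigma^{(2)}_{i,j})(i-j)$ — this is a $\mathbb Z$-linear combination of the vectors $\{i-j\}$ but with real coefficients $c_{i,j}$, so it is not automatically rational. I would resolve this by a more careful argument directly on the cell formula: at a point of non-differentiability there is a nonzero perturbation direction $\tau\in\mathbb R^d$ along which the right and left derivatives of $s\mapsto\varphi(\bar\nu+s\tau)$ differ; using that $\varphi(\nu)=\tfrac1{T^d}\inf_v G(\nu,v)$ and that the set of minimizing $v$'s is a polyhedron, one shows this can only happen if moving in direction $\tau$ changes the sign of some term $\langle\nu,i-j\rangle+v(i)-v(j)$ at $\nu=\bar\nu$, i.e. if $\langle\bar\nu,i-j\rangle+v(i)-v(j)=0$ for some interacting pair $(i,j)$ and some optimal $v$ — and then I would invoke the periodicity to upgrade this to a genuine rational linear relation on $\bar\nu$ (summing such relations around the $T$-periodic structure eliminates the $v$-contributions and leaves $\langle\bar\nu, q\rangle=0$ for some $q\in\mathbb Z^d\setminus\{0\}$ built from the interaction differences). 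Closing this last step cleanly — extracting an honest nonzero integer vector orthogonal to $\bar\nu$ from the degeneracy — is where the real work lies, and I would model it on the analogous continuous-setting arguments of \cite{AuBa06,CGN14}.
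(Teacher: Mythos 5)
Your proposal correctly identifies the right tools---the finite cell formula, the dual/subgradient characterization of differentiability, and the fact that total irrationality must enter through some rationality obstruction---but it has two genuine problems, and the second one is precisely where you yourself stop.

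First, a framing error: Proposition~\ref{prop:dif} is stated under the assumptions of Proposition~\ref{prop:main}, i.e.\ (H1) and (H2) only, \emph{without} the finite-range hypothesis (H3). Under (H2) the cell energy $G(\nu,v)$ is an infinite sum of hinge functions, $\varphi$ is \emph{not} piecewise affine, and its Wulff shape need not be a polytope---this is exactly the content of Example~\ref{example:without(H3)}. So the reduction to ``a finite infimum of affine pieces on a polyhedral decomposition'' is not available here, and any argument built on $\varphi$ being piecewise linear or on $W_\varphi$ having finitely many faces does not apply.

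Second and more importantly, the rationality mechanism you are reaching for is not the one that works, and you flag this honestly: the difference of two exposed subgradients is $\zeta_1-\zeta_2=\tfrac1{T^d}\sum c_{i,j}(\sigma^{(1)}_{i,j}-\sigma^{(2)}_{i,j})(i-j)$, whose coefficients involve the arbitrary reals $c_{i,j}$, so it is \emph{not} a rational vector, and the attempt to ``sum relations $\langle\bar\nu,i-j\rangle+v(i)-v(j)=0$ around the periodic cell to eliminate $v$'' is left as the ``real work.'' The paper closes this gap by a different route. Using the dual representation~\eqref{eq:dual representation}, one shows that for any two maximizing dual variables $\alpha,\alpha'$, complementary slackness forces $\alpha_{i,j}=\alpha'_{i,j}$ whenever $u(i)\neq u(j)$ for a primal minimizer $u$ (eq.~\eqref{eq:dualequal}); hence the subgradient difference $p-p'$ is supported on pairs $(i,j)$ lying on a common level set $\{u=s\}$. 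The decisive input of total irrationality is not a direct orthogonality relation $\langle q,\bar\nu\rangle=0$ but rather that every level set $\{u=s\}$ of the minimizer is \emph{finite} (since $u(i)-s=T\langle z,\nu\rangle$ separates the translated copies $J_s^z$ when $\nu$ is totally irrational). With finite level sets one regroups the sum over $J_s\times J_s$ by $T$-periodicity and then kills it using the divergence-free constraint $\sum_j(\alpha_{j,i}-\alpha_{i,j})=0$ defining $\mathcal C$ together with~\eqref{eq:dualequal}, yielding $p=p'$, i.e.\ a singleton subdifferential. This argument never needs $\zeta_1-\zeta_2$ to be rational, and it works without (H3). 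Your sketch has the correct skeleton and references, but it is missing the identification of the actual obstruction (finiteness of level sets of the primal minimizer) and the cancellation mechanism (circulation constraint in $\mathcal C$ plus complementary slackness), without which the proof does not close.
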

It is expected that it should be, ``in general'', not differentiable in
the other directions, at least whenever the minimizers $u$ in~\eqref{eq:propvarphi} are constant
on an infinite set, however the proofs in~\cite{AuBa06,CGN14} rely on ellipticity
properties of the problem and are less easy to transfer to the discrete case.
The proof of Proposition~\ref{prop:dif}, which mimicks the proof in~\cite{CGN14},
is postponed to Section~\ref{sec:dif}, and
relies on the dual representation~\eqref{eq:dual representation} introduced later on. \EEE

\section{Proof of Proposition \ref{prop:main}}\label{sec:proofofprop}

This section is devoted to the proof of Proposition \ref{prop:main}.  We assume throughout this section that assumptions \BBB {\rm (L1)}, {\rm (L2)} \EEE and {\rm (H1)}, {\rm (H2)} are satisfied.  The proof consists in showing that $\varphi$ can be characterized by several (equivalent) cell-formulas and therefore passing from \eqref{def:varphi} to \eqref{eq:propvarphi}.

First, we will state and prove some elementary properties of $E$ that will be used throughout this section.

\begin{lemma}\label{lemma:elementaryproperties} \BBB Let $A \in \mathcal{B}(\mathbb{R}^d)$ and let $c>0$ be as in {\rm (L1)}.\EEE
\begin{enumerate}[label={\rm(\roman*)}]
\item \BBB There exists a universal constant $C>0$ (depending only on $c_{i,j}$ and $c$ in {\rm (L1)}) such that for all $\nu \in \mathbb{R}^d$ we have
\begin{align*}
E(\langle\nu,\cdot\rangle,A) \leq C|\nu| |(A)_c|\,.
\end{align*} 
\EEE
\item Let $u \colon \mathcal{L} \to \mathbb{R}$.  For all $ t\in \mathbb{R},\lambda >0$ there holds 
\begin{align*}
E(\lambda u+ t,A) = \lambda E(u, A)
\end{align*}
and $u \mapsto E(u,A)$ is convex. In particular, 
\begin{align*}
E(u+v,A) \leq E(u,A) +E(v,A)
\end{align*}
for all $u,v \colon \mathcal{L}\to \mathbb{R}$.
\item Let $u \colon \mathcal{L} \to \mathbb{R}$ and $B \in \mathcal{B}(\mathbb{R}^d)$ be such that $A\subset B$. Then
\begin{align*}
E(u,A) \leq E(u,B)\,.
\end{align*}
\item  Let $u \colon \mathcal{L}\to \mathbb{R}$ and $B \in \mathcal{B}(\mathbb{R}^d)$ be such that $A\cap  B = \emptyset$ . Then
\begin{align*}
E(u,A \cup B) = E(u,A) + E(u,B)\,.
\end{align*}
\item \BBB We have \EEE
\begin{align*}
\#\{i \in \mathcal{L} \cap A\} \leq \frac{1}{c^d \omega_d} |(A)_{c}|\,.
\end{align*}
\item Let $u \colon \mathcal{L}\to \mathbb{R}$. Then, the function $\tau \mapsto E(u(\cdot-\tau),A+\tau)$ is $T$-periodic.
\end{enumerate}
\end{lemma}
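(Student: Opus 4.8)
The plan is to obtain all six items by unwinding the definition \eqref{def:E} of $E(u,A)$ and invoking only the structural hypotheses {\rm (H1)}, {\rm (H2)} on the coefficients and the packing property {\rm (L1)} of $\mathcal{L}$; no deep ingredient is needed. I would prove the counting estimate (v) and the algebraic items (ii)--(iv) first, since (i) rests on (v) and on the bound $\max_{i\in\mathcal{L}}\sum_{j\in\mathcal{L}}c_{i,j}|i-j|<\infty$ recorded in Section~\ref{sec:Setting}.

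For (v) I would use the standard volume-packing argument: to each $i\in\mathcal{L}\cap A$ attach the open ball $B_{c/2}(i)$. Since $i\in A$ one has $B_{c/2}(i)\subset (A)_c$, and by {\rm (L1)} distinct lattice points are at distance at least $c$, so these balls are pairwise disjoint; summing their volumes and comparing with $|(A)_c|$ gives the counting bound. Items (ii)--(iv) are purely algebraic consequences of the pointwise facts $(\lambda s)^+=\lambda s^+$ for $\lambda>0$, $(s+t)^+\le s^++t^+$, and convexity of $s\mapsto s^+$, together with $c_{i,j}\ge 0$: scaling, subadditivity and convexity of $u\mapsto E(u,A)$ follow by applying these to each summand $c_{i,j}\bigl(u(i)-u(j)\bigr)^+$ and using nonnegativity of the weights; monotonicity (iii) holds because every inner sum $\sum_{j\in\mathcal{L}}c_{i,j}(u(i)-u(j))^+$ is nonnegative and $\mathcal{L}\cap A\subseteq\mathcal{L}\cap B$; additivity (iv) holds because $\mathcal{L}\cap(A\cup B)$ splits as the disjoint union $(\mathcal{L}\cap A)\cup(\mathcal{L}\cap B)$ when $A\cap B=\emptyset$.

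For (i) I would bound, for every $\nu\in\mathbb{R}^d$,
\begin{align*}
E(\langle\nu,\cdot\rangle,A)=\sum_{i\in\mathcal{L}\cap A}\sum_{j\in\mathcal{L}}c_{i,j}\bigl(\langle\nu,i-j\rangle\bigr)^+\le |\nu|\sum_{i\in\mathcal{L}\cap A}\sum_{j\in\mathcal{L}}c_{i,j}|i-j|\le |\nu|\Bigl(\max_{i\in\mathcal{L}}\sum_{j\in\mathcal{L}}c_{i,j}|i-j|\Bigr)\#\{i\in\mathcal{L}\cap A\},
\end{align*}
the last maximum being finite (and equal to $\max_{i\in\mathcal{L}\cap Q_T}\sum_{j\in\mathcal{L}}c_{i,j}|i-j|$) by {\rm (H1)}--{\rm (H2)}; combining with (v) yields (i) with a constant $C$ depending only on $c_{i,j}$ and $c$.

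Finally, for (vi) I would change variables $i=i'+Tz$, $j=j'+Tz$ (for fixed $z\in\mathbb{Z}^d$) in the sum defining $E\bigl(u(\cdot-\tau-Tz),A+\tau+Tz\bigr)$: by {\rm (L2)} the map $i\mapsto i-Tz$ is a bijection of $\mathcal{L}$ onto itself with $i\in A+\tau+Tz\iff i-Tz\in A+\tau$, by {\rm (H1)} one has $c_{i,j}=c_{i-Tz,j-Tz}$, and $u(i-\tau-Tz)=u(i'-\tau)$, $u(j-\tau-Tz)=u(j'-\tau)$; hence $E\bigl(u(\cdot-\tau-Tz),A+\tau+Tz\bigr)=E\bigl(u(\cdot-\tau),A+\tau\bigr)$, which is exactly $T$-periodicity in $\tau$. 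The only place where any care is needed is this bookkeeping of the lattice shift in (vi) (and matching the multiplicative constant in (v)); I anticipate no genuine obstacle.
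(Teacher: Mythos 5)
Your proof is correct and follows essentially the same route as the paper: (ii)--(iv) are read off from \eqref{def:E} together with $c_{i,j}\ge 0$ and the algebra of $(\cdot)^+$, (v) is a volume-packing argument, (vi) is the lattice shift $i\mapsto i-Tz$ combined with {\rm (L2)} and {\rm (H1)}, and (i) combines (v) with the uniform bound $\max_{i\in\mathcal{L}}\sum_{j\in\mathcal{L}}c_{i,j}|i-j|<\infty$. One remark in your favor: the paper's proof of (v) attaches balls $B_c(i)$ and asserts they are disjoint, but {\rm (L1)} only gives pairwise separation $\ge c$, so only the $B_{c/2}(i)$ balls (your choice) are actually disjoint; this changes the constant in (v) by a factor $2^d$, which is immaterial for every use of (v) in the paper, and your radius is the correct one.
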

\begin{proof}  We start by proving {\rm (ii)}-{\rm (iv)} in Step 1, then {\rm (v)} and {\rm (vi)} in Step 2 and Step 3 respectively, and finally {\rm (i)} in Step 4.

\noindent
\begin{step}{1}(Proof of {\rm (ii)} -  {\rm (iv)}) All the three statements are a direct consequence of \eqref{def:E} and the fact that $c_{i,j} \geq 0$.
\end{step}

\noindent
\begin{step}{2} (Proof of {\rm (v)}) Note that
\begin{align*}
\bigcup_{i \in \mathcal{L} \cap A} B_c(i) \subset (A)_c
\end{align*}
 and therefore, \BBB due to {\rm (L1)}, \EEE
 \begin{align*}
c^d \omega_d\#\{i \in \mathcal{L} \cap A\}  =\left|\bigcup_{i \in \mathcal{L} \cap A} B_c(i)\right| \leq  |(A)_c|\,.
 \end{align*}
 This is the claim.
\end{step}

\noindent
\begin{step}{3}(Proof of {\rm (vi)}) Let $u \colon \mathcal{L}\to \mathbb{R}$ and $ z \in \mathbb{Z}^d$. Then, using {\rm (H1)} and \BBB {\rm (L2)}, \EEE
\begin{align*}
E(u(\cdot-Tz),A+Tz) &=\sum_{i \in \mathcal{L} \cap (A+Tz)}\sum_{j \in \mathcal{L}} c_{i,j}(u(i-Tz)-u(j-Tz))^+\\&= \sum_{i \in \mathcal{L}\cap A}\sum_{j \in (\mathcal{L}+Tz)} c_{i+Tz,j+Tz}(u(i)-u(j))^+ \\&= \sum_{i \in \mathcal{L}\cap A}\sum_{j \in \mathcal{L}} c_{i,j}(u(i)-u(j))^+ = E(u,A)\,.
\end{align*}
\end{step}
\noindent
\begin{step}{4}(Proof of {\rm (i)}) \BBB Let $\nu \in \mathbb{R}^d$, then, due to {\rm (v)}, \BBB {\rm (L1)}, {\rm (L2)},{\rm (H1)}, \EEE and {\rm (H2)}, we have \EEE
\begin{align*}
E(\langle\nu,\cdot\rangle,A) = \sum_{i \in \mathcal{L}\cap A} \sum_{j \in \mathcal{L}} c_{i,j} |\langle \nu, i-j\rangle| \leq |\nu| \#\{i \in \mathcal{L} \cap A\} \max_{i \in \mathcal{L} } \sum_{j \in \mathcal{L}} c_{i,j} |i-j| \leq C |\nu||(A)_c| \,.
\end{align*}
\end{step}
\end{proof}
\BBB
The next Lemma shows that our energy satisfies a genearlized coarea formula~\cite{ChaGiaLus,Visintin}.
\begin{lemma}\label{lemma:coarea} Let $u \colon \mathcal{L} \to \mathbb{R}$ and $A \subset \mathbb{R}^d$. Then
\begin{align}\label{eq:lemmacoarea}
E(u,A) = \int_{-\infty}^{+\infty} E(\chi_{\{u>t\}}(i),A)\,\mathrm{d}t\,.
\end{align}
\end{lemma}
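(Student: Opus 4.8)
The plan is to prove the coarea formula pointwise on each ordered pair $(i,j)$ and then sum over all pairs, relying crucially on the local summability provided by $E(u,A)<+\infty$ (which we may assume, since otherwise there is nothing to prove once we check the right-hand side is also infinite). First I would isolate the scalar identity: for any two reals $a=u(i)$ and $b=u(j)$,
\begin{align*}
(a-b)^+ = \int_{-\infty}^{+\infty} \bigl(\chi_{\{a>t\}}-\chi_{\{b>t\}}\bigr)^+\,\mathrm{d}t\,.
\end{align*}
Indeed, $\chi_{\{a>t\}}-\chi_{\{b>t\}}$ equals $1$ precisely when $b\le t<a$, equals $-1$ when $a\le t<b$, and is $0$ otherwise; hence its positive part is the indicator of the interval $[b,a)$ when $a>b$ (of length $a-b$) and is identically $0$ when $a\le b$. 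In either case the integral equals $(a-b)^+$. Note that $\bigl(\chi_{\{u>t\}}(i)-\chi_{\{u>t\}}(j)\bigr)^+$ is exactly the integrand appearing in $E(\chi_{\{u>t\}},A)$ restricted to the pair $(i,j)$, so this is precisely the identity we need pairwise.

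Next I would multiply the scalar identity by $c_{i,j}\ge 0$ and sum over $i\in\mathcal L\cap A$ and $j\in\mathcal L$. On the left this yields $E(u,A)$ by definition \eqref{def:E}. On the right I want to interchange the (double) sum with the integral; since every term $c_{i,j}\bigl(\chi_{\{u>t\}}(i)-\chi_{\{u>t\}}(j)\bigr)^+\ge 0$, the Tonelli theorem for sums and integrals applies with no integrability hypothesis, giving
\begin{align*}
E(u,A)=\sum_{i\in\mathcal L\cap A}\sum_{j\in\mathcal L} c_{i,j}\int_{-\infty}^{+\infty}\bigl(\chi_{\{u>t\}}(i)-\chi_{\{u>t\}}(j)\bigr)^+\,\mathrm{d}t=\int_{-\infty}^{+\infty}\sum_{i\in\mathcal L\cap A}\sum_{j\in\mathcal L} c_{i,j}\bigl(\chi_{\{u>t\}}(i)-\chi_{\{u>t\}}(j)\bigr)^+\,\mathrm{d}t\,,
\end{align*}
and the inner double sum is exactly $E(\chi_{\{u>t\}},A)$, which is \eqref{eq:lemmacoarea}. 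The only mild subtlety is that for fixed $i$ the sum over $j$ is countable but possibly infinite; Tonelli still applies since we are on a product of $\sigma$-finite measure spaces (counting measures on $\mathcal L\cap A$ and $\mathcal L$, and Lebesgue measure on $\mathbb R$), all integrands nonnegative and measurable (the map $t\mapsto\chi_{\{u>t\}}(i)$ is monotone hence Borel).

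I do not expect a serious obstacle here; the proof is essentially the elementary layer-cake identity combined with Tonelli. The one point to state carefully — and the closest thing to a ``hard part'' — is the measurability/integrability bookkeeping: one should remark that $t\mapsto E(\chi_{\{u>t\}},A)$ is a nonnegative monotone (nonincreasing) function of $t$, hence Lebesgue measurable, so that the right-hand side of \eqref{eq:lemmacoarea} is well defined in $[0,+\infty]$, and that the identity holds as an equality in $[0,+\infty]$ regardless of whether $E(u,A)$ is finite. If one prefers to avoid invoking Tonelli for series, one can instead first prove the formula for the truncated energy $E_M$ obtained by keeping only pairs with $|i-j|\le M$ (a finite sum for each $i$, so Fubini on a finite sum is trivial), and then pass to the limit $M\to\infty$ using monotone convergence on both sides; this gives the same conclusion and keeps everything at the level of elementary measure theory.
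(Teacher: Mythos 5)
Your proposal is correct and follows essentially the same route as the paper: the pointwise layer-cake identity $(a-b)^+=\int_{-\infty}^{+\infty}(\chi_{\{a>t\}}-\chi_{\{b>t\}})^+\,\mathrm{d}t$, multiplied by $c_{i,j}\ge 0$ and summed, with the interchange of sum and integral justified by nonnegativity (the paper cites Fubini with the parenthetical remark that $c_{i,j}\ge 0$, which is precisely your Tonelli argument). The extra remarks on measurability and the truncated-energy alternative are harmless elaborations of the same idea.
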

\begin{proof} For $a,b \in \mathbb{R}$ we have
\begin{align*}
(a-b)^+ =\int_{-\infty}^{+\infty} (\chi_{\{a>t\}}-\chi_{\{b>t\}})^+\,\mathrm{d}t\,.
\end{align*}
Therefore,
\begin{align*}
E(u,A) = \sum_{i \in \mathcal{L}\cap A} \sum_{j \in \mathcal{L}} c_{i,j}(u(i)-u(j))^+ = \sum_{i \in \mathcal{L}\cap A} \sum_{j \in \mathcal{L}} c_{i,j}\int_{-\infty}^{+\infty} (\chi_{\{u(i)>t\}}-\chi_{\{u(j)>t\}})^+\,\mathrm{d}t\,.
\end{align*}
Now, by Fubini's Theorem (note that $c_{i,j} \geq 0$), we obtain
\begin{align*}
 \sum_{i \in \mathcal{L}\cap A} \sum_{j \in \mathcal{L}} c_{i,j}\int_{-\infty}^{+\infty} (\chi_{\{u(i)>t\}}-\chi_{\{u(j)>t\}})^+\,\mathrm{d}t&= \int_{-\infty}^{+\infty}\sum_{i \in \mathcal{L}\cap A} \sum_{j \in \mathcal{L}} c_{i,j}(\chi_{\{u(i)>t\}}-\chi_{\{u(j)>t\}})^+\,\mathrm{d}t \\&= \int_{-\infty}^{+\infty} E(\chi_{\{u>t\}},A)\,\mathrm{d}t
\end{align*}
and thus the claim.
\end{proof}
\EEE

\begin{lemma} \label{lemma:Requal01} Let $S>0$, \BBB $\delta>0$ and  $\nu \in \mathbb{S}^{d-1}$. \EEE Then
\begin{multline*}
\inf\left\{E(u,Q^\nu_S)\colon u \colon\mathcal{L} \to \mathbb{R}, u(i)=u_\nu(i) \text{ on } \mathcal{L}\setminus Q^\nu_{(1-\delta)S} \right\}\\=\inf\left\{E(u,Q^\nu_S)\colon u \colon\mathcal{L} \to \{0,1\}, u(i)=u_\nu(i) \text{ on } \mathcal{L}\setminus Q^\nu_{(1-\delta)S} \right\} \,.
\end{multline*}
\end{lemma}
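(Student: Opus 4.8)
The plan is to prove the two infima coincide by establishing two inequalities. The inequality ``$\leq$'' (infimum over $\{0,1\}$-valued competitors bounds from above the infimum over $\mathbb{R}$-valued competitors) is trivial, since every admissible $\{0,1\}$-valued $u$ is in particular an admissible $\mathbb{R}$-valued competitor with the same energy. So the real content is the reverse inequality: given any $u\colon\mathcal{L}\to\mathbb{R}$ with $u=u_\nu$ on $\mathcal{L}\setminus Q^\nu_{(1-\delta)S}$, I must produce a $\{0,1\}$-valued competitor with energy no larger, and still agreeing with $u_\nu$ outside $Q^\nu_{(1-\delta)S}$.

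The natural tool is the coarea formula, Lemma~\ref{lemma:coarea}, applied on the set $A=Q^\nu_S$. This gives
\begin{align*}
E(u,Q^\nu_S)=\int_{-\infty}^{+\infty} E(\chi_{\{u>t\}},Q^\nu_S)\,\mathrm{d}t.
\end{align*}
For each level $t$, the superlevel set function $v_t:=\chi_{\{u>t\}}$ is $\{0,1\}$-valued. The key observation is the boundary condition: since $u(i)=u_\nu(i)\in\{0,1\}$ for $i\in\mathcal{L}\setminus Q^\nu_{(1-\delta)S}$, we have for every $t\in[0,1)$ that $v_t(i)=\chi_{\{u(i)>t\}}=u_\nu(i)$ on $\mathcal{L}\setminus Q^\nu_{(1-\delta)S}$ (because $u_\nu(i)=1>t$ gives $v_t(i)=1$ and $u_\nu(i)=0\le t$ gives $v_t(i)=0$). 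Hence for a.e.\ $t\in[0,1)$, $v_t$ is an admissible competitor for the right-hand infimum; denote that infimum by $m$. Therefore $E(\chi_{\{u>t\}},Q^\nu_S)\geq m$ for a.e.\ $t\in(0,1)$.

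To conclude I need to discard the contributions from $t\notin(0,1)$, i.e.\ to argue $\int_{(-\infty,0]\cup[1,\infty)}E(\chi_{\{u>t\}},Q^\nu_S)\,\mathrm{d}t\geq 0$, which is immediate since the integrand is nonnegative (all $c_{i,j}\geq0$). Combining, $E(u,Q^\nu_S)\geq\int_0^1 E(\chi_{\{u>t\}},Q^\nu_S)\,\mathrm{d}t\geq m$. Taking the infimum over all admissible $u$ on the left gives the $\geq$ direction, completing the proof. The main (minor) point to be careful about is the strict-inequality convention in $\{u>t\}$ versus the $\langle\nu,x\rangle\geq0$ convention in the definition of $u_\nu$: one should check that at $t=0$ the superlevel set $\{u>0\}$ still matches $u_\nu$ outside $Q^\nu_{(1-\delta)S}$ where $u=u_\nu$ takes value $1$ exactly on $\{\langle\nu,\cdot\rangle\ge0\}$ — but this is a measure-zero set of levels and does not affect the integral, so it causes no difficulty. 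Another routine check is that all the integrals are finite, which follows from Lemma~\ref{lemma:elementaryproperties}(i) and (v) applied to the bounded set $Q^\nu_S$, so Fubini in Lemma~\ref{lemma:coarea} is legitimate; this has already been handled in the statement of that lemma.
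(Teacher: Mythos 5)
Your proposal is correct and follows essentially the same route as the paper: both reduce the $\mathbb{R}$-valued infimum to the $\{0,1\}$-valued one via the coarea formula of Lemma~\ref{lemma:coarea}, observing that for $t\in[0,1)$ the superlevel function $\chi_{\{u>t\}}$ is $\{0,1\}$-valued and still satisfies the boundary condition because $u=u_\nu\in\{0,1\}$ outside $Q^\nu_{(1-\delta)S}$. The only cosmetic difference is that the paper selects a single favorable level $t\in(0,1)$ with $E(\chi_{\{u>t\}},Q^\nu_S)\leq \int_0^1 E(\chi_{\{u>s\}},Q^\nu_S)\,\mathrm{d}s\leq E(u,Q^\nu_S)$, whereas you integrate the constant lower bound $m$ over $(0,1)$ and compare; these are equivalent mean-value arguments.
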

\begin{proof} \BBB The infimum on the left hand side is taken over a larger class of admissible function, since here the image of the competitor $u$ is a subset of $\mathbb{R}$ and not just of $\{0,1\}$. Hence, one inequality is trivial.  The other inequality follows from a coarea formula satisfied by our energies. \EEE \\
\noindent\begin{step}{1} (Proof of '$\leq$') 
This inequality is clear, since the infimum on the left hand side is taken over a larger class of functions.
\end{step}\\
\noindent\begin{step}{2} (Proof of '$\geq$') \BBB
 Let us take $u \colon \mathcal{L} \to \mathbb{R}$ such that $u=u_\nu$ on  $\mathcal{L} \setminus Q^\nu_{(1-\delta)S}$ and denote by $u^s= \chi_{u>s}$.
Then, using Lemma \ref{lemma:coarea}, there exists $t\in (0,1)$ such that 
\begin{align*}
E(u^t,Q^\nu_S)\leq \int_0^1 E(u^s,Q^\nu_S)\,\mathrm{d}s\leq \int_{-\infty}^{+\infty} E(u^s,Q^\nu_S)\,\mathrm{d}s =E(u,Q^\nu_S) \,.
\end{align*}
Noting that $u^t(i) \in \{0,1\}$ for all $i \in \mathcal{L}$  and $u^t=u_\nu $ on $\mathcal{L}\setminus Q^\nu_S$, this concludes Step~2. \EEE
\end{step}
\end{proof}
Let $\phi \colon \mathbb{R}^d \to [0,+\infty]$ be defined by
\begin{align}\label{def:phi}
\phi(\nu) = \lim_{\delta\to 0}\lim_{S\to+\infty}\frac{1}{S^d}\inf\left\{E(u,Q_{S}) \colon u\colon \mathcal{L}\to \mathbb{R}, u(i) =\langle\nu,i\rangle \text{ on } \mathcal{L} \setminus Q_{(1-\delta)
S}\right\}\,.
\end{align}

$\phi_{\mathrm{per}} \colon \mathbb{R}^d \to [0,+\infty]$ is defined by
\begin{align}\label{def:phiper}
\phi_{\mathrm{per}}(\nu) = \BBB\liminf_{k\to+\infty}\EEE\frac{1}{(kT)^d}\inf\left\{E(u,Q_{kT}) \colon u\colon \mathcal{L}\to \mathbb{R}, u(\cdot) -\langle\nu,\cdot\rangle \BBB \in \mathcal{A}_\mathrm{per}(Q_{kT};\mathbb{R})\EEE\right\}\,.
\end{align}

\BBB The next lemma shows that $\phi_{\mathrm{per}}$ can be calculated via a finite cell formula. Additionally, it shows that the liminf in the definition of $\eqref{def:phiper}$ is actually a limit.
\EEE

\begin{lemma} \label{lemma:Treduction} Let \BBB $\nu \in \mathbb{S}^{d-1}$. \EEE For all $k \in \mathbb{N}$ there holds
\begin{align}\label{ineq:lemmaTreduction}
\begin{split}
\frac{1}{(kT)^d}&\inf\left\{E(u,Q_{kT}) \colon  u \colon \mathcal{L} \to \mathbb{R}, u(\cdot) -\langle\nu, \cdot\rangle \BBB  \in \mathcal{A}_{\mathrm{per}}(Q_{kT};\mathbb{R}) \EEE\right\} \\= \frac{1}{T^d}&\inf\left\{E(u,Q_{T}) \colon  u \colon \mathcal{L} \to \mathbb{R}, u(\cdot) -\langle\nu, \cdot\rangle  \BBB  \in \mathcal{A}_{\mathrm{per}}(Q_{T};\mathbb{R}) \EEE\right\} \, .
\end{split}
\end{align}
\BBB In particular,
\begin{align}\label{eq:lemmaTreduction}
\phi_{\rm per}(\nu) = \frac{1}{T^d}&\inf\left\{E(u,Q_{T}) \colon u \colon \mathcal{L} \to \mathbb{R},  u(\cdot) -\langle\nu, \cdot\rangle \BBB  \in \mathcal{A}_{\mathrm{per}}(Q_T;\mathbb{R})\EEE\right\}\,
\end{align}
% \BBB and the $\liminf$ in \eqref{def:phiper} is actually a limit.\EEE
\BBB and the $\liminf$ in \eqref{def:phiper} is actually
  a limit since the sequence, in fact, does not depend on $k$. \EEE
\end{lemma}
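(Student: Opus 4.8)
The plan is to prove the equality \eqref{ineq:lemmaTreduction} by establishing two inequalities, the non-trivial one being ``$\leq$'' (the $\Lambda$-periodic problem on the larger cube is at most the $T$-periodic problem on the small cube). For ``$\geq$'': given any competitor $u$ for the small-cube problem with $u(\cdot)-\langle\nu,\cdot\rangle$ being $T$-periodic, note that it is automatically $kT$-periodic, hence admissible for the large-cube problem; moreover by Lemma~\ref{lemma:elementaryproperties}(iv) (additivity on disjoint sets) and (vi) ($T$-periodicity of translated energies, applied to the $k^d$ translates $Q_T + Tz$ tiling $Q_{kT}$), we get $E(u,Q_{kT}) = \sum_{z} E(u,Q_T+Tz) = k^d\, E(u,Q_T)$ using that $u$ itself (not just $u-\langle\nu,\cdot\rangle$) has the property that $E(u(\cdot-Tz),\cdot+Tz)=E(u,\cdot)$ — here one must be a little careful since $u$ is not $T$-periodic, only $u-\langle\nu,\cdot\rangle$ is; the point is that $E$ only sees differences $u(i)-u(j)$, and $u(i-Tz)-u(j-Tz) = u(i)-u(j) - \langle\nu,Tz\rangle + \langle\nu,Tz\rangle = u(i)-u(j)$, so translation invariance of $E$ by $Tz$ still holds. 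Dividing by $(kT)^d$ gives ``$\geq$''.

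For the harder inequality ``$\leq$'': take a competitor $u$ for the large-cube problem, so $v:=u-\langle\nu,\cdot\rangle$ is $kT$-periodic. The idea is to produce from $u$ a $T$-periodic competitor (in the $v$-variable) on $Q_T$ whose energy per unit volume does not exceed that of $u$. The natural approach is averaging: among the $k^d$ translates $u_z(\cdot) := u(\cdot + Tz) - \langle\nu,Tz\rangle$ for $z \in \{0,\dots,k-1\}^d$ (each of which is again a competitor for the large-cube problem, with the same energy by the translation argument above and periodicity), consider either the one of least energy on $Q_T$, or the arithmetic average $\bar u := k^{-d}\sum_z u_z$. The average $\bar u$ satisfies $\bar u(\cdot)-\langle\nu,\cdot\rangle \in \mathcal{A}_{\mathrm{per}}(Q_T;\mathbb{R})$ (the $Tz$-translates get averaged out modulo the $kT$-periodicity), so $\bar u$ is admissible for the small-cube problem; by convexity of $E(\cdot,Q_T)$ (Lemma~\ref{lemma:elementaryproperties}(ii)) and then the translation/periodicity identities, $E(\bar u, Q_T) \leq k^{-d}\sum_z E(u_z,Q_T) = k^{-d}\sum_z E(u,Q_T+Tz) = k^{-d} E(u,Q_{kT})$, where the last step is again Lemma~\ref{lemma:elementaryproperties}(iv). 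Dividing by $T^d$ yields $\tfrac{1}{T^d}E(\bar u,Q_T) \leq \tfrac{1}{(kT)^d}E(u,Q_{kT})$, and taking the infimum over $u$ gives ``$\leq$''. (One must check $E(u,Q_T+Tz)$ and not $E(u,Q_T)$ appears — it does, since the translates of $Q_T$ tile $Q_{kT}$ — but $E(u_z,Q_T)=E(u(\cdot+Tz)-\langle\nu,Tz\rangle, Q_T)=E(u(\cdot+Tz),Q_T)=E(u,Q_T+Tz)$ by (ii) and (vi), so the bookkeeping closes.)

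The main obstacle is purely bookkeeping: keeping straight that $u$ is \emph{not} periodic while $u-\langle\nu,\cdot\rangle$ \emph{is}, and exploiting that $E$ depends only on the increments $u(i)-u(j)$ so that the affine part $\langle\nu,\cdot\rangle$ is invisible to $E$ up to the additive constant $\langle\nu,Tz\rangle$ which drops out of every positive-part term. Once \eqref{ineq:lemmaTreduction} is established, \eqref{eq:lemmaTreduction} follows immediately: the right-hand side of \eqref{ineq:lemmaTreduction} is independent of $k$, so $\tfrac{1}{(kT)^d}\inf\{\dots\}$ is a constant sequence in $k$, hence its $\liminf$ — which by definition \eqref{def:phiper} is $\phi_{\mathrm{per}}(\nu)$ — equals that constant, namely $\tfrac{1}{T^d}\inf\{E(u,Q_T): u-\langle\nu,\cdot\rangle\in\mathcal{A}_{\mathrm{per}}(Q_T;\mathbb{R})\}$; this simultaneously shows the $\liminf$ is a genuine limit (of a constant sequence).
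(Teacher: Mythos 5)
Your proposal is correct and follows the paper's own route: the ``$\geq$'' direction by observing a $T$-periodic competitor is also $kT$-periodic and the per-volume energy is the same, and the nontrivial ``$\leq$'' direction by averaging the $k^d$ translates $u(\cdot+Tz)-\langle\nu,Tz\rangle$ and invoking convexity of $E(\cdot,A)$ plus translation invariance. One tiny bookkeeping point you gloss over: with $z\in\{0,\dots,k-1\}^d$ the union $\bigcup_z(Q_T+Tz)$ is a cube of side $kT$ but not literally $Q_{kT}$ (it is shifted); to conclude $\sum_z E(u,Q_T+Tz)=E(u,Q_{kT})$ you also need that the pointwise energy density $i\mapsto\sum_j c_{i,j}(u(i)-u(j))^+$ is $kT$-periodic (which follows from the $T$-periodicity of $c$ together with the $kT$-periodicity of $u-\langle\nu,\cdot\rangle$), or alternatively to re-center the index set $z$; the paper sidesteps this by writing $E(u_T,Q_T)=k^{-d}E(u_T,Q_{kT})\le k^{-d}E(u,Q_{kT})$ directly on $Q_{kT}$.
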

\BBB
The following example shows that, without any further assumption on $c_{i,j}$, the minimum in
\begin{align}\label{eq:inf}
\inf\left\{E(u,Q_{T}) \colon  u \colon \mathcal{L} \to \mathbb{R}, u(\cdot) -\langle\nu, \cdot\rangle    \in \mathcal{A}_{\mathrm{per}}(Q_{T};\mathbb{R}) \right\} 
\end{align}
is not achieved by any $u \colon \mathcal{L} \to \mathbb{R}$.
\begin{example}\label{ex:existence}  Let $\mathcal{L}=\mathbb{Z}$ and let $c_{i,j}$ be $2$-periodic. We set
\begin{align*}
c_{i,j}= \begin{cases} 2^{-k} &\text{if } j-i =2k+1, k \in \mathbb{N}, i \text{ even},\\
0 &\text{otherwise.}
\end{cases}
\end{align*}
Let $s\in \mathbb{R}$, $u(0)=0$ and $u(1)=s$ be $2$-periodic. Then, for $\nu=-1$ we have
\begin{align*}
E(u,Q_2) &= \sum_{k=0}^\infty c_{0,2k+1}(u(0)-u(2k+1)+2k+1)^+ =\sum_{k=0}^\infty 2^{-k}(2k+1-s)^+\,.
\end{align*}
We have that 
\begin{align*}
\inf_{u \in \mathcal{A}_{\mathrm{per}}(Q_2;\mathbb{R})} E(u,Q_2)=0\,.
\end{align*}
However, clearly, for all $u \in \mathcal{A}_{\mathrm{per}}(Q_2;\mathbb{R})$, $E(u,Q_2) >0$. In order to ensure the existence in \eqref{eq:inf} and the other minimum problems a coercivity condition might be: For any $i,j \in \mathcal{L}$ there exists a path $\gamma=(i_0,\ldots,i_N)$ such that $i_0=i$, $i_N=j$ and such that $c_{i_k,i_{k+1}} >0$. By considering the directed graph $G=(\mathcal{L}, \mathcal{E})$, where $\mathcal{E}:=\{(i,j) \in \mathcal{L} \times \mathcal{L}\colon \text{ there exists } z \in \mathbb{Z}^d \text{ such that }   c_{i,j+Tz}>0\}$ this condition ensures that for two vertices $i,j \in G$ there always exists a path of edges (in the infinite graph) with positive weights connecting them.
\end{example}
\EEE

\begin{proof}[Proof of Lemma \ref{lemma:Treduction}] We split the proof into two steps by first observing the (obvious) inequality that the right hand side in \eqref{ineq:lemmaTreduction} is less than or equal to the left hand side. Then, we prove the converse inequality by using a superposition argument.

\BBB
  \noindent\begin{step}{1}(Proof of '$\leq$')
    given $u$ with $u(\cdot)-\langle\nu,\cdot\rangle\in
    \mathcal{A}_{\mathrm{per}}(Q_T;\mathbb{R})$, then obviously
    $u(\cdot)-\langle\nu,\cdot\rangle\in
    \mathcal{A}_{\mathrm{per}}(Q_{kT};\mathbb{R})$ and
    \[
      E(u,Q_{kT}) = \frac{1}{k^d}E(u,Q_T)
    \]
    so this inequality is obvious.
  \end{step}

  \noindent\begin{step}{2}(Proof of '$\geq$')
    This is a standard convexity argument: given $u$ now with
    $u(\cdot)-\langle\nu,\cdot\rangle\in\mathcal{A}_{\mathrm{per}}(Q_{kT};\mathbb{R})$,
    then for $i\in Q_T$ we let:
    \[
      u_T(i) = \langle\nu,i\rangle
      + \frac{1}{k^d} \sum_{z \in \{0,\ldots,k-1\}^d} (u(i+Tz)-\langle\nu,
      i+Tz\rangle)\,.
    \]
    Then clearly by construction, $u_T\in \mathcal{A}_{\mathrm{per}}(Q_{T};\mathbb{R})$ and by convexity,
    \[
      E(u_T,Q_T)= \frac{1}{k^d} E(u_T,Q_{kT})\le \frac{1}{k^d}E(u,Q_{kT}), 
    \]
    which shows the lemma.
  \end{step}
\EEE

\end{proof}

The following lemma uses a standard cutoff-argument. However, due to the infinite range of interactions, \BBB the arguments for the case of finite range interactions need to be adapted\EEE.

\begin{lemma}\label{lemma: per=aff} Let $\nu \in \mathbb{R}^d$. Then:
  $\phi_{\rm per}(\nu) = \phi(\nu).$
% \begin{align*}
% \phi_{\rm per}(\nu) = \phi(\nu)\,.
% \end{align*}
\end{lemma}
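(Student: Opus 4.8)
The plan is to prove the two inequalities $\phi(\nu) \le \phi_{\rm per}(\nu)$ and $\phi_{\rm per}(\nu) \ge \phi(\nu)$ separately, using the finite-cell characterization \eqref{eq:lemmaTreduction} for $\phi_{\rm per}$ together with a cutoff construction that interpolates between a competitor and the affine function $\langle\nu,\cdot\rangle$. First I would reduce to the case $\nu\ne 0$ (the case $\nu=0$ being trivial since then $\phi=\phi_{\rm per}=0$, the constant function being admissible). For the inequality $\phi(\nu)\le\phi_{\rm per}(\nu)$: take a near-optimal $u$ in the $Q_T$-problem of \eqref{eq:lemmaTreduction}, so that $w:=u-\langle\nu,\cdot\rangle$ is $T$-periodic. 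Extend $u$ to all of $\mathcal{L}$ by the same formula $u=\langle\nu,\cdot\rangle + w$; this is a competitor for $\phi(\nu)$ on each cube $Q_S$ \emph{except near the boundary}, where we need $u=\langle\nu,\cdot\rangle$ on $\mathcal{L}\setminus Q_{(1-\delta)S}$. So I would glue: on an inner region keep $u=\langle\nu,\cdot\rangle+w$, in a boundary layer of width $\sim\delta S$ transition from $u$ to the affine function, using periodicity of $w$ so that on each translated copy of $Q_T$ inside the bulk the energy is exactly $T^d\phi_{\rm per}(\nu)$ (by Lemma~\ref{lemma:elementaryproperties}(vi) and the periodicity of $w$), and estimate the energy in the transition layer by Lemma~\ref{lemma:elementaryproperties}(i),(ii): the boundary layer has volume $O(\delta S \cdot S^{d-1})$ plus lower-order terms, and there $E(u,\cdot)\le E(\langle\nu,\cdot\rangle,\cdot) + E(w,\cdot) \le C|\nu|\cdot(\text{layer volume}) + (\text{bounded-by-periodicity term})$. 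Dividing by $S^d$, sending $S\to\infty$ then $\delta\to0$ kills the layer contribution and leaves $\phi(\nu)\le\phi_{\rm per}(\nu)$. Here the tail of the infinite-range interactions crossing the transition interface must be controlled by $C_R\to0$ type estimates, hence the remark in the text that the finite-range argument "needs to be adapted."

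For the reverse inequality $\phi_{\rm per}(\nu)\le\phi(\nu)$: take a near-optimal competitor $u$ for $\phi(\nu)$ on a large cube $Q_S$ with $u=\langle\nu,\cdot\rangle$ outside $Q_{(1-\delta)S}$, with $S=kT$ for suitable integers $k$. Since $u$ agrees with the affine function near $\partial Q_S$, we can extend it $S$-periodically (as $u-\langle\nu,\cdot\rangle$ being $S$-periodic) to get an admissible competitor in the $Q_{kT}$-problem defining $\phi_{\rm per}(\nu)$; by periodicity and Lemma~\ref{lemma:elementaryproperties}(vi), $E(u_{\rm ext}, Q_{kT})$ differs from $E(u,Q_{(1-\delta)S})$ by at most the interactions crossing the periodic seams, which again is controlled by a boundary-layer estimate of order $C_R \cdot S^{d-1}$ together with the interaction-tail bound, plus the energy in the thin layer itself which is $\le C|\nu|\delta S^{d}$ by part (i). Dividing by $(kT)^d = S^d$ and passing to the limit $S\to\infty$, $\delta\to0$, and using that $\phi_{\rm per}$ does not depend on $k$ (Lemma~\ref{lemma:Treduction}), yields $\phi_{\rm per}(\nu)\le\phi(\nu)$.

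The main obstacle, and the point deserving care, is the gluing/cutoff step in the presence of infinite-range interactions: when one modifies $u$ in a boundary layer, the energy picks up contributions from all pairs $(i,j)$ with $i$ inside and $j$ outside the layer, and with infinite range there are infinitely many such $j$. The key tool is the uniform tail estimate stated right after (H3): $\max_{i}\sum_{|i-j|\ge R} c_{i,j}|i-j|\le C_R$ with $C_R\to0$. One must organize the transition layer to have width $\gg R'$ for a well-chosen truncation radius $R'$, bound the "long" interactions crossing it by (number of sites in layer)$\cdot C_{R'}$, and bound the "short" ones by the affine/subadditivity estimate on a slightly fattened layer. Choosing $R'=R'(\delta,S)\to\infty$ slowly enough ensures all error terms vanish after dividing by $S^d$ and taking limits. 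I would also need to be slightly careful that $u-\langle\nu,\cdot\rangle$ is genuinely periodic in the right sense for Lemma~\ref{lemma:elementaryproperties}(vi) to apply cube-by-cube, and that the number of lattice points in each region is comparable to its volume via Lemma~\ref{lemma:elementaryproperties}(v). Modulo these bookkeeping estimates, the argument is the standard "fundamental estimate" / matching-of-boundary-conditions technique in $\Gamma$-convergence.
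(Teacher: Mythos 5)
Your proposal is correct and follows essentially the same two-directional strategy as the paper: for $\phi_{\rm per}\le\phi$ you extend a near-optimal competitor for $\phi$ periodically (using that it is affine near $\partial Q_S$) and bound the seam interactions by the tail estimate, and for $\phi\le\phi_{\rm per}$ you take a near-optimal periodic competitor and cut off to the affine function near the boundary, again controlling long-range contributions via $C_R\to 0$. (Note a small slip in your opening sentence, where you list "$\phi(\nu)\le\phi_{\rm per}(\nu)$ and $\phi_{\rm per}(\nu)\ge\phi(\nu)$," which is the same inequality twice; the body of your argument correctly treats $\phi\le\phi_{\rm per}$ and $\phi_{\rm per}\le\phi$.)
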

\begin{proof} We first show $\phi_{\rm per}(\nu) \leq \phi(\nu)$ in Step~1, and then the reverse inequality. In order to do so, we modify competitors of the respective cell formulas in order to obtain a competitor for the other formula. 
Due to the one homogeneity of both functions, we may assume that $\nu \in \mathbb{S}^{d-1}$.

\noindent \begin{step}{1}(Proof of  '$\leq$') \BBB Due to Lemma \ref{lemma:Treduction}, the limit  in the definition of $\phi_{\rm per}(\nu)$ (resp. $\phi(\nu)$) exists. \EEE  Thus, we can assume without loss of generality that $S=kT$ for some $k \in \mathbb{N}$ with $k$ large. Let $\delta>0$, \BBB $\varepsilon>0$, \EEE and let $u_k^\delta \colon \mathcal{L} \to \mathbb{R}$ be such that $u_k^\delta(i) = \langle\nu,i\rangle$ on $ \mathcal{L} \setminus Q_{(1-\delta)kT}$ and \BBB
\begin{align}\label{eq:minukdelta}
\begin{split}
E(u_k^\delta,Q_{kT}) &\leq\inf\left\{E(u,Q_{kT}) \colon u\colon \mathcal{L}\to \mathbb{R}, u(i) =\langle\nu,i\rangle \text{ on } \mathcal{L} \setminus Q_{(1-\delta)
kT}\right\} +\varepsilon \,.
\end{split}
\end{align} \EEE
We assume that
\begin{align}\label{ineq:uinfty bound}
||u_k^\delta||_{L^\infty(Q_{(1+\delta )kT})} \leq 2kT\,.
\end{align}
If that were not true we perform the following construction with $\tilde{u}_k^\delta(i) = (u_k^\delta(i) \vee (-2kT))\wedge (2kT)$. Note that still $\tilde{u}_k^\delta(i) =\langle\nu,i\rangle$ on $Q_{(1+\delta)kT}$ for $\delta$ small enough.
We define $v_k^\delta \colon \mathcal{L} \to \mathbb{R}$ by setting
\BBB
  \begin{equation}\label{def:vkdelta}
    v_k^\delta(i)
    - \langle \nu,i\rangle
    = u_k^\delta(i_0)-\langle\nu,i_0\rangle \text{ if }
    i=i_0+kTz, i_0 \in Q_{kT}, z \in \mathbb{Z}^d
  \end{equation}
  so that $v_k^\delta(\cdot)-\langle \nu,\cdot\rangle\in \mathcal{A}_{\mathrm{per}}(Q_{kT};\mathbb{R})$. Then clearly,
  writing $i=i_0+kTz$ and $j=j_0+k'T_z$ as above:
\begin{align}\label{ineq:vkdeltalipschitz}
  |v_k^\delta(i)-v_k^\delta(j)| \le |u_k^\delta(i_0)-u_k^\delta(j_0)|+
  |i_0-j_0| + |i-j| 
  \leq C kT + |i-j|\,,
\end{align}
and
%\begin{align}\label{def:vkdelta}
%v_k^\delta(i_0+kTz) = u_k^\delta(i_0)+\langle\nu,kTz\rangle \text{ if } i_0 \in Q_{kT}, z \in \mathbb{Z}^d\,. 
%\end{align}
% We claim that
% \begin{align}\label{ineq:vkdeltalipschitz}
% |v_k^\delta(i)-v_k^\delta(j)| \leq CkT + |i-j|\,.
% \end{align}
% To see this\BBB, note that,   for $i = i_0 +Tkz $ and $j=j_0 +Tkz^\prime$, $i_0,j_0 \in Q_{kT}$ and $z, z^\prime \in \mathbb{Z}^d$ it holds
% \begin{align*}
% |v_k^\delta(i)-v_k^\delta(j)| &= |u_k^\delta(i_0)-u_k^\delta(j_0) +\langle\nu,kT(z-z^\prime)\rangle| \leq |u_k^\delta(i_0)|+|u_k^\delta(j_0)| + |kT(z-z^\prime)| \\&\leq CkT + |i-j| +|i_0|+|j_0| \leq CkT +|i-j|\,,
% \end{align*}
% where we have used \eqref{def:vkdelta} and \eqref{ineq:uinfty bound}. \EEE
% Clearly\BBB, $v_k^\delta(\cdot) -\langle\nu,\cdot\rangle \in \mathcal{A}_{\mathrm{per}}(Q_{kT};\mathbb{R})$.
%\EEE Let us check that 
\begin{align}\label{eq:vkdeltaukdelta}
 v_k^\delta(i) = u_k^\delta(i) \text{ for } i \in Q_{(1+\delta)kT}\,, 
\end{align}
since $u_k^\delta(i)-\langle\nu,i\rangle=0$ for $i\not\in Q_{(1-\delta)kT}$. \EEE
% This holds true for $i \in Q_{kT}$. It remains to be checked for $i \in Q_{(1+\delta)kT} \setminus Q_{kT}$. Let $i  \in Q_{(1+\delta)kT} \setminus Q_{kT}$, i.e. $i=kTz +i_0$, where $i_0 \in Q_{kT}$ and $||z||_\infty=1$. Let $n \in \{1,\ldots,d\}$ be such that $|i_n|=||i||_\infty$. Furthermore, let us assume for now that $i_n \geq 0$ and therefore, $z_n = 1$. Since $i \in Q_{(1+\delta)Tk}$ we have $i_n < (1+\delta)Tk/2$. Hence
% \begin{align*}
% (i_0)_n = i_n-kTz_n < (1+\delta)Tk/2 - kT = (-1+\delta)Tk/2  \,.
% \end{align*}
% Hence $i_0 \in \mathcal{L} \setminus Q_{(1-\delta)kT}$ (The case that $i_n <0$ is done analogously - note that $Q_{kT}$ is defined as the half-open cube centred in $0$. Hence, we need to make this distinction.).  Therefore, by \eqref{def:vkdelta} and the definition of $u_k^\delta$ in $\mathcal{L}\setminus Q_{(1-\delta)kT}$,  
% \begin{align*}
% v_k^\delta(i)= v_k^\delta(i_0+kTz) =v_k^\delta(i_0) + \langle\nu,kTz\rangle = u_k^\delta(i_0) +\langle\nu,kTz\rangle = \langle\nu,i_0\rangle  + \langle\nu,kTz\rangle = u_k^\delta(i)\,.
% \end{align*}
% Hence, \eqref{eq:vkdeltaukdelta} holds true.
Additionally,
\begin{align}\label{ineq:vkdeltainf}
\inf\left\{E(u,Q_{kT}) \colon u\colon \mathcal{L}\to \mathbb{R}, u(\cdot) -\langle\nu,\cdot\rangle \BBB \in \mathcal{A}_{\mathrm{per}}(Q_{kT};\mathbb{R})\EEE\right\} \leq E(v_k^\delta, Q_{kT})\,.
\end{align}
We are finished with Step 1 if we prove
\begin{align}\label{ineq:comparison energy ukdeltavkdelta}
E(v_k^\delta,Q_{kT}) \leq E(u_k^\delta,Q_{kT}) + \BBB \frac{C_k^\delta}{\delta}(kT)^d\,,\EEE
\end{align}
\BBB where $C_k^\delta \to 0$ as $k\to +\infty$. \EEE In fact, using \eqref{eq:minukdelta}, \eqref{ineq:vkdeltainf}, \eqref{ineq:comparison energy ukdeltavkdelta}, dividing by $(kT)^d$, letting $k\to +\infty$, and then $\delta \to 0$, we obtain the claim \BBB by noting that $\varepsilon>0$ is chosen arbitrarily\EEE. Let us prove \eqref{ineq:comparison energy ukdeltavkdelta}. We have, using \eqref{eq:vkdeltaukdelta}, 
\begin{align*}
E(v_k^\delta,Q_{kT}) &= \sum_{ i \in \mathcal{L}\cap Q_{kT}} \sum_{j \in \mathcal{L}} c_{i,j}(v_k^\delta(i)-v_k^\delta(j))^+ \\&= \sum_{i \in \mathcal{L}\cap Q_{kT}} \sum_{j \in \mathcal{L} \cap Q_{(1+\delta)kT}}\!\!\! c_{i,j}(v_k^\delta(i)-v_k^\delta(j))^+ +  \sum_{i \in \mathcal{L}\cap Q_{kT}} \sum_{j \in \mathcal{L} \setminus Q_{(1+\delta)kT}} \!\!\!c_{i,j}(v_k^\delta(i)-v_k^\delta(j))^+  \\& \leq E(u_k^\delta, Q_{kT}) +\sum_{i \in \mathcal{L}\cap Q_{kT}} \underset{|i-j|\geq \delta kT/2}{\sum_{j \in \mathcal{L} }} c_{i,j}|v_k^\delta(i)-v_k^\delta(j)|\,.
\end{align*}
Hence, in order to show \eqref{ineq:comparison energy ukdeltavkdelta}, it remains to prove
\begin{align}\label{ineq:remaindervkdelta}
\sum_{i \in \BBB \mathcal{L} \cap Q_{kT}\EEE} \underset{|i-j|\geq \delta kT/2}{\sum_{j \in \mathcal{L} }} c_{i,j}|v_k^\delta(i)-v_k^\delta(j)|  \leq \frac{C_k^\delta}{\delta} (kT)^d\,,
\end{align}
\BBB where $C_k^\delta \to 0$ as $k\to+\infty$. \EEE
Using \eqref{ineq:vkdeltalipschitz}, {\rm (H2)}, and Lemma \ref{lemma:elementaryproperties}(v), we have
\begin{align*}
\sum_{i \in \mathcal{L}\cap Q_{kT}} \underset{|i-j|\geq \delta kT/2}{\sum_{j \in \mathcal{L} }} c_{i,j}|v_k^\delta(i)-v_k^\delta(j)| &\leq  \sum_{i \in \mathcal{L} \cap Q_{kT}} \underset{|i-j|\geq \delta kT/2}{\sum_{j \in \mathcal{L} }} c_{i,j}( CkT +|i-j|) \\&\leq \left(\frac{C}{\delta}+1\right)\sum_{i \in \mathcal{L}\cap Q_{kT}} \underset{|i-j|\geq \delta kT/2}{\sum_{j \in \mathcal{L} }} c_{i,j}|i-j| \\&\leq \frac{C}{\delta}  \#\left(\mathcal{L}\cap Q_{kT}\right)\max_{i \in \mathcal{L}} \underset{|i-j|\geq \delta kT/2}{\sum_{j \in \mathcal{L} }} c_{i,j}|i-j|  \leq \frac{C_k^\delta}{\delta}(kT)^d\,,
\end{align*}
where $C_k^\delta\to 0$ as $k \to +\infty$.
 This  yields \eqref{ineq:remaindervkdelta} and therefore the claim of Step 1.
\end{step}\\
\noindent\begin{step}{2}(Proof of '$\geq$') \BBB Let $\varepsilon>0$ and \EEE $u \colon \mathcal{L} \to  \mathbb{R}$ be such that \BBB $u(\cdot) - \langle\nu, \cdot \rangle \in \mathcal{A}_\mathrm{per}(Q_{T};\mathbb{R})$ \EEE and 
\begin{align*}
E(u,Q_{T}) \leq  \inf\left\{E(u,Q_{T}) \colon u \colon \mathcal{L} \to \mathbb{R}, u(\cdot) -\langle\nu, \cdot\rangle \BBB \in \mathcal{A}_{\mathrm{per}}(Q_{T};\mathbb{R})\EEE \right\} +\varepsilon\,.
\end{align*}
\BBB Fix $\delta>0$ and $S \in \mathbb{N}$ such that $S =kT$ for some $k\in \mathbb{N},\, k \gg 1$ and \BBB $\delta S \gg 1$\EEE. Since $u(\cdot) - \langle\nu, \cdot \rangle \in \mathcal{A}_{\mathrm{per}}(Q_{T};\mathbb{R})$, we have \EEE
\begin{align*}
E(u,Q_{T}(x_0))=E(u,Q_{T}) \text{ for all } x_0 \in T\mathbb{Z}^d
\end{align*}
and therefore
\begin{align}\label{eq:periodicity energy}
E(u,Q_{S})= \frac{S^d}{T^d}E(u,Q_{T})\,.
\end{align}
There exists a constant $C >0$ (we omit the dependence on $T$) such that, due to the fact that $u(\cdot)-\langle\nu,\cdot\rangle\BBB\in \mathcal{A}_{\mathrm{per}}(Q_{T};\mathbb{R})$,\EEE  there holds
\begin{align}\label{ineq:estimate kTcell}
\max_{i \in \mathcal{L}}|u(i) -\langle\nu,i\rangle | = \max_{i \in \mathcal{L}\cap Q_{T}} |u(i) -\langle\nu,i\rangle | \leq C_\varepsilon \,.
\end{align}
Let \BBB $\zeta_S \in C_c^\infty(\mathbb{R}^d;[0,1])$ \EEE be a cut-off function such that
\begin{align*}
\BBB\zeta_S(x) =1 \text{ for } x \in Q_{(1-3\delta)S}\,,\quad \mathrm{supp}\,\zeta_S(x) \subset  Q_{(1-2\delta)S}\,,\quad \text{ and }  ||\nabla \zeta_S ||_\infty\leq \frac{C}{\delta S}\,.\EEE
\end{align*}
Define $u_S \colon \mathcal{L} \to \mathbb{R}$ by
\begin{align*}
u_S(i) = \BBB \zeta_S(i)u(i) + (1-\zeta_S(i))\langle\nu,i\rangle\,.\EEE
\end{align*}
Then, $u_S(i) = \langle \nu,i\rangle$ for $i \in \mathcal{L}\setminus Q_{(1-\delta)S}$ and therefore
\begin{align}\label{ineq:uSmin}
 \inf\left\{E(u,Q_{S}) \colon u \colon \mathcal{L} \to \mathbb{R}, u(i) =\langle\nu, i\rangle  \text{ on }  \mathcal{L}\setminus Q_{(1-\delta)S} \right\} \leq E(u_S,Q_S)\,.
\end{align}
 For all $i,j \in \mathcal{L}$ there holds
\begin{align*}
\BBB u_S(i)-u_S(j) = \zeta_S(i)\left(u(i)-u(j)\right) + (1-\zeta_S(i))\langle \nu,i-j\rangle + (\zeta_S(i)-\zeta_S(j))(u(j)-\langle\nu,j\rangle)\,, \EEE
\end{align*}
which, together with \eqref{ineq:estimate kTcell},  implies for all $i,j \in \mathcal{L}$
\begin{align}\label{ineq:differencebound}
\begin{split}
(u_S(i)-u_S(j))^+ &\leq (u(i)-u(j))^+ + |i-j| + \BBB \frac{C}{\delta S}|u(j)-\langle\nu,j\rangle| |i-j|\EEE \\&\leq (u(i)-u(j))^+ +C |i-j| \,,
\end{split}
\end{align}
\BBB where we assume that $S\delta  \geq C_\varepsilon$ (we will first send  $k$ to $+\infty$, then $\delta$ to $0$, and finally $\varepsilon$ to $0$). \EEE
For all $i,j \in Q_{(1-3\delta)S}$ we have
\begin{align}\label{ineq:differenceboundinside}
(u_S(i)-u_S(j))^+ = (u(i)-u(j))^+\,.
\end{align}
Using \eqref{eq:periodicity energy}, \eqref{ineq:differencebound}, and \eqref{ineq:differenceboundinside}, we obtain 
\begin{align}\label{ineq:energyuSQS}
\begin{split}
E(u_S,Q_S) &\leq \sum_{i \in\mathcal{L} \cap Q_{S}} \sum_{j \in \mathcal{L}} c_{i,j} (u(i)-u(j))^+ +   C\sum_{i \in \mathcal{L}\cap Q_{(1-6\delta)S}}  \sum_{j \in \mathcal{L} \setminus Q_{(1-3\delta)S}} c_{i,j} |i-j| \\&\quad+C\sum_{i \in \mathcal{L} \cap Q_{S} \setminus Q_{(1-6\delta)S}} \sum_{j \in \mathcal{L}} c_{i,j}|i-j| \\&= \frac{S^d}{(kT)^d} E(u_k,Q_{kT})+  C \sum_{i \in \mathcal{L}\cap Q_{(1-6\delta)S}}  \sum_{j \in \mathcal{L} \setminus Q_{(1-3\delta)S}} c_{i,j} |i-j|\\&\quad+C\sum_{i \in \mathcal{L} \cap Q_{S} \setminus Q_{(1-6\delta)S}} \sum_{j \in \mathcal{L}} c_{i,j}|i-j|\,.
\end{split}
\end{align} 
 We show that 
\begin{align}\label{ineq:firstStep2per}
\sum_{i \in \mathcal{L}\cap Q_{(1-6\delta)S}}  \sum_{j \in \mathcal{L} \setminus Q_{(1-3\delta)S}} c_{i,j} |i-j| \leq \BBB C_S^\delta \EEE S^d\,,
\end{align}
and
\begin{align}\label{ineq:secondStep2per}
\sum_{i \in \mathcal{L} \cap Q_{S} \setminus Q_{(1-6\delta)S}} \sum_{j \in \mathcal{L}} c_{i,j}|i-j|  \leq C \delta S^{d}\,,
\end{align}
where $\BBB C_S^\delta \to 0$ \EEE as $S\to +\infty$ and $C>0$ is a universal constant. \EEE
Note that, due to Lemma \ref{lemma:Treduction}, \BBB since $\varepsilon>0$ is chosen arbitrary, \EEE from \eqref{ineq:firstStep2per} and \eqref{ineq:secondStep2per}  we obtain the claim of Step 2 by using \eqref{ineq:uSmin}, \eqref{ineq:energyuSQS}, dividing by $S^d$, $k \to +\infty$ and then $\delta \to 0$.

We first prove  \eqref{ineq:firstStep2per}. Note that, for $S$ big enough, due to {\rm (H2)} and Lemma \ref{lemma:elementaryproperties}(v), we have
\begin{align*}
\sum_{i \in \mathcal{L}\cap Q_{(1-6\delta)S}}  \sum_{j \in \mathcal{L} \setminus Q_{(1-3\delta)S}} c_{i,j} |i-j| & \leq \sum_{i \in \mathcal{L}\cap Q_{(1-6\delta)S}}  \underset{|i-j|\geq \delta S}{\sum_{j \in \mathcal{L} }}c_{i,j} |i-j| \\&\leq \#(\mathcal{L}\cap Q_S) \max_{i \in \mathcal{L}} \underset{|i-j|\geq \delta S}{\sum_{j \in \mathcal{L}}}c_{i,j}|i-j| \\&\leq \BBB C_S^\delta \EEE  S^d \,,
\end{align*}
where \BBB $C_S^\delta \to 0$ \EEE as $S \to +\infty$. Next, we show \eqref{ineq:secondStep2per}. Using {\rm (H2)}, and Lemma \ref{lemma:elementaryproperties}(v), we obtain
\begin{align*}
\sum_{i \in \mathcal{L} \cap Q_{S} \setminus Q_{(1-6\delta)S}} \sum_{j \in \mathcal{L}} c_{i,j}|i-j| &\leq \sum_{i \in \mathcal{L} \cap Q_{S} \setminus Q_{(1-6\delta)S}} \sum_{j \in \mathcal{L}} c_{i,j} |i-j|\\&\leq   \# (\mathcal{L}\cap Q_{S} \setminus Q_{(1-6\delta)S})\max_{i \in \mathcal{L}} \sum_{j \in \mathcal{L}} c_{i,j}|i-j|\\&\leq C\delta S^d\,.
\end{align*}
This is \eqref{ineq:secondStep2per} and hence the claim of Step 2.
\end{step}
\end{proof}

Let $\psi \colon \mathbb{R}^d \to [0,+\infty]$ be defined as the positively homogeneous function of degree one that for $\nu \in \mathbb{S}^{d-1}$ is defined by
\begin{align}\label{def:psi}
\psi(\nu) = \lim_{\delta \to 0} \lim_{S\to +\infty} \frac{1}{S^d}\inf\left\{E(u,Q^\nu_S) \colon u \colon \mathcal{L} \to \mathbb{R}, u(i)=\langle \nu,i\rangle \text{ on } \mathcal{L}\setminus Q^\nu_{(1-\delta)S}\right\} \,.
\end{align}

\BBB The function $\psi$ differs from the function $\phi$ in the domain where one calculates the energy. For the function $\phi$ we take the coordinate cube $Q_T$ whereas for $\nu$ we take the cube $Q^\nu_T$. 
\begin{remark}\label{rem: existence of the limits} The existence of the limits in \eqref{def:phi} and \eqref{def:psi} can be deduced from standard subadditivity arguments, see e.g.~\cite[Proposition 4.2]{AliCic}.
\end{remark}

\EEE

\begin{lemma}\label{lemma:propertiesofpsi} $\psi \colon \mathbb{R}^d \to [0,+\infty]$ satisfies the following properties:
\begin{enumerate}[label={\rm(\roman*)}]
\item There exists $C>0$ such that $\psi(\nu) \leq C|\nu|$ for all $\nu \in \mathbb{R}^d$\, ,
\item $\psi$ is a continuous function.
\end{enumerate}
\end{lemma}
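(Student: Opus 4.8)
The plan is to deduce both statements from facts already established, using the definition \eqref{def:psi} of $\psi$ together with Lemma~\ref{lemma:elementaryproperties}(i) and the convexity/homogeneity structure. For part (i), I would simply test the infimum in \eqref{def:psi} with the competitor $u(i)=\langle\nu,i\rangle$ itself (which trivially satisfies the boundary condition on $\mathcal{L}\setminus Q^\nu_{(1-\delta)S}$). By Lemma~\ref{lemma:elementaryproperties}(i) we have $E(\langle\nu,\cdot\rangle,Q^\nu_S)\le C|\nu|\,|(Q^\nu_S)_c|\le C|\nu|(S+2c)^d$, so dividing by $S^d$ and letting $S\to+\infty$ (and then $\delta\to0$) gives $\psi(\nu)\le C|\nu|$ for $\nu\in\mathbb{S}^{d-1}$, hence for all $\nu\in\mathbb{R}^d$ by the positive $1$-homogeneity. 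In particular $\psi$ is finite-valued.

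For part (ii), the key point is that $\psi$, being a positively $1$-homogeneous function, is continuous as soon as it is \emph{convex} and finite: a finite convex function on $\mathbb{R}^d$ is automatically locally Lipschitz, hence continuous, and together with (i) this gives global Lipschitz continuity exactly as in Remark~\ref{rem:continuity}. So the plan reduces to showing $\psi$ is convex. I would argue this the same way convexity of $\varphi$ is built into Definition~\ref{def:homogenized energy density}: one shows $\psi$ is subadditive, $\psi(\nu_1+\nu_2)\le\psi(\nu_1)+\psi(\nu_2)$, which combined with positive $1$-homogeneity yields convexity. Subadditivity follows by taking near-optimal competitors $u_1,u_2$ for $\psi(\nu_1),\psi(\nu_2)$ on cubes $Q^{\nu}_S$ oriented along $\nu=(\nu_1+\nu_2)/|\nu_1+\nu_2|$ with affine boundary data $\langle\nu_i,\cdot\rangle$, and using that $u_1+u_2$ has affine boundary data $\langle\nu_1+\nu_2,\cdot\rangle$ together with the subadditivity $E(u_1+u_2,A)\le E(u_1,A)+E(u_2,A)$ from Lemma~\ref{lemma:elementaryproperties}(ii).

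Alternatively — and this is likely the cleaner route given what precedes — by Lemma~\ref{lemma: per=aff} and Lemma~\ref{lemma:Treduction} applied in rotated coordinates, $\psi(\nu)$ coincides (up to the orientation issue addressed there) with a finite cell formula of the form $\inf\{E(u,Q^\nu_T):u(\cdot)-\langle\nu,\cdot\rangle\ \Lambda\text{-periodic}\}$ scaled by $T^{-d}$, which is manifestly an infimum of affine functions of $\nu$ over a fixed competitor class, hence concave... no: one must be careful, since the \emph{constraint} also depends on $\nu$. The safe and self-contained argument is therefore the subadditivity one above. The main obstacle I anticipate is purely bookkeeping: making the rotations $R^\nu$ and the cubes $Q^\nu_S$ for the three different directions $\nu_1,\nu_2,\nu_1+\nu_2$ compatible, i.e. controlling the mismatch between $Q^{\nu_1}_S$, $Q^{\nu_2}_S$ and $Q^{\nu}_S$ near the boundary so that the patched-together competitor still has the correct affine boundary values — this is the standard ``fattening the boundary layer'' estimate, and the error terms are controlled exactly as in the proof of Lemma~\ref{lemma: per=aff} via (H2) and Lemma~\ref{lemma:elementaryproperties}(v). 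Once subadditivity is in hand, convexity, local Lipschitz continuity, and hence continuity of $\psi$ are immediate.
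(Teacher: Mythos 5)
Part~(i) of your proposal matches the paper's argument exactly: test with the affine competitor and invoke Lemma~\ref{lemma:elementaryproperties}(i). No comment needed there.

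For part~(ii), there is a genuine gap in the subadditivity route. You propose to prove $\psi(\nu_1+\nu_2)\le\psi(\nu_1)+\psi(\nu_2)$ by adding near-optimal competitors $u_1,u_2$, and you recognize the cube--orientation mismatch as ``bookkeeping,'' but it is in fact a \emph{bulk} (volume-scale) obstruction, not a boundary-layer effect. The competitor $u_1+u_2$ must be tested in a cube $Q^\nu_{S'}$ oriented along $\nu=(\nu_1+\nu_2)/|\nu_1+\nu_2|$, whereas $u_1$ only equals $\langle\nu_1,\cdot\rangle$ outside $Q^{\nu_1}_{(1-\delta)S}$; to have $u_1+u_2$ admissible you need $Q^\nu_{(1-\delta')S'}\supset Q^{\nu_1}_{(1-\delta)S}\cup Q^{\nu_2}_{(1-\delta)S}$, which for generic (non-parallel) $\nu_1,\nu_2$ forces $S'\gtrsim\sqrt d\,S$. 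The extra contribution $E(u_1,Q^\nu_{S'}\setminus Q^{\nu_1}_S)$ is then of order $|\nu_1|\,|Q^\nu_{S'}\setminus Q^{\nu_1}_S|\sim CS^d$, a fixed fraction of $(S')^d$, so dividing by $(S')^d$ and sending $S\to\infty$ does \emph{not} kill it; you only obtain a weak inequality of the form $\psi(\nu_1+\nu_2)\le d^{-d/2}(\psi(\nu_1)+\psi(\nu_2))+C(|\nu_1|+|\nu_2|)$, which is not subadditivity. This is different in kind from the boundary-layer errors of Lemma~\ref{lemma: per=aff}, which are confined to an annulus of volume $O(\delta S^d)$ and vanish after $\delta\to0$. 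The natural remedy — tile $Q^\nu_{S'}$ with $T\mathbb{Z}^d$-translates of $Q^{\nu_1}_S$ — requires the $\nu_1$-aligned grid of cube centers to sit inside $T\mathbb{Z}^d$, which only works for rational $\nu_1$; extending to general $\nu$ by density then presupposes the very continuity you are trying to establish. Your ``alternative cleaner route'' via $\psi=\phi$ runs into the same circularity: Lemma~\ref{lemma:sureqaff} explicitly invokes Lemma~\ref{lemma:propertiesofpsi}(ii).

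The paper sidesteps all of this by never proving convexity of $\psi$ at this stage. Instead it proves a direct Lipschitz estimate $|\psi(\nu_1)-\psi(\nu_2)|\le C|\nu_1-\nu_2|$ for $\nu_1,\nu_2\in\mathbb S^{d-1}$: because the two directions are \emph{close}, the cubes $Q^{\nu_1}_S$ and $Q^{\nu_2}_{\tilde S}$ with $\tilde S=(1+C\eta)S$ nearly coincide, the volume mismatch is $O(\eta S^d)$, and after dividing by $\tilde S^d$ one obtains an error of size $O(\eta)=O(|\nu_1-\nu_2|)$. Convexity of $\psi$ then falls out \emph{a posteriori}, once $\psi=\phi$ is established and convexity of $\phi$ (whose competitor class is $\nu$-independent, Lemma~\ref{lemma:propertiesofphi}(iii)) is transferred over. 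If you want a self-contained proof of part~(ii), you should replace the subadditivity plan with this perturbative comparison of nearby directions.
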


\begin{proof} We divide the proof into two steps. We first prove {\rm (i)} and then {\rm (ii)}. Throughout the proofs let $1\ll S$.\\ \noindent\begin{step}{1}(Proof of {\rm (i)}) \BBB Let $\nu \in \mathbb{S}^{d-1}$; \EEE it suffices to prove
\begin{align*}
\psi(\nu) \leq C\,.
\end{align*}
The general case then follows by one-homogeneity.
 In order to prove {\rm (i)} we insert $u(i)=\langle\nu,i\rangle$ for all $i \in \mathcal{L}$ as a competitor in the cell formula. Using Lemma \ref{lemma:elementaryproperties}(i), we then have
\begin{align*}
E(u,Q^\nu_S) =E(\langle\nu,\cdot\rangle,Q^\nu_S) \leq C|\nu| |(Q_S)_c| \leq CS^d\,.
\end{align*}
Dividing by $S^d$ and letting $S \to +\infty$ yields the claim.
\end{step}\\
\noindent\begin{step}{2}(Proof of {\rm (ii)}) Due to the one-homogeneity, it suffices to consider the case where $\nu_1,\nu_2 \in \mathbb{S}^{d-1}$. Let $\eta>0$ and $\nu_1,\nu_2 \in \mathbb{S}^{d-1}$ be such that $|\nu_1-\nu_2| \leq \eta$. Our goal is to prove that there exists $C>0$ independent of $\nu_1$ and $\nu_2$ such that
\begin{align}\label{ineq:continuity}
|\psi(\nu_1) -\psi(\nu_2)| \leq C\eta\,.
\end{align}
We only prove 
\begin{align}\label{ineq:contonesided}
\psi(\nu_1) -\psi(\nu_2) \leq C\eta\, ,
\end{align}
since then \eqref{ineq:continuity} follows by exchanging $\nu_1$ and $\nu_2$ in \eqref{ineq:contonesided}. To this end let $\delta>0$ small enough, $S>0$ big enough, $u_1 \colon \mathcal{L} \to \mathbb{R}$ be such that $u_1(i) = \langle\nu_1,i\rangle$ on $\mathcal{L} \setminus Q^\nu_{(1-\delta)S}$  and
\begin{align}\label{ineq:psinu1}
\frac{1}{S^d}E(u_1,Q^{\nu_1}_S) \leq \phi(\nu_1) +\eta\,.
\end{align}
We assume that 
\begin{align}\label{ineq:u1infinityinsidenu1}
||u_1||_{L^\infty(Q^{\nu_1}_S)} \leq S\,.
\end{align}
If this were not the case, we consider 
\begin{align*}
\tilde{u}_1(i) = \begin{cases} (u_1(i) \wedge S) \vee (-S) &i \in Q^{\nu_1}_{2S}\,, \\
u_1(i) &\text{otherwise.}
\end{cases}
\end{align*}
Note that for $i,j \in Q^{\nu_1}_{2S}$, due to truncation, $(\tilde{u}_1(i)-\tilde{u}_1(j))^+\leq (u_1(i)-u_1(j))^+$, whereas in general there holds $ |\tilde{u}_1(i)-\tilde{u}_1(j)| \leq CS +|i-j|$. From this, using Lemma \ref{lemma:elementaryproperties}(v) and {\rm (H2)}, we deduce
\begin{align*}
E(\tilde{u}_1,Q^{\nu_1}_S) &= \sum_{i \in \mathcal{L}\cap Q^{\nu_1}_S} \sum_{j \in \mathcal{L}\cap Q^{\nu_1}_{2S}} c_{i,j}(\tilde{u}_1(i)-\tilde{u}_1(j))^+ + \sum_{i \in \mathcal{L}\cap Q^{\nu_1}_S} \sum_{j \in \mathcal{L}\setminus Q^{\nu_1}_{2S}} c_{i,j}(\tilde{u}_1(i)-\tilde{u}_1(j))^+\\&\leq \sum_{i \in \mathcal{L}\cap Q^{\nu_1}_S} \sum_{j \in \mathcal{L}\cap Q^{\nu_1}_{2S}} c_{i,j}(u_1(i)-u_1(j))^+ + C\sum_{i \in \mathcal{L}\cap Q^{\nu_1}_S} \underset{|i-j|\geq S/2}{\sum_{j \in \mathcal{L}}} c_{i,j}|i-j| \\&\leq E(u_1,Q^{\nu_1}_S) + C \#(\mathcal{L}\cap Q^{\nu_1}_{S}) \max_{i \in \mathcal{L}} \underset{|i-j|\geq S/2}{\sum_{j \in \mathcal{L}}} c_{i,j}|i-j|  \leq E(u_1,Q^{\nu_1}_S) + C_S S^d\,,
\end{align*}
where $C_S \to 0$ as $S \to \infty$. In particular $C_S \leq \eta$ for $S$ big enough. Hence, we can assume \eqref{ineq:u1infinityinsidenu1}.
 There exists $C>0$ such that for $\tilde{S}=(1+C\eta)S$ there holds $Q^{\nu_2}_{(1-\delta)\tilde{S}} \supset Q^{\nu_1}_{(1+\delta)S}$. We now define $u_2 \colon \mathcal{L} \to \mathbb{R}$ by 
\begin{align*}%\label{def:u2}
u_2(i) = \langle\nu_2-\nu_1,i\rangle +u_1(i)\,.
\end{align*}
First, note that $u_2(i) = \langle\nu_2,i\rangle$ for all $i \in \mathcal{L} \setminus Q^{\nu_2}_{(1-\delta)\tilde{S}}$ and therefore
\begin{align}\label{ineq:infadmissible}
\inf\left\{E(u,Q^{\nu_2}_{\tilde{S}}) \colon u \colon \mathcal{L} \to \mathbb{R}, u(i) =\langle\nu_2,i\rangle \text{ on } \mathcal{L} \setminus Q^{\nu_2}_{(1-\delta)\tilde{S}} \right\} \leq E(u_2,Q^{\nu_2}_{\tilde{S}})\,.
\end{align}
We claim that  \BBB
\begin{align}\label{ineq:u2u1error}
E(u_2,Q^{\nu_2}_{\tilde{S}}) \leq E(u_1Q^{\nu_1}_S) +  \frac{C_S^\delta}{\delta} S^d +C\eta S^d +C\delta S^d\,,
\end{align}
where $C_S^\delta \to 0$ as $S \to +\infty$. \EEE We postpone the proof of \eqref{ineq:u2u1error} and show first how it implies \eqref{ineq:contonesided}. Dividing \eqref{ineq:u2u1error} by $\tilde{S}^d$, letting $\tilde{S}$ (therefore also $S$) tend to $+\infty$, $\delta \to 0$, and using  \eqref{ineq:infadmissible} as well as \eqref{ineq:psinu1}, we get
\begin{align*}
\phi(\nu_2) \leq \phi(\nu_1) +C\eta \leq \phi(\nu_1) + C\eta\,.
\end{align*}
This is \eqref{ineq:contonesided}. We now prove \eqref{ineq:u2u1error}. Due to Lemma \ref{lemma:elementaryproperties}(ii), there holds
\begin{align}\label{eq:threetermsplit}
E(u_2,Q^{\nu_2}_{\tilde{S}})\leq  E(u_1,Q^{\nu_2}_{\tilde{S}})+E(\langle\nu_2-\nu_1,\cdot\rangle,Q^{\nu_1}_{\tilde{S}})\,.
\end{align}
Now, due to Lemma \ref{lemma:elementaryproperties}(i) and the fact that $\tilde{S}\leq 2S$, there holds
\begin{align}\label{ineq:nudiffbound}
E(\langle\nu_2-\nu_1,\cdot\rangle,Q^{\nu_1}_{\tilde{S}}) \leq C|\nu_2-\nu_1| S^d\leq C\eta S^d\,.
\end{align}
Next, we prove  \BBB
\begin{align}\label{ineq:u1insidenu2}
E(u_1,Q^{\nu_2}_{\tilde{S}}) \leq E(u_1,Q^{\nu_1}_{S}) + C\delta S^d + \frac{C_S^\delta}{\delta} S^d\,,
\end{align}
where $C_S^\delta \to 0$ as $S \to +\infty$. \EEE We use Lemma \ref{lemma:elementaryproperties}(iv), to obtain
\begin{align*}
E(u_1,Q^{\nu_2}_{\tilde{S}})= E(u_1,Q^{\nu_1}_{S}) + E(u_1,Q^{\nu_2}_{\tilde{S}}\setminus Q^{\nu_1}_S)\,.
\end{align*}
In order to prove \eqref{ineq:u1insidenu2} it suffices to prove 
\begin{align}\label{ineq:u1insidenu2second}
E(u_1,Q^{\nu_2}_{\tilde{S}}\setminus Q^{\nu_1}_S) \leq C\eta S^d + \frac{C_S^\delta}{\delta} S^d\,,
\end{align}
where $C_{S}^\delta \to 0$ as $S\to +\infty$.
To see this we write
\begin{align}\label{eq:u1split}
\begin{split}
E(u_1,Q^{\nu_2}_{\tilde{S}}\setminus Q^{\nu_1}_S)  &= \sum_{i \in \mathcal{L} \cap Q^{\nu_2}_{\tilde{S}}\setminus Q^{\nu_1}_S} \sum_{j \in \mathcal{L}\cap Q^{\nu_1}_{(1-\delta)S}} c_{i,j}(u_1(i)-u_1(j))^+ \\&\quad+ \sum_{i \in \mathcal{L} \cap Q^{\nu_2}_{\tilde{S}}\setminus Q^{\nu_1}_S} \sum_{j \in \mathcal{L}\setminus Q^{\nu_1}_{(1-\delta)S}} c_{i,j}(u_1(i)-u_1(j))^+\,.
\end{split}
\end{align}
To estimate the first term, note that due to \eqref{ineq:u1infinityinsidenu1}, we have $|u_1(i)-u_1(j)| \leq CS +|i-j|$, and therefore, up to changing $C$, using {\rm (H2)}, and Lemma \ref{lemma:elementaryproperties}(iv),  we get
\begin{align}\label{ineq:u1splitfirstterm}
\sum_{i \in \mathcal{L} \cap Q^{\nu_2}_{\tilde{S}}\setminus Q^{\nu_1}_S} \sum_{j \in \mathcal{L}\cap Q^{\nu_1}_{(1-\delta)S}} c_{i,j}(u_1(i)-u_1(j))^+ &\leq \frac{C}{\delta}\sum_{i \in \mathcal{L} \cap Q^{\nu_2}_{\tilde{S}}\setminus Q^{\nu_1}_S} \underset{|i-j|\geq \delta S/2}{\sum_{j \in \mathcal{L}}} c_{i,j}|i-j|\\&\leq \frac{C}{\delta}\#(\mathcal{L}\cap Q^{\nu_2}_{\tilde{S}_2}) \max_{i \in \mathcal{L}} \underset{|i-j|\geq \delta S/2}{\sum_{j \in \mathcal{L}}} c_{i,j}|i-j| \leq \frac{C_S^\delta}{\delta} S^d\,, \nonumber
\end{align} 
\BBB where $C_S^\delta \to 0$ as $S\to +\infty$. \EEE
To estimate the first term, we use the fact that $u_1(i) = \langle\nu_1,i\rangle$ on $\mathcal{L}\setminus Q^{\nu_1}_{(1-\delta)S}$,  and Lemma \ref{lemma:elementaryproperties} (i), to obtain
\begin{align*}
\sum_{i \in \mathcal{L} \cap Q^{\nu_2}_{\tilde{S}}\setminus Q^{\nu_1}_S} \sum_{j \in \mathcal{L}\setminus Q^{\nu_1}_{(1-\delta)S}} \!\!\!\!\!c_{i,j}(u_1(i)-u_1(j))^+ \leq E(\langle\nu_1,\cdot\rangle,Q^{\nu_2}_{\tilde{S}}\setminus Q^{\nu_1}_S) \leq C|\nu_1| |(Q^{\nu_2}_{\tilde{S}}\setminus Q^{\nu_1}_S)_c|\leq C\eta S^d\,.
\end{align*}
This together with \eqref{eq:u1split} and \eqref{ineq:u1splitfirstterm} implies \eqref{ineq:u1insidenu2second} which in turn, together with \eqref{eq:threetermsplit} and \eqref{ineq:nudiffbound} implies \eqref{ineq:u2u1error} and therefore the conclusion of Step 2.
\end{step}
\end{proof}

\begin{lemma}\label{lemma:propertiesofphi} $\phi \colon \mathbb{R}^d \to [0,+\infty]$ satisfies the following properties:
\begin{enumerate}[label={\rm(\roman*)}]
\item There exists $C>0$ such that $\phi(\nu) \leq C|\nu|$ for all $\nu \in \mathbb{R}^d$\, ,
\item $\phi$ is a positively homogeneous function of degree one,
\item $\phi$ is a \BBB convex function. In particular, $\phi$ is Lipschitz continuous. \EEE
\end{enumerate}
\end{lemma}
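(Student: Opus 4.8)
The plan is to establish the three properties in order, leveraging the work already done for $\psi$ and the elementary properties of $E$. For \emph{(i)}, I would proceed exactly as in Step~1 of the proof of Lemma \ref{lemma:propertiesofpsi}: by one-homogeneity it suffices to treat $\nu \in \mathbb{S}^{d-1}$, and then I insert the affine competitor $u(i) = \langle \nu, i\rangle$ into the cell formula \eqref{def:phi}. Since this competitor automatically satisfies the boundary condition $u(i) = \langle\nu,i\rangle$ on $\mathcal{L}\setminus Q_{(1-\delta)S}$ for every $\delta$, Lemma \ref{lemma:elementaryproperties}(i) gives $E(\langle\nu,\cdot\rangle, Q_S) \le C|\nu|\,|(Q_S)_c| \le C S^d$; dividing by $S^d$ and passing to the limit yields $\phi(\nu)\le C$. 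Property \emph{(ii)} is essentially built into the definition, since $\phi$ is declared to be the positively $1$-homogeneous extension of the values \eqref{def:phi}; one only needs to remark that $E(\lambda u, Q_S) = \lambda E(u,Q_S)$ and that the boundary datum scales accordingly (Lemma \ref{lemma:elementaryproperties}(ii)), so the formula \eqref{def:phi} is itself consistent with $1$-homogeneity.

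The substance is in \emph{(iii)}, convexity. Here I would use the identification $\phi = \psi \circ (\text{change of cube})$ together with the intermediate results already proven, namely Lemma \ref{lemma: per=aff} ($\phi_{\rm per} = \phi$) and Lemma \ref{lemma:Treduction} (the finite-cell formula for $\phi_{\rm per}$). The cleanest route is: by Lemma \ref{lemma: per=aff}, $\phi(\nu) = \phi_{\rm per}(\nu)$, and by \eqref{eq:lemmaTreduction} this equals $\frac{1}{T^d}\inf\{E(u,Q_T) : u(\cdot)-\langle\nu,\cdot\rangle \in \mathcal{A}_{\rm per}(Q_T;\mathbb{R})\}$. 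Now fix $\nu_1,\nu_2$ and $\lambda\in[0,1]$, take near-optimal competitors $u_1,u_2$ for $\nu_1,\nu_2$, and test the convex combination $u = \lambda u_1 + (1-\lambda)u_2$ for $\lambda\nu_1+(1-\lambda)\nu_2$: since $u(\cdot) - \langle\lambda\nu_1+(1-\lambda)\nu_2,\cdot\rangle = \lambda(u_1(\cdot)-\langle\nu_1,\cdot\rangle) + (1-\lambda)(u_2(\cdot)-\langle\nu_2,\cdot\rangle)$ is a sum of two $T$-periodic functions, hence $T$-periodic, it is an admissible competitor, and convexity of $u\mapsto E(u,Q_T)$ from Lemma \ref{lemma:elementaryproperties}(ii) gives $E(u,Q_T)\le \lambda E(u_1,Q_T) + (1-\lambda)E(u_2,Q_T)$. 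Taking infima yields $\phi(\lambda\nu_1+(1-\lambda)\nu_2)\le \lambda\phi(\nu_1)+(1-\lambda)\phi(\nu_2)$. Finally, a convex, positively $1$-homogeneous function that is bounded on $\mathbb{S}^{d-1}$ (by \emph{(i)}) is automatically Lipschitz — this is the same argument as in Remark \ref{rem:continuity}.

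The main obstacle, such as it is, is making sure the finite-cell representation \eqref{eq:lemmaTreduction} is available at this point so that the convex-combination test function argument goes through cleanly — if one instead tried to run the convexity argument directly on the asymptotic formula \eqref{def:phi} with the Dirichlet-type boundary condition on $Q_S$, the convex combination of two competitors would still satisfy the right boundary datum on $\mathcal{L}\setminus Q_{(1-\delta)S}$ (again because the affine parts combine correctly), so in fact the argument also works directly; I would present whichever is shorter, most likely the direct one on \eqref{def:phi}, and note that no truncation is needed since convexity of $E$ is exact. Everything else is a routine repackaging of Lemma \ref{lemma:elementaryproperties}.
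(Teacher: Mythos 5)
Your proposal is correct and, in the route you ultimately select (running the convex-combination argument directly on the finite-$S$ formula $\phi_S^\delta$ from \eqref{def:phi} and using Lemma \ref{lemma:elementaryproperties}(ii), then passing to the limit), it coincides exactly with the paper's Step~2; parts (i) and (ii) also match the paper's Step~1 verbatim. Your alternative route via Lemma \ref{lemma: per=aff} and Lemma \ref{lemma:Treduction} is logically available at this point in the paper and would also work, but, as you note, the direct argument is shorter and is the one the authors use.
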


\begin{proof} We divide the proof into two steps. Throughout the proofs let $1\ll S$.\\
\noindent\begin{step}{1}(Proof of {\rm (i)} and {\rm (ii})) 
 In order to prove {\rm (i)} we insert $u(i)=\langle\nu,i\rangle$ for all $i \in \mathcal{L}$ as a competitor in the cell formula. Using Lemma \ref{lemma:elementaryproperties}(i), we then have
\begin{align*}
E(u,Q^\nu_S) = E(\langle\nu,\cdot\rangle,Q^\nu_S) \leq C|\nu| |(Q_S)_c| \leq CS^d\,.
\end{align*}
Dividing by $S^d$ and letting $S \to +\infty$ yields the claim.
{\rm (ii)} follows by using Lemma \eqref{lemma:elementaryproperties}(ii) to obtain $E(\lambda u, Q_S) = \lambda E(u,Q_S)$ for all $\lambda >0$ and by noting that, given $\nu \in \mathbb{R}^d$, if $u \colon \mathcal{L} \to \mathbb{R}$ satisfies $u(i) = \langle\nu,i\rangle $ on $\mathcal{L}\setminus Q_{(1-\delta)S}$, then $\lambda u(i) = \langle\lambda\nu,i\rangle $ on $\mathcal{L}\setminus Q_{(1-\delta)S}$. Employing this in \eqref{def:phi} it is easy to see that $\phi$ is a \BBB positively homogeneous function of degree one\EEE.
\end{step}\\
\noindent\begin{step}{2}(Proof of {\rm (iii)}) \BBB We show that for every $S>0$ and $\delta>0$, $S\delta\gg 1$ the function  $\phi_S^\delta \colon \mathbb{R}^d \to [0,+\infty]$ given by
\begin{align}
\phi_S^\delta(\nu) := \frac{1}{S^d}\inf\left\{E(u,Q_{S}) \colon u\colon \mathcal{L}\to \mathbb{R}, u(i) =\langle\nu,i\rangle \text{ on } \mathcal{L} \setminus Q_{(1-\delta)S}\right\} 
\end{align}
is a convex function. Note that 
\begin{align*}
\phi(\nu) = \lim_{\delta \to 0} \lim_{S\to +\infty} \phi_S^\delta(\nu) \text{ for all } \nu \in \mathbb{R}^d\,.
\end{align*}
Thus, the convexity for $\phi_S^\delta$ also implies the convexity of $\phi$. This together with {\rm (i)} and {\rm (ii)} implies that $\phi$ is also Lipschitz continuous. Now we prove that $\phi_S^\delta$ is a convex function. Given $\lambda \in [0,1]$, $\nu_1, \nu_2 \in \mathbb{R}^d$, let $\varepsilon>0$ $u_1 \colon \mathcal{L} \to \mathbb{R}$, $u_2 \colon \mathcal{L}\to \mathbb{R}$ be such that $u_k(i)= \langle\nu_k,i\rangle$ on $\mathcal{L}\setminus Q_{(1-\delta)S}$  and 
\begin{align*}
E(u_k,Q_S)\leq  \phi_S^\delta(\nu_k)+\frac{1}{2}\varepsilon \text{ for } k=1,2\,.
\end{align*}
We have that $u(i):= \lambda u_1(i) +(1-\lambda) u_2$ is admissible for $\phi_S^\delta(\lambda\nu_1+(1-\lambda)\nu_2)$ and by Lemma \ref{lemma:elementaryproperties}(ii) we obtain
\begin{align*}
\phi_S^\delta(\lambda\nu_1+(1-\lambda)\nu_2) &\leq E(u,Q_S) = E(\lambda u_1 +(1-\lambda)u_2,Q_S)\leq \lambda E(u_1,Q_S) +(1-\lambda) E(u_2,Q_S) \\&\leq \lambda \phi_S^\delta(\nu_1)+ (1-\lambda)\phi_S^\delta(\nu_2) +\varepsilon\,.
\end{align*}
Since $\varepsilon>0$ is arbitrary. This yields the claim.
 \EEE
\end{step}
\end{proof}

The next Lemma shows that the asymptotic cell-formula  describing the surface energy density is equal to the asymptotic cell-formula with affine boundary conditions. \BBB In Lemma \ref{lemma:QnueqQ} and Lemma \ref{lemma:sureqaff} we use the following remark.

\begin{remark}\label{rem:densityofrationals} We point out that $\mathbb{S}^{d-1} \cap \mathbb{Q}^d$ is dense in $\mathbb{S}^{d-1}$. This follows from the fact that $\mathbb{Q}^{d-1}$ is dense in $\mathbb{R}^{d-1}$ and that the inverse of the stereographic projection $P_d \colon \mathbb{R}^{d-1} \to \mathbb{S}^{d-1} \setminus \{e_d\}$  is a rational and continuous function.
\end{remark}

\EEE

\begin{lemma}\label{lemma:QnueqQ} Let $\nu \in \mathbb{R}^{d}$. Then:
$\psi(\nu) =\varphi(\nu)$.
% \begin{align*}
% \psi(\nu) =\varphi(\nu)\,.
% \end{align*}
\end{lemma}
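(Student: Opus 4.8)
The plan is to compare the two asymptotic cell formulas: $\varphi$, defined with $\{0,1\}$-valued competitors and boundary datum $u_\nu$, and $\psi$, defined with $\mathbb{R}$-valued competitors and affine boundary datum $\langle\nu,\cdot\rangle$. Both are taken on the rotated cube $Q^\nu_S$, so the geometry is identical; only the competitor classes and boundary conditions differ. By one-homogeneity we may assume $\nu \in \mathbb{S}^{d-1}$. The first reduction is to note, via Lemma \ref{lemma:Requal01}, that in the formula for $\varphi$ we may equivalently minimize over $\mathbb{R}$-valued competitors with boundary datum $u_\nu$; conversely, $u_\nu$ is itself (up to the factor $1/S$, irrelevant in the limit) close to a truncation of the affine function, so the two problems should agree in the limit.

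First I would prove $\psi(\nu) \le \varphi(\nu)$. Take a near-optimal $u \colon \mathcal{L} \to \{0,1\}$ for the $\varphi$-problem on $Q^\nu_S$ with $u = u_\nu$ on $\mathcal{L} \setminus Q^\nu_{(1-\delta)S}$. The issue is that $u$ is bounded while the competitor for $\psi$ must equal $\langle\nu,\cdot\rangle$ near the boundary, which grows like $S$. I would rescale: consider $S u$ (or rather interpolate), but more cleanly, build a competitor for the $\psi$-problem on a slightly larger cube $Q^\nu_{\tilde S}$ with $\tilde S = (1+C\delta)S$ by gluing the affine function outside to $S u$ inside via a cutoff $\zeta_S$ with $\|\nabla\zeta_S\|_\infty \le C/(\delta S)$, exactly as in the proof of Lemma \ref{lemma: per=aff}, Step 2. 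The extra energy from the cutoff region and from long-range interactions crossing it is controlled by (H2) and Lemma \ref{lemma:elementaryproperties}(i),(v) to be $o(S^d) + C\delta S^d$; note that $S\cdot E(u, Q^\nu_S) = E(Su, Q^\nu_S)$ by Lemma \ref{lemma:elementaryproperties}(ii), and $E(u,Q^\nu_S)/S^{d-1} \approx \varphi(\nu)$, so $E(Su,Q^\nu_S)/S^d \approx \varphi(\nu)$; dividing by $\tilde S^d$ and letting $S \to \infty$, $\delta \to 0$ gives $\psi(\nu) \le \varphi(\nu)$.

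For the reverse inequality $\varphi(\nu) \le \psi(\nu)$, I would take a near-optimal $\mathbb{R}$-valued competitor $u$ for the $\psi$-problem, first truncating it (as in Lemma \ref{lemma:propertiesofpsi}, Step 2) so that $\|u\|_{L^\infty(Q^\nu_S)} \le CS$ at the cost of $o(S^d)$ in energy. Then apply the coarea formula (Lemma \ref{lemma:coarea}): writing $u^t = \chi_{\{u > t\}}$, one has $E(u, Q^\nu_S) = \int_{-\infty}^{\infty} E(u^t, Q^\nu_S)\,dt$. Because $u = \langle\nu,\cdot\rangle$ outside $Q^\nu_{(1-\delta)S}$, for $t$ in the range where the superlevel set is a genuine half-space-like set near the boundary, $u^t$ agrees with a translate $u_\nu(\cdot - t\nu/|\nu|^2\text{-type shift})$; by a change of variables and the fact that the relevant range of $t$ has length $\sim S$, one extracts some $t^*$ with $E(u^{t^*}, Q^\nu_S) \le \frac{1}{cS}\int E(u^t,\cdot)\,dt + (\text{boundary correction})$, where the boundary correction accounts for matching $u^{t^*}$ to $u_\nu$ on $\mathcal{L}\setminus Q^\nu_{(1-\delta)S}$ — this may require enlarging the cube slightly or absorbing the mismatch into a $C\delta S^d$ term. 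Dividing $E(u,Q^\nu_S)/S^d$ by the length-$S$ scaling converts the $S^d$-normalization into the $S^{d-1}$-normalization of $\varphi$; letting $S\to\infty$ then $\delta\to 0$ yields $\varphi(\nu) \le \psi(\nu)$.

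The main obstacle is the second inequality: making the coarea slicing argument produce a $\{0,1\}$ competitor that still satisfies the \emph{correct} boundary condition $u_\nu$ (not just some translate) on the right region, while keeping the bookkeeping of the $S^d$ versus $S^{d-1}$ normalizations consistent. The cleanest route is probably to slice only over $t$ in a window of width $\sim \delta S$ near the value $0$ (where $u^t$ is close to $u_\nu$ in the boundary layer because $u=\langle\nu,\cdot\rangle$ there), average-choose a good $t^*$ from that window, and correct the boundary layer mismatch by the usual cutoff, absorbing the error into the $\delta \to 0$ limit; continuity of $\psi$ (Lemma \ref{lemma:propertiesofpsi}) and of $\varphi$ (Remark \ref{rem:continuity}) then let one pass to the limit cleanly.
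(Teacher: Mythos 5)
Your plan for the inequality $\psi(\nu)\le\varphi(\nu)$ fails for a quantitative reason you gloss over. You write that the cutoff argument works ``exactly as in the proof of Lemma~\ref{lemma: per=aff}, Step 2,'' with error ``$o(S^d)+C\delta S^d$.'' But in Lemma~\ref{lemma: per=aff}, Step~2, the crucial estimate \eqref{ineq:estimate kTcell} gives $\max_i|u(i)-\langle\nu,i\rangle|\le C_\varepsilon$, a \emph{constant independent of~$S$}, because $u$ is $T$-periodic modulo the affine part; the cross term in the cutoff, $(\zeta_S(i)-\zeta_S(j))(u(j)-\langle\nu,j\rangle)$, is then of size $\frac{C_\varepsilon}{\delta S}|i-j|$, which is harmless. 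In your construction the interior function is $S(u-\tfrac12)$, which takes values in $\{-S/2, S/2\}$, while $\langle\nu,\cdot\rangle$ is genuinely linear; on the lateral faces of $Q^\nu_S$ (where $\langle\nu,\cdot\rangle$ ranges over $(-S/2,S/2)$ but $S(u-\tfrac12)=\pm S/2$) the mismatch is of order~$S$, so the cross term is of order $\frac{1}{\delta}|i-j|$. Over the cutoff shell of volume $\sim\delta S^d$ this produces an extra energy of order $S^d$, not $o(S^d)$, so the inequality is lost. This is precisely the reason the paper uses a \emph{two-scale tiling} in Step~1 of its proof (and first restricts to $\nu\in\mathbb{S}^{d-1}\cap\mathbb{Q}^d$, so that the tiling respects the lattice periodicity): it tiles a cube of size $S_2\gg S_1$ with $S_1$-scaled copies $S_1(u_1(\cdot-z)-\tfrac12)+\langle\nu,z\rangle$, which match \emph{exactly} across adjacent tile boundaries because of the specific boundary condition $\chi_{\{\langle\nu,\cdot\rangle>tS\}}$, and only mismatch at the outermost layer, whose cost is $\sim S_1^2 S_2^{d-1}$ and vanishes after dividing by $S_2^d$ and letting $S_2\to\infty$ before $S_1\to\infty$.

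The reverse inequality has a parallel problem. Slicing only over a narrow window $t\in(-\eta,\eta)$ and choosing a good $t^*$ gives $E(u^{t^*S},Q^\nu_S)\le \frac{S^{d-1}}{2\eta}(\psi(\nu)+\varepsilon)$, which carries an unavoidable factor $1/(2\eta)$; with $\eta\sim\delta\to 0$ this diverges, and no amount of boundary-layer correction (costing $O(\delta S^{d-1})$) can cancel it. The paper instead introduces the \emph{family} $\varphi_t$ (boundary datum $\chi_{\{\langle\nu,i\rangle>tS\}}$, $t\in(-\tfrac12,\tfrac12)$), slices over the \emph{full} window $(-\tfrac{1-\delta}{2},\tfrac{1-\delta}{2})$ — which has length close to $1$, so the average carries coefficient $1$ — to obtain $\int_{-1/2}^{1/2}\varphi_t(\nu)\,\mathrm{d}t\le\psi(\nu)$, and then combines this with the Step~1 bound $\varphi_t\ge\psi$ (valid for \emph{every}~$t$) to conclude $\varphi_t=\psi$ a.e. A separate Step~3 then proves $t\mapsto\varphi_t(\nu)$ is Lipschitz, hence constant, so in particular $\varphi_0=\varphi=\psi$. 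Your sketch contains neither the family $\varphi_t$ nor the Lipschitz-in-$t$ argument; without these, picking a single $t^*$ close to~$0$ cannot recover the correct boundary datum $u_\nu$ without a nonvanishing loss.

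In short, the overall strategy (coarea in one direction, gluing a rescaled competitor in the other, continuity to reduce to rational directions) is the right one, but the two concrete steps you propose both fail because the mismatch between the rescaled spin configuration and the affine datum is $O(S)$ rather than $O(1)$; the paper's tiling construction, the averaging over the full $t$-window, and the Lipschitz continuity of $t\mapsto\varphi_t$ are precisely the devices needed to dispose of this.
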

\begin{proof} Due the fact that both $\psi$ and \BBB $\varphi$ \EEE are positively homogeneous functions of degree one, it suffices to consider the case where $\nu \in \mathbb{S}^{d-1}$. Furthermore, since both functions are continuous, see Lemma \ref{lemma:propertiesofpsi}(ii) and Remark \ref{rem:continuity}, it suffices to prove the claim for $\nu \in \mathbb{S}^{d-1} \cap \mathbb{Q}^d$. For each such vector we can find $\nu_1,\ldots,\nu_{d-1} \in \mathbb{S}^{d-1} \cap \mathbb{Q}^d$ such that the set $\{\nu_1,\ldots,\nu_{d-1},\nu\}$ forms an orthonormal basis of $\mathbb{R}^d$. Then, there exists $\lambda \in \mathbb{N}$ such that
\begin{align}\label{eq:proplambda}
\lambda \nu_n = z_n \text{ for some } z_n \in \mathbb{Z}^d \text{  for all } n \in \{1,\ldots,d\}\,.
\end{align}
\BBB For $t \in (-1/2,1/2)$ we define an auxiliary function $\varphi_t \colon \mathbb{R}^d \to [0,+\infty)$ and for $\nu \in \mathbb{S}^{d-1}$  given by
\begin{align}\label{def:varphit}
\begin{split}
\varphi_t(\nu):= \lim_{\delta \to 0} \lim_{S\to +\infty} \frac{1}{S^{d-1}}\inf\big\{E(u,Q^\nu_S) \colon u \colon \mathcal{L}\to \{0,1\},  u(i)=\chi_{\{\langle\nu,i\rangle >tS\}} \text{ on } \mathcal{L} \setminus Q^\nu_{(1-\delta)S} \big\}\,.
\end{split}
\end{align}
Note that  $\varphi_0(\nu)=\varphi(\nu)$. \\
\noindent \begin{step}{1}($\varphi_t\geq \psi$ for all $t$) We show that for all $t\in (-1/2,1/2)$ we have 
\begin{align}\label{ineq:varphitpsi}
\varphi_t(\nu) \geq \psi(\nu)\,.
\end{align} \EEE
To this end let $\{\nu_1,\ldots,\nu_{d-1},\nu_d=\nu\} \subset \mathbb{S}^{d-1}\cap \mathbb{Q}^d$ be an orthonormal basis as previously described and let $1 \ll S_1 \ll S_2$. We assume that $S_1=\lambda T$, where $\lambda$ satisfies \eqref{eq:proplambda} and $T$ is given by {\rm (H1)}. Note that if $\lambda$ satisfies \eqref{eq:proplambda}, also $k\lambda$ satisfies \eqref{eq:proplambda} and therefore we can find a sequence $S_k=k \lambda T$ such that $S_k \to +\infty$ of the desired form. The existence of the limit in definition \eqref{def:varphi} of $\varphi$ permits us to assume that $S$ is of the specific form. \BBB Let $t\in (-1/2,1/2),\delta >0$ \EEE and $u_1 \colon \mathcal{L} \to \{0,1\}$ be such that \BBB $u_1(i)=\chi_{\{\langle\nu,i\rangle>tS\}}$ on $\mathcal{L} \setminus Q^\nu_{(1-\delta)S_1} $ \EEE and \BBB
\begin{align}\label{eq:u1minimum}
\begin{split}
E(u_1,Q^\nu_{S_1}) \leq  \inf\big\{E(u,Q^\nu_{S_1}) \colon u \colon \mathcal{L}\to \{0,1\},  u(i)=\chi_{\{\langle\nu,i\rangle>tS\}}\text{ on } \mathcal{L} \setminus Q^\nu_{(1-\delta)S_1}  \big\}+1\,.
\end{split}
\end{align} \EEE
Due to the assumption on $S_1$ and Lemma \ref{lemma:elementaryproperties}(vi), we have 
\begin{align}\label{eq:energyu1translate}
E(u_1(\cdot-z),Q^\nu_{S_1}(z))=E(u_1,Q^\nu_{S_1}) \text{ for all } z=\lambda T \sum_{n=1}^d k_n \nu_n, k \in \mathcal{L}\,.
\end{align}
Set (omitting the dependence on $S_1$ and $S_2$)
\begin{align*}
\BBB\mathcal{Z} := \left\{z =S_1\sum_{n=1}^d k_n \nu_n \colon k \in \mathbb{Z}^d, Q^\nu_{S_1}(z) \subset Q^\nu_{(1-\delta)S_2}  \right\}\,.\EEE
\end{align*}
We define $u_2 \colon \mathcal{L} \to \mathbb{R}$ by
\begin{align}\label{def:u2}
u_2(i) = \begin{cases} S_1\left(u_1(i-z)-\frac{1}{2}\right) + \langle \nu, z\rangle &\text{if } z \in \mathcal{Z}, i \in Q^\nu_{S_1}(z)\,,\\
\langle \nu,i\rangle &\text{otherwise.}
\end{cases}
\end{align}
We  claim that 
\begin{align}\label{ineq:u2diff}
|u_2(i)-u_2(j)| \leq C( S_1 + |i-j|)\,.
\end{align}
We postpone the proof of \eqref{ineq:u2diff} to the end of Step 1.
By the definition of $u_2$, it is clear that
\begin{align}\label{ineq:u2admissible}
\inf\left\{E(u,Q^\nu_{S_2}) \colon u \colon \mathcal{L} \to \mathbb{R}, u(i)=\langle\nu,i\rangle \text{ on } \mathcal{L} \setminus Q^\nu_{(1-\delta)S_2}\right\}\leq E(u_2,Q^\nu_{S_2})\,.
\end{align}
It remains to show that 
\begin{align}\label{ineq:u2u1}
E(u_2,Q^\nu_{S_2}) \leq \frac{S_2^d}{S_1^{d-1}}E(u_1,Q^\nu_{S_1}) + C\delta S_2^d+CS_1^2S_2^{d-1}+ \frac{S_2^d}{\delta}C_{S_1}^\delta\,,
\end{align}
where $C_{S_1}^\delta \to 0$ as $S_1 \to +\infty$.
In fact, once we have shown \eqref{ineq:u2u1}, Step 1 follows from \eqref{ineq:u2admissible} and \eqref{eq:u1minimum} by dividing by $S_2^d$ and letting first $S_2\to+\infty$, then $S_1\to+\infty$, and finally $\delta \to 0$. We are left to prove \eqref{ineq:u2u1}. \BBB In order to prove \eqref{ineq:u2u1} we introduce 
\begin{align}\label{def:r1r2}
\begin{split}
&r_1 = r_1( S_1,S_2,\delta)=(1-\delta)S_2- 3\sqrt{d}S_1\,, \\& r_2 = r_2(S_1,S_2,\delta)=(1-\delta)S_2+ 3\sqrt{d}S_1\,.
\end{split}
\end{align}
 We  use Lemma \ref{lemma:elementaryproperties}{\rm (iv)} to obtain 
\begin{align}\label{eq:u2u1splitestimate}
E(u_2,Q^\nu_{S_2})= E(u_2,Q^\nu_{r_1}) +  E(u_2,Q^\nu_{r_2}\setminus Q^\nu_{r_1}) + E(u_2,Q^\nu_{S_2} \setminus Q^\nu_{r_2})
\end{align}
and we estimate the \BBB three terms \EEE on the right hand side separately. We claim that
\begin{align}\label{ineq:firstrhs}
 E(u_2,Q^\nu_{r_1})  \leq \frac{S_2^d}{S_1^{d-1}}E(u_1,Q^\nu_{S_1}) + \frac{S_2^d}{\delta}C_{S_1}^\delta \,,
\end{align} 
where $C_{S_1}^\delta \to 0$ as $S_1\to +\infty$.
\EEE
\BBB Indeed, if $i \in Q^\nu_{S_1}(z)$  such that $z= S_1\sum_{n=1}^dk_n \nu_n\in \mathcal{Z}$ and $Q^\nu_{S_1}(z) \cap Q^\nu_{r_1} \neq \emptyset$ then \EEE
\begin{align}\label{eq:u2u1shifted}
u_2(i) = S_1\left(u_1(i-z)-\frac{1}{2}\right)+\langle \nu,z\rangle \text{ for all } i\in Q^\nu_{(1+\delta)S_1}(z)\,.
\end{align}
Due to \eqref{def:u2}, this is clearly true for $i \in Q^\nu_{S_1}(z)$, while if $i \in Q^\nu_{(1+\delta)S_1}(z) \setminus Q^\nu_{S_1}(z)$ there exists $z^\prime = S_1\sum_{n=1}^dk_n\nu_n \in \mathcal{Z}$, $k \in \mathbb{Z}^d$ such that  $||k-k^\prime||_\infty=1$, $Q^\nu_{S_1}(z^\prime) \subset Q^\nu_{r_1}$, and $i \in Q^\nu_{S_1}(z^\prime)\setminus Q^\nu_{(1-\delta)S_1}(z^\prime)$. Then, due to the boundary conditions of $u_1$, we have \BBB
\begin{align*}
u_2(i) &= S_1\left(u_1(i-z^\prime)-\frac{1}{2}\right) +\langle\nu,z^\prime\rangle =  S_1\left(\chi_{\{\langle\nu,i-z^\prime\rangle>tS\}}-\frac{1}{2}\right) +\langle\nu,z \rangle +\langle\nu,z^\prime-z \rangle \\& = S_1\left(\chi_{\{\langle\nu,i-z\rangle>tS\}}-\frac{1}{2}\right) +\langle\nu,z \rangle = S_1\left(u_1(i-z)-\frac{1}{2}\right) +\langle\nu,z \rangle \,.
\end{align*} \EEE
Here, the third equality follows, from the fact that $||k-k^\prime||_\infty=1$ and therefore $\langle\nu,z^\prime-z\rangle \in \{-S_1,0,S_1\} $. To obtain the previous equality, we distinguish the following two cases: \BBB
\begin{align*}
&\langle\nu,z^\prime-z\rangle = \pm S_1 \implies \chi_{\{\langle\nu,i-z^\prime\rangle>tS\}}-\chi_{\{\langle\nu,i-z\rangle>tS\}}=\mp 1\text{ and }  \\&\langle\nu,z^\prime-z\rangle = 0\implies \chi_{\{\langle\nu,i-z^\prime\rangle>tS\}} =\chi_{\{\langle\nu,i-z\rangle>tS\}}\,. 
\end{align*} \EEE
Now \eqref{eq:u2u1shifted} together with \eqref{eq:energyu1translate} implies for $z \in \mathcal{Z}$ such that $Q^\nu_{S_1}(z) \cap Q^\nu_{r_1} \neq \emptyset$
\begin{align}\label{ineq:Eu2split}
\nonumber
E(u_2,Q^\nu_{S_1}(z)) &= \sum_{i \in \mathcal{L} \cap Q^\nu_{S_1}(z)} \sum_{j \in \mathcal{L}\cap Q^\nu_{(1+\delta)S_1}(z)} c_{i,j}(u_2(i)-u_2(j))^+\\&\quad+ \sum_{i \in \mathcal{L}\cap Q^\nu_{S_1}(z)} \sum_{j \in \mathcal{L}\setminus Q^\nu_{(1+\delta)S_1}(z)} c_{i,j}(u_2(i)-u_2(j))^+  \\&\leq S_1E(u_1,Q^\nu_{S_1}(z))+ \sum_{i \in \mathcal{L}\cap Q^\nu_{S_1}(z)} \sum_{j \in \mathcal{L}\setminus Q^\nu_{(1+\delta)S_1}(z)} c_{i,j}|u_2(i)-u_2(j)| \nonumber\\&= S_1 E(u_1,Q^\nu_{S_1})+ \sum_{i \in \mathcal{L} \cap Q^\nu_{S_1}(z)} \sum_{j \in \mathcal{L}\setminus Q^\nu_{(1+\delta)S_1}(z)} c_{i,j}|u_2(i)-u_2(j)|\,.\nonumber
\end{align}
We estimate the second term on the right hand side of \eqref{ineq:Eu2split} to obtain \eqref{ineq:firstrhs}. Note for $i \in Q^\nu_{S_1}(z)$, $j \in \mathcal{L}\setminus Q^\nu_{(1+\delta)S_1}$ we have $|i-j|\geq \delta S_1/2$  and thus, due to \eqref{ineq:u2diff}, we obtain 
\begin{align}\label{ineq:differenceestimateu2}
|u_2(i)-u_2(j)|\leq \frac{C}{\delta}|i-j| \text{ for all } i \in Q^\nu_{S_1}(z), j \in \mathcal{L}\setminus Q^\nu_{(1+\delta)S_1}(z)
\end{align}
for some $C>0$ independent of $S_1$, $S_2$ and $\delta$. Now, we get
\begin{align*}
\sum_{i \in Q^\nu_{S_1}(z)} \sum_{j \in \mathcal{L}\setminus Q^\nu_{(1+\delta)S_1}(z)} c_{i,j}|u_2(i)-u_2(j)| &\leq \frac{C}{\delta} \sum_{i \in \mathcal{L}\cap Q^\nu_{S_1}(z)} \underset{|i-j|\geq \delta S_1/2}{\sum_{j \in\mathcal{L}}} c_{i,j}|i-j| \\&\leq \#(\mathcal{L}\cap Q^\nu_{S_1}(z)) \max_{i \in \mathcal{L}} \underset{|i-j|\geq \delta S_1/2}{\sum_{j \in\mathcal{L}}} c_{i,j}|i-j|  \leq \frac{C_{S_1}^\delta}{\delta} S_1^d\,,
\end{align*}
where $C_{S_1}^\delta \to 0$ as $S_1\to +\infty$.
Hence, noting that for $z,z^\prime \in \mathcal{Z}$ such that $z\neq z^\prime$, we have $Q^\nu_{S_1}(z) \cap Q^\nu_{S_1}(z^\prime)=\emptyset$ and therefore $\#\mathcal{Z} \leq S_2^d/S_1^d$, we get
\begin{align*}
E(u_2, Q^\nu_{r_1}) \leq \underset{Q^\nu_{S_1}(z) \cap Q^\nu_{r_1} \neq \emptyset}{\sum_{z \in \mathcal{Z}}}\BBB E(u_2,Q^\nu_{S_1}(z))\EEE &\leq \#\mathcal{Z}(S_1 E(u_1,Q^\nu_{S_1}) +\frac{C_{S_1}^\delta}{\delta} S_1^d)  \\&\leq \frac{S_2^d}{S_1^{d-1}}E(u_1,Q^\nu_{S_1}) + \frac{C_{S_1}^\delta}{\delta} S_2^d \,,
\end{align*}
where $C_{S_1}^\delta \to 0$ as $S_1\to +\infty$.
This is \eqref{ineq:firstrhs}. Next, we prove 
\BBB 
\begin{align}\label{ineq:secondrhs}
E(u_2,Q^\nu_{r_2} \setminus Q^\nu_{r_1} ) \leq C S_1^2 S_2^{d-1}\,.
\end{align}
We use \eqref{ineq:u2diff} to obtain
\begin{align*}
E(u_2,Q^\nu_{r_2} \setminus Q^\nu_{r_1} ) = \sum_{i \in Q^\nu_{r_2} \setminus Q^\nu_{r_1}}\sum_{j \in \mathcal{L}}c_{i,j}(u_2(i)-u_2(j))^+ \leq CS_1\#(\mathcal{L}\cap Q^\nu_{r_2} \setminus Q^\nu_{r_1}) \max_{i \in \mathcal{L}} \sum_{j \in \mathcal{L}} c_{i,j} |i-j|\,,
\end{align*}
where we used that, owing to {\rm (L1)}, we have $|i-j|\geq c$ if $i\neq j$.
Using Lemma \ref{lemma:elementaryproperties}(v), {\rm (H2)}, and \eqref{def:r1r2} we have that $\#(\mathcal{L}\cap Q^\nu_{r_2} \setminus Q^\nu_{r_1})\leq CS_1S_2^{d-1}$ we obtain \eqref{ineq:secondrhs}.
As for the third term on the right hand side we prove
\begin{align}\label{ineq:thirdrhs}
E(u_2,Q^\nu_{S_2} \setminus Q^\nu_{r_2}) \leq C\delta S_2^d + \frac{C_{S_1}^\delta}{\delta}S_2^d\,,
\end{align}
where $C_{S_1}^\delta \to 0$ as $S_1 \to +\infty$. To this end we
 split the summation over $j$ to obtain
\begin{align}\label{eq:summsplitE}
\begin{split}
E(u_2,Q^\nu_{S_2} \setminus Q^\nu_{r_2})  &= \sum_{i \in \mathcal{L} \cap Q^\nu_{S_2} \setminus Q^\nu_{r_2}} \underset{|i-j|\leq S_1\delta/2}{\sum_{j \in \mathcal{L} }} c_{i,j} (u_2(i)-u_2(j))^+\\&+\sum_{i \in \mathcal{L} \cap Q^\nu_{S_2} \setminus Q^\nu_{r_2}} \underset{|i-j|>S_1\delta/2}{\sum_{j \in \mathcal{L}}} c_{i,j} (u_2(i)-u_2(j))^+\,.
\end{split}
\end{align}
Let us note first that  $ Q^\nu_{S_2} \setminus Q^\nu_{r_2}\subset Q^{\nu}_{S_2} \setminus Q^\nu_{(1-2\delta)S_2}$  and therefore, due to \ref{lemma:elementaryproperties}(v), we have
\begin{align}\label{ineq:cardQsetQ}
 \#(\mathcal{L}\cap Q^\nu_{S_2} \setminus Q^\nu_{r_2}) \leq C\delta S_2^d\,.
\end{align}
Now, for the first term on the right hand side of \eqref{eq:summsplitE}, employing \eqref{def:u2}, we note that $u_2(i)=\langle\nu,i\rangle$ and $u_2(j)=\langle\nu,j\rangle$. Hence, 
\begin{align}\label{ineq:firstsummEu2}
\begin{split}
 \sum_{i \in \mathcal{L} \cap Q^\nu_{S_2} \setminus Q^\nu_{r_2}} \underset{|i-j|\leq S_1\delta/2}{\sum_{j \in \mathcal{L} }} c_{i,j} |u_2(i)-u_2(j)| \leq \#(\mathcal{L}\cap Q^\nu_{S_2} \setminus Q^\nu_{r_2}) \max_{i \in \mathcal{L}} \sum_{j \in \mathcal{L}} c_{i,j}|i-j| \leq C\delta S_2^d\,,
 \end{split}
\end{align}
where we used {\rm (H2)} and \eqref{ineq:cardQsetQ}. For the second term on the right hand side of \eqref{eq:summsplitE}, we use \eqref{ineq:u2diff} and \eqref{ineq:cardQsetQ}  to obtain
\begin{align}\label{ineq:secondsummEu2}
\begin{split}
\sum_{i \in \mathcal{L} \cap Q^\nu_{S_2} \setminus Q^\nu_{r_2}} \underset{|i-j|>S_1\delta/2}{\sum_{j \in \mathcal{L}}} c_{i,j} |u_2(i)-u_2(j)| &\leq \frac{C}{\delta} \sum_{i \in \mathcal{L} \cap Q^\nu_{S_2} \setminus Q^\nu_{r_2}} \underset{|i-j|>S_1\delta/2}{\sum_{j \in \mathcal{L}}} c_{i,j}|i-j| \\&\leq \frac{C}{\delta} \#(\mathcal{L}\cap Q^\nu_{S_2} \setminus Q^\nu_{r_2})\max_{i\in \mathcal{L}}\underset{|i-j|>S_1\delta/2}{\sum_{j \in \mathcal{L}}} c_{i,j}|i-j| \\& \leq CS_2^d C_{S_1}^\delta\,,
\end{split}
\end{align}
where $C_{S_1}^\delta \to 0$ as $S_1\to +\infty$. Inequality \eqref{ineq:secondrhs} follows from \eqref{eq:summsplitE}, \eqref{ineq:firstsummEu2}, and \eqref{ineq:secondsummEu2}. Now \eqref{eq:u2u1splitestimate}, \eqref{ineq:firstrhs}, \eqref{ineq:secondrhs}, \eqref{ineq:thirdrhs} give \eqref{ineq:u2u1}.
 To conclude Step 1, it remains to prove \eqref{ineq:u2diff}.  There are four cases to consider:
\begin{itemize}
\item[(a)]$i=i_0 +z$, $j=j_0+z$ $i_0 \in Q^\nu_{S_1},j_0 \in Q^\nu_{S_1}$, $z \in \mathcal{Z}$;
\item[(b)] $i=i_0 +z$, $j=j_0+z^\prime$ $i_0 \in Q^\nu_{S_1}(z),j_0 \in Q^\nu_{S_1}(z^\prime)$, $z,z^\prime \in \mathcal{Z}$;
\item[(c)] $i=i_0 +z$,  $i_0 \in Q^\nu_{S_1}(z)$, $z \in \mathcal{Z}$ $j_0 \notin Q^\nu_{S_1}(z^\prime)$ for any $z^\prime \in \mathcal{Z}$;
\item[(d)] $i \notin Q^\nu_{S_1}(z)$ for any $z \in \mathcal{Z}$ and $j \notin Q^\nu_{S_1}(z^\prime)$ for any $z^\prime \in \mathcal{Z}$.
\end{itemize}
\noindent\emph{Case }{\rm (a)}: This case follows since $\|u_1\|_{L^\infty(Q^\nu_{S_1})}\leq 1$. \\
\noindent \emph{Case }{\rm (b)}:  Note that in the case where $ i = i_0 +z$, $j=j_0 +z'$ for some $i_0,j_0 \in Q^\nu_{S_1}$ and for some $z,z^\prime \in \mathcal{Z}$, we have 
\begin{align*}
|u_2(i)-u_2(j)| \leq |\langle \nu,z-z^\prime\rangle| +CS_1 \leq  |\langle \nu,z+i_0-z^\prime-j_0\rangle| + |i_0-j_0| +CS_1 \leq |i-j| + CS_1
\end{align*}
and therefore \eqref{ineq:u2diff} holds true.\\
\noindent \emph{Case }{\rm (c)}:  Note that in the case where $ i = i_0 +z$, \BBB $i_0 \in Q^\nu_{S_1}$,\EEE $z \in \mathcal{Z}$ and $j \notin Q^\nu_{S_1}(z)$ for any $z \in \mathcal{Z}$, we have
\begin{align*}
|u_2(i)-u_2(j)|  \leq C S_1 +  |\langle\nu,z-j\rangle| \leq C S_1 +  |\langle\nu,i-j\rangle|  +|i_0| \leq C S_1 + |i-j|\,.
\end{align*}
 Also here \eqref{ineq:u2diff} holds true.\\ 
\noindent  \emph{Case }{\rm (d)}: In this case $u_2(i)=\langle\nu,i\rangle $ and $u_2(j)=\langle\nu,j\rangle$ and therefore \eqref{ineq:u2diff} holds true. This shows \eqref{ineq:u2diff} in general.
\end{step}\\
\noindent \begin{step}{2}($\varphi_t=\psi$ for almost all $t$)  Given $\delta>0$, $t\in (-1/2,1/2)$ and $S \gg 1$ and set
\begin{align}\label{def:varphitSdelta}
\varphi(t,S,\delta):= \frac{1}{S^{d-1}}\inf\big\{E(u,Q^\nu_S) \colon u \colon \mathcal{L}\to \{0,1\},  u(i)=\chi_{\{\langle\nu,i\rangle>tS\}}\text{ on } \mathcal{L} \setminus Q^\nu_{(1-\delta)S} \big\}\,.
\end{align}
Then, for $\varepsilon>0$ we find $u_1\colon \mathcal{L} \to \mathbb{R}$ be such that $u_1(i) = \langle\nu,i\rangle$ for $i \in \mathcal{L} \setminus Q^\nu_{(1-\delta)S}$ and
\begin{align}\label{ineq:u1min}
E(u_1,Q^\nu_S) \leq S^d(\psi(\nu)+\varepsilon)\,.
\end{align}
Due to Lemma \ref{lemma:coarea}, there holds
\begin{align}\label{ineq:u1u2step2}
\begin{split}
 E(u_1,Q_{S}^\nu) =
\int_{-\infty}^{+\infty} 
E(\chi_{\{u_1> t\}},Q_{S}^\nu)\,\mathrm{d}t
&\ge
\int_{-S(1-\delta)/2}^{S(1-\delta)/2} 
E(\chi_{\{u_1> t\}},Q_{S}^\nu)\,\mathrm{d}t \\&= S\int_{-(1-\delta)/2}^{(1-\delta)/2} 
E(\chi_{\{u_1> tS\}},Q_{S}^\nu)\,\mathrm{d}t \,.
\end{split}
\end{align}
Note that for all $t \in (-(1-\delta)/2,(1-\delta)/2)$, due to $u_1(i)=\langle\nu,i\rangle$ we have that $\chi_{\{u_1> tS\}}(i) = \chi_{\{\langle\nu,i\rangle>tS\}}$ for $i \in \mathcal{L} \setminus Q^\nu_S $ and thus
\begin{align}\label{ineq:admissible}
\begin{split}
E(\chi_{\{u_1>tS\}},Q_{S}^\nu) \geq S^{d-1} \varphi(t,S,\delta)\,.
\end{split}
\end{align}
Therefore, 
\[
\varepsilon+\psi(\nu)\ge
\int_{-(1-\delta)/2}^{(1-\delta)/2} \varphi(t,S,\delta)\,\mathrm{d}t 
\]
and then thanks to Fatou's lemma we deduce for $\delta_0>0$
\begin{align*}
\varepsilon+\psi(\nu)
\ge\int_{-(1-\delta_0)/2}^{(1-\delta_0)/2} \varphi_t(\nu)\,\mathrm{d}t\,.
\end{align*}
After letting $\delta_0\to 0$  and $\varepsilon\to 0$, using Step 1, we obtain
\begin{align*}
\int_{-1/2}^{1/2} \varphi_t(\nu)\,\mathrm{d}t \leq \psi(\nu) \leq \varphi_t(\nu) \text{ for all } t\in (-1/2,1/2)\,.
\end{align*}
Hence, $\varphi_t(\nu) = \psi(\nu)$ for almost all $t\in (-1/2,1/2)$.
This concludes Step 2.
\end{step}\\
\begin{step}{3}($t \mapsto \varphi_t(\nu)$ is constant) To this end, let $t_1,t_2 \in (-1/2,1/2)$, $t_2<t_1$, let $\varepsilon>0,\delta>0$, $S\gg 1$, and let $u_1 \colon \mathcal{L} \to \{0,1\}$ be such that $u_1(i)=\chi_{\{\langle\nu,i\rangle >t_1S \}}$ on $\mathcal{L}\setminus Q^\nu_S$ and
\begin{align}\label{ineq:u1mincont}
E(u_1,Q^\nu_S) \leq S^{d-1}\left(\varphi_{t_1}(\nu) +\varepsilon\right)\,.
\end{align}
 We set $u_\nu^s(i) = \chi_{\{\langle\nu,i\rangle > s\}}$ and define $u_2 \colon \mathcal{L}\to \{0,1\}$ by 
\begin{align}\label{def:u2step3}
u_2(i) = u_1(i) + (u_\nu^{t_2S}(i)-u_\nu^{t_1S}(i))\chi_{(Q^\nu_{(1-\delta)S})^c}(i)\,.
\end{align}
It is obvious that $u_2(i) = u_\nu^{t_2S}(i)$ on $\mathcal{L}\setminus Q^\nu_{(1-\delta)S}$ and therefore 
\begin{align}\label{ineq:u2varphideltaSt}
E(u_2,Q^\nu_S)\geq S^{d-1}\varphi(t_2,S,\delta)\,.
\end{align}
Next, we show that
\begin{align}\label{ineq:u2u1step3}
E(u_2,Q^\nu_S)\leq E(u_1,Q^\nu_S) + C(\delta+|t_1-t_2|) S^{d-1} + \frac{C_S^\delta}{\delta}S^{d-1}\,,
\end{align}
where $C_S^\delta \to 0 $ as $S \to +\infty$. Now the claim follows by \eqref{ineq:u1mincont}, \eqref{ineq:u2varphideltaSt}, and \eqref{ineq:u2u1step3} by dividing with $S^{d-1}$ and letting first $S\to+\infty$, then
$\delta\to 0$, and eventually $\varepsilon\to 0$. It remains to prove \eqref{ineq:u2u1step3}. Here, we exploit Lemma \ref{lemma:elementaryproperties}(ii) and \eqref{def:u2step3} to deduce
\begin{align}
E(u_2,Q^\nu_S) \leq E(u_1,Q^\nu_S)  +E((u_\nu^{t_2S}-u_\nu^{t_1S})\chi_{(Q^\nu_{(1-\delta)S})^c},Q^\nu_S)\,.
\end{align}
We call $v=(u_\nu^{t_2S}-u_\nu^{t_1S})\chi_{(Q^\nu_{(1-\delta)S})^c}$ and note that it suffices to prove
\begin{align}\label{ineq:vStep3}
E(v,Q^\nu_S) \leq C(\delta +|t_1-t_2|) S^{d-1} + \frac{C_S^\delta}{\delta}S^{d-1}\,.
\end{align}
 We observe that
\begin{align}\label{eq:valuesv}
\{v=-1\} = \mathcal{L} \cap B_{t_1,t_2}^S\,, \text{ where } B_{t_1,t_2}^S:= \{x \in \mathbb{R}^d\setminus Q^\nu_{(1-\delta)S} \colon t_1S\leq \langle\nu,x\rangle< t_2S\}
\end{align} 
and  $\{v=0\}=\mathcal{L}\setminus B_{t_1,t_2}^S$.
Therefore
\begin{align}\label{eq:splitvStep3}
E(v,Q^\nu_S) = \sum_{i \in \{v=0\}\cap Q^\nu_S} \sum_{j \in \{v=-1\}} c_{i,j} \leq  \sum_{i \in \mathcal{L} \cap Q^\nu_S} \underset{|i-j|>\delta S}{\sum_{j \in \mathcal{L}}} c_{i,j}+ \sum_{i \in \{v=0\}\cap Q^\nu_S} \underset{|i-j|\leq \delta S}{\sum_{j \in \{v=-1\}}} c_{i,j}  \,.
\end{align}
As for the first term on the right hand side of \eqref{eq:splitvStep3}, we point out that, due to Lemma \ref{lemma:elementaryproperties}(v), we have
\begin{align}\label{ineq:firstv}
\sum_{i \in \mathcal{L} \cap Q^\nu_S} \underset{|i-j|>\delta S}{\sum_{j \in \mathcal{L}}} c_{i,j} \leq \frac{1}{\delta S} \#(\mathcal{L}\cap Q^\nu_S) \max_{i \in \mathcal{L}}\underset{|i-j|>\delta S}{ \sum_{j \in \mathcal{L}}} c_{i,j} |i-j| \leq \frac{C_S^\delta}{\delta} S^{d-1}\,,
\end{align}
where $C_S^\delta \to 0$ as $S\to +\infty$. Now, let us consider $j \in \mathcal{L}$ such that $|i-j| <\delta S$. For $i_0,j_0 \in Q_T$ and $\xi \in \mathbb{Z}^d$ set
 \begin{align}\label{def:Ai0j0xi}
 \begin{split}
 A_{i_0,j_0}^\xi :=\big\{(z,z^\prime) \in T\mathbb{Z}^d\times T\mathbb{Z}^d \colon z^\prime-z =\xi,\, &i=i_0+z \in \{v=0\} \cap Q^\nu_S,\\& j=j_0+z^\prime \in \{v=-1\}\big\}\,.
 \end{split}
 \end{align} 
 We observe that $A_{i_0,j_0}^\xi \subset \{(z,z+\xi) \colon z \in T\mathbb{Z}^d, \mathrm{dist}(z, \partial B_{t_1,t_2}^S \cap Q^\nu_{S}) \leq \sqrt{d}T+ |\xi| \} $ and thus for $|\xi|\leq C\delta S$ \begin{align*}
 \#  A_{i_0,j_0}^\xi\leq C(\delta + |t_1-t_2|)(|\xi|+T) S^{d-1}\,.
 \end{align*}
 Therefore, there holds 
 \begin{align}\label{eq:sumsplitS21}
\nonumber \sum_{i \in \{v=0\}\cap Q^\nu_S} \underset{|i-j|\leq \delta S}{\sum_{j \in \{v=-1\}}} c_{i,j} &\leq  \sum_{i_0,j_0 \in Q_T} \underset{|\xi|\leq \delta S}{\sum_{\xi \in T\mathbb{Z}^d}}\sum_{(z,z^\prime) \in A_{i_0,j_0}^\xi} c_{i_0+z,j_0+z^\prime}   \\&\leq C(\delta +|t_1-t_2|) S^{d-1}\sum_{i_0,j_0\in Q_T} \underset{|\xi|\leq \delta S_1}{\sum_{\xi \in T\mathbb{Z}^d}} c_{i_0,j_0+\xi} (|\xi|+T) \\&\leq C(\delta +|t_1-t_2|) S^{d-1}\max_{i\in \mathcal{L}\cap Q_T}\sum_{j \in \mathcal{L}} c_{i,j}|i-j| \leq C(\delta +|t_1-t_2|) S^{d-1}\,.\nonumber
\end{align}
Here we used that $|\xi| \leq |i-j| + |i_0-j_0|\leq |i-j| +\sqrt{d}T$ and the fact that $|i-j| \geq c = T \cdot (c/T)$ for all $i \neq j$, where $c/T>0$ is a fixed constant.
This together with \eqref{eq:splitvStep3} and \eqref{ineq:firstv} implies \eqref{ineq:vStep3} and therefore $\varphi_{t_2}(\nu) \leq \varphi_{t_1}(\nu) +C|t_1-t_2|$ for $t_2 <t_1$. Due to Step 2, for any $t \in (-1/2,1/2)$ we can find $t_n \to t$ such that $\varphi_{t_n}(\nu)=\psi(\nu)$ and $\varphi_{t}(\nu) \leq \varphi_{t_n}(\nu) +|t-t_n|= \psi(\nu)+|t-t_n|$. Letting $t_n\to t$ we obtain $\varphi_t(\nu)\leq \psi(\nu)$. This together with Step 1 shows $\varphi_t(\nu)=\psi(\nu)$.
\end{step}\\
\noindent Due to Step 3 we have that $\varphi(\nu)=\varphi_0(\nu)=\psi(\nu)$. This concludes the proof.
\end{proof}
\EEE

In the next Lemma we show that, assuming affine boundary conditions, the calculation of the asymptotic cell formula with respect to the coordinate cube and the calculation of the asymptotic cell formula with respect to the rotated cube are equivalent. 

\begin{lemma}\label{lemma:sureqaff} Let $\nu \in\mathbb{R}^d$. Then:
  $\psi(\nu) = \phi(\nu)$.
% \begin{align*}
% \psi(\nu) = \phi(\nu)\,.
% \end{align*}
\end{lemma}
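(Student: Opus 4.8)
The plan is to establish the two inequalities $\psi(\nu)\le\phi(\nu)$ and $\phi(\nu)\le\psi(\nu)$ by a tiling (superposition) construction, essentially that of Step~1 in the proof of Lemma~\ref{lemma:QnueqQ}, but \emph{without} the $S_1$-rescaling: here $\phi$ and $\psi$ are both computed from real-valued competitors with affine boundary data and with the same $S^d$-normalization, so it suffices to \emph{translate} a near-optimal competitor for one cell problem in order to build an admissible competitor for the other. By positive one-homogeneity we may assume $\nu\in\mathbb{S}^{d-1}$, and since $\phi$ is Lipschitz (Lemma~\ref{lemma:propertiesofphi}(iii)) and $\psi$ is continuous (Lemma~\ref{lemma:propertiesofpsi}(ii)), Remark~\ref{rem:densityofrationals} lets us further restrict to $\nu\in\mathbb{S}^{d-1}\cap\mathbb{Q}^d$. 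For such $\nu$ we fix, as in \eqref{eq:proplambda}, a rational orthonormal basis $\{\nu_1,\dots,\nu_{d-1},\nu_d=\nu\}$ of $\mathbb{R}^d$ and $\lambda\in\mathbb{N}$ with $\lambda\nu_n\in\mathbb{Z}^d$ for every $n$; note that $\phi(\nu)$ and $\psi(\nu)$ are finite by Lemma~\ref{lemma:propertiesofphi}(i) and Lemma~\ref{lemma:propertiesofpsi}(i), and the limits defining them exist (Remark~\ref{rem: existence of the limits}).

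\emph{Proof of $\phi(\nu)\le\psi(\nu)$.} Fix $\delta>0$, $\varepsilon>0$ and scales $1\ll S_1\ll S_2$ with $S_1=m\lambda T$, $m\in\mathbb{N}$. Pick $u_1\colon\mathcal{L}\to\mathbb{R}$ with $u_1=\langle\nu,\cdot\rangle$ on $\mathcal{L}\setminus Q^\nu_{(1-\delta)S_1}$, $\|u_1\|_{L^\infty(Q^\nu_{S_1})}\le CS_1$ (truncating as in the proof of Lemma~\ref{lemma:propertiesofpsi}(ii)) and $S_1^{-d}E(u_1,Q^\nu_{S_1})\le\psi(\nu)+\varepsilon$. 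Since $S_1\nu_n=mT(\lambda\nu_n)\in T\mathbb{Z}^d$, every translation vector $z=S_1\sum_{n=1}^d k_n\nu_n$ with $k\in\mathbb{Z}^d$ lies in $T\mathbb{Z}^d$, so by Lemma~\ref{lemma:elementaryproperties}(vi), $E(u_1(\cdot-z),Q^\nu_{S_1}(z))=E(u_1,Q^\nu_{S_1})$. Let $\mathcal{Z}:=\{z=S_1\sum_n k_n\nu_n:\,k\in\mathbb{Z}^d,\ Q^\nu_{S_1}(z)\subset Q_{(1-\delta)S_2}\}$; these cubes are pairwise disjoint, $\#\mathcal{Z}\le S_2^d/S_1^d$, and they cover $Q_{(1-\delta)S_2}$ up to a boundary shell of width $O(S_1)$. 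Define $u_2(i):=\langle\nu,i\rangle+\bigl(u_1(i-z)-\langle\nu,i-z\rangle\bigr)$ if $i\in Q^\nu_{S_1}(z)$ for some $z\in\mathcal{Z}$, and $u_2(i):=\langle\nu,i\rangle$ otherwise. The affine boundary datum of $u_1$ guarantees, exactly as for \eqref{eq:u2u1shifted}, that on each $Q^\nu_{(1+\delta)S_1}(z)$ with $z\in\mathcal{Z}$ the function $u_2$ coincides with $u_1(\cdot-z)+\langle\nu,z\rangle$, and that $u_2=\langle\nu,\cdot\rangle$ on $\mathcal{L}\setminus Q_{(1-\delta)S_2}$; hence $u_2$ is admissible for $\phi$. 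Splitting $E(u_2,Q_{S_2})$ into the contributions of the interior cubes $Q^\nu_{S_1}(z)$, $z\in\mathcal{Z}$, of the $O((S_2/S_1)^{d-1})$ small cubes straddling $\partial Q_{(1-\delta)S_2}$, and of the remaining region, and arguing as for \eqref{ineq:Eu2split}--\eqref{ineq:u2diff} (using $|u_2(i)-u_2(j)|\le CS_1+|i-j|$, (H2), (L1), and Lemma~\ref{lemma:elementaryproperties}(i),(v)), we obtain
\begin{align*}
E(u_2,Q_{S_2})\le\#\mathcal{Z}\Bigl(E(u_1,Q^\nu_{S_1})+\tfrac{C_{S_1}^\delta}{\delta}S_1^d\Bigr)+CS_1^2S_2^{d-1}+C\delta S_2^d
\end{align*}
with $C_{S_1}^\delta\to0$ as $S_1\to+\infty$. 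Dividing by $S_2^d$ and letting $S_2\to+\infty$, then $S_1\to+\infty$, then $\delta\to0$ and $\varepsilon\to0$ yields $\phi(\nu)\le\psi(\nu)$.

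\emph{Proof of $\psi(\nu)\le\phi(\nu)$.} This is symmetric: we now tile a rotated cube $Q^\nu_{S_2}$ by coordinate cubes $Q_{S_1}(x)$ with $x\in S_1\mathbb{Z}^d$ and $S_1=mT$, so that the translations $x$ lie in $T\mathbb{Z}^d$ automatically (the rational reduction is in fact needed only for the previous inequality); we take a near-optimal competitor for $\phi$ on the small cubes and the affine map $\langle\nu,\cdot\rangle$ elsewhere, and run the same error estimates, now using the existence of the limit in \eqref{def:phi}.

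The main obstacle is purely geometric: a rotated cube cannot be partitioned into coordinate cubes, nor conversely, so a genuine superposition argument is unavailable. This is circumvented by keeping only the small cubes that fit well inside the large one and filling the complementary region — of measure $O(\delta+S_1/S_2)\,S_2^d$ — with the affine map $\langle\nu,\cdot\rangle$, whose energy is controlled by Lemma~\ref{lemma:elementaryproperties}(i); all resulting errors are lower order after the iterated limit $S_2\to\infty$, $S_1\to\infty$, $\delta\to0$. The only non-soft ingredient is that the translations used to tile one cube by copies of the other must belong to $T\mathbb{Z}^d$ for Lemma~\ref{lemma:elementaryproperties}(vi) to apply, which is exactly what forces the restriction to rational directions and the choice $S_1\in\lambda T\mathbb{N}$ in the first inequality.
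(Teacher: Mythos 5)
Your proposal is correct, but it takes a genuinely different route from the paper. The paper factors the identity $\psi=\phi$ through periodic intermediaries: it introduces a rotated periodic cell problem $\psi_{\mathrm{per}}$ (see \eqref{def:psiper}), proves $\psi_{\mathrm{per}}=\psi$ by the cutoff argument of Lemma~\ref{lemma: per=aff} (applied to rotated cubes), then proves $\psi_{\mathrm{per}}=\phi_{\mathrm{per}}$ by a superposition/covering argument on \emph{periodic} competitors, and concludes via $\phi_{\mathrm{per}}=\phi$ (Lemma~\ref{lemma: per=aff} again). The virtue of this decomposition is that the tiling step is carried out on periodic functions, which are globally defined, so no boundary bookkeeping is needed there; the cutoff work is done once and for all in Lemma~\ref{lemma: per=aff} and its rotated analogue.

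You instead compare the two Dirichlet cell formulas directly: tile one cube type by nearly optimal competitors of the other, fill the remaining shell with the affine map, and control the errors exactly as in Step~1 of Lemma~\ref{lemma:QnueqQ} (translation invariance from Lemma~\ref{lemma:elementaryproperties}(vi) on the $T\mathbb{Z}^d$-translates, the Lipschitz bound $|u_2(i)-u_2(j)|\le CS_1+|i-j|$, and the (H2)/(L1) decay estimates). This is sound: the reduction to $\nu\in\mathbb{S}^{d-1}\cap\mathbb{Q}^d$ is justified by the continuity of $\phi$ and $\psi$ (Lemma~\ref{lemma:propertiesofphi}(iii), Lemma~\ref{lemma:propertiesofpsi}(ii), Remark~\ref{rem:densityofrationals}); the choice $S_1\in\lambda T\mathbb{N}$ makes the tiling translations lie in $T\mathbb{Z}^d$ in the direction $\phi\le\psi$, while in the direction $\psi\le\phi$ the tiling translations $S_1\mathbb{Z}^d$ with $S_1\in T\mathbb{N}$ lie in $T\mathbb{Z}^d$ automatically, as you observe; and the affine boundary datum of the small-cell competitors guarantees that your spliced $u_2$ is globally admissible and coincides with the source competitor, up to a constant, on a slightly enlarged small cube, so that translation invariance applies. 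One small remark: because the small competitors already have affine boundary data, your construction needs no $S_1$-rescaling of the spin-valued competitor as in Lemma~\ref{lemma:QnueqQ}, which makes the estimates \eqref{ineq:Eu2split}--\eqref{ineq:u2u1} a little simpler (no $S_1$ prefactor). Each approach buys something: the paper's factoring isolates the geometric superposition from the boundary-layer cutoff and reuses Lemma~\ref{lemma: per=aff}; your direct argument avoids introducing $\psi_{\mathrm{per}}$ at the cost of repeating the boundary-error estimates inline.
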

\begin{proof} \BBB After reducing to rational directions, we define a sequence of cell problems defined on cubes $Q^\nu_{\lambda T}$ and exploit Lemma \ref{lemma: per=aff} to show that it suffices to compare the two sequences of cell problems with periodic boundary conditions. \EEE

\noindent \BBB Since both $\psi$ and $\phi$ \EEE are positively homogeneous functions of degree one (cf.~\eqref{def:psi} and Lemma \ref{lemma:propertiesofphi}(ii)) it suffices to consider the case where $\nu \in \mathbb{S}^{d-1}$. Thanks to Lemma \ref{lemma:propertiesofpsi}(ii) and  Lemma \ref{lemma:propertiesofphi}(iii) both functions are continuous. Thus it suffices to prove the claim for  $\nu \in \mathbb{S}^{d-1} \cap \mathbb{Q}^d$. For each such vector we can find $\{\nu_1,\ldots,\nu_{d-1}\} \in \mathbb{S}^{d-1} \cap \mathbb{Q}^d$ such that the set $\{\nu_1,\ldots,\nu_{d-1},\nu_d=\nu\}$ forms an orthonormal basis of $\mathbb{R}^d$. For such an orthonormal basis, it is clear that there exists $\lambda \in \mathbb{N}$ such that
\begin{align}\label{eq:proplambdapsieqphi}
\lambda \nu_n =z_n \text{ for some } z_n \in \mathbb{Z}^d \text{ for all } n \in \{1,\ldots,d\}\,.
\end{align}\\

\BBB\noindent Let $\{\nu_1,\ldots,\nu_{d-1},\nu_d=\nu\}\subset \mathbb{S}^{d-1} \cap \mathbb{Q}^d$ be the orthonormal basis described previously. For $\nu \in \mathbb{S}^{d-1}\cap \mathbb{Q}^d$ fixed, we set
\begin{align}\label{def:psiper}
\psi_{\mathrm{per}}(\nu) :=\lim_{k \to \infty} \frac{1}{(\lambda kT)^d}\inf\{&E(u,Q^\nu_{\lambda kT}) \colon u \colon \mathcal{L} \to \mathbb{R},\\& u(\cdot +\lambda k T \nu_n)-\langle\nu,\cdot+\lambda k T \nu_n \rangle =  u(\cdot)-\langle\nu,\cdot\rangle \text{ for all } n=1,\ldots,d\}\nonumber
\end{align}
\noindent \begin{step}{1}($\psi_{\mathrm{per}}=\psi$) We claim that 
\begin{align}%\label{eq:psiperpsi}
\psi_{\mathrm{per}}(\nu)=\psi(\nu)\,.
\end{align}
This follows exactly as the proof of Lemma \ref{lemma: per=aff} by replacing the coordinate cubes with the cubes $Q^\nu_{\lambda kT}$.
\end{step}\\
\noindent \begin{step}{2}($\psi_{\mathrm{per}}=\phi_{\mathrm{per}}$) Due to Step 1 and Lemma \ref{lemma: per=aff} it suffices to show that
\begin{align}%\label{eq:psiperpsi}
\psi_{\mathrm{per}}(\nu)=\phi_{\mathrm{per}}(\nu)\,.
\end{align}
\noindent \begin{step}{2.1}(Proof of '$\geq$') To this end, let $\varepsilon>0$, $k \in \mathbb{N}$ be big enough, and $u_k \colon \mathcal{L} \to \mathbb{R}$ be such that $u_k(\cdot +\lambda k T \nu_n)-\langle\nu,\cdot+\lambda k T \nu_n \rangle =  u_k(\cdot)-\langle\nu,\cdot\rangle \text{ for all } n=1,\ldots,d$ and 
\begin{align}\label{ineq:almostminpsiper}
E(u_k,Q_{\lambda kT}^\nu) \leq (\lambda kT)^d( \psi_{\mathrm{per}}(\nu) +\varepsilon)\,.
\end{align}
Thanks to \eqref{eq:proplambdapsieqphi} and the fact that $\{\nu_1,\ldots,\nu_d\}$ is a basis, we have that $u_k \in \mathcal{A}_{\mathrm{per}}(Q_{kmT};\mathbb{R})$ for some $m \in \mathbb{N}$ depending only on $\{\nu_1,\ldots,\nu_d\}$ and $\lambda$. Fix now $M\in \mathbb{N}$ such that $M\gg m$ and let
\begin{align*}
\mathcal{Z} := \left\{z=\lambda kT \sum_{n=1}^d \mu_n \nu_n \colon \mu \in \mathbb{Z}^d,  Q^\nu_{\lambda kT}(z) \cap Q_{MkT}\neq \emptyset\right\}\,.
\end{align*}
Since,  $u_k \in \mathcal{A}_{\mathrm{per}}(Q_{kmT};\mathbb{R})$ we have that  $u_k \in \mathcal{A}_{\mathrm{per}}(Q_{kMT};\mathbb{R})$ and thus 
\begin{align}\label{ineq:ukadmi}
\inf\left\{E(u,Q_{kmT})\colon u \colon \mathcal{L}\to \mathbb{R}, u(\cdot)-\langle\nu,\cdot\rangle \in \mathcal{A}_{\mathrm{per}}(Q_{kMT};\mathbb{R})\right\}\leq  E(u_k,Q_{kMT}) \,.
\end{align}
Due to the periodicity of $u_k$, the assumption on $\lambda$, and Lemma \ref{lemma:elementaryproperties}(vi), we have that
\begin{align}\label{eq:translationinvariance}
E(u_k,Q^\nu_{\lambda kT}(z)) =E(u_k(\cdot-z),Q^\nu_{\lambda kT}(z)) = E(u_k,Q^\nu_{\lambda kT}) \text{ for all } z = \lambda kT \sum_{n=1}^d \mu_n \nu_n, \mu \in \mathbb{Z}^d\,.
\end{align}
Note that for $z \in \mathcal{Z}$ we have $Q_{\lambda kT}^\nu(z) \subset Q_{(M+\sqrt{d}\lambda)kT}$ and thus  $\#\mathcal{Z} \leq (M^d + CM^{d-1}\lambda kT)/\lambda^d$. Therefore,
\begin{align}
E(u_k,Q_{kMT}) \leq \sum_{z \in \mathcal{Z}} E(u_k,Q^\nu_{\lambda kT}(z)) \leq \#\mathcal{Z} E(u_k,Q^\nu_{\lambda kT})\leq  \frac{M^d + CM^{d-1}\lambda kT}{\lambda^d} E(u_k,Q^\nu_{\lambda kT})\,.
\end{align}
Dividing by $(kMT)^d$, letting first $M$ tend to $+\infty$, then $k$ to $+\infty$, and lastly $\varepsilon \to 0$, and noting \eqref{ineq:almostminpsiper} as well as \eqref{ineq:ukadmi}, we obtain the conclusion of Step 2.1.
\end{step}\\
\noindent \begin{step}{2.2}(Proof of '$\leq$') The reverse inequality follows as in Step 2.1 by noting that if $u \in \mathcal{A}_{\mathrm{per}}(Q_{kT};\mathbb{R})$ for some $k \in \mathbb{N}$, then, due to \eqref{eq:proplambdapsieqphi}, we have that $u(\cdot-\lambda m T \nu_n)-\langle\nu,\cdot - \lambda m T \nu_n\rangle=u(\cdot)-\langle\nu,\cdot \rangle$ for some $m \in \mathbb{N}$ and all $n=1,\ldots,d$. This allows us to perform the same construction as in Step 2.1.
\end{step}
This concludes the proof.
\end{step}
\EEE
\end{proof}

\begin{proof}[Proof of Proposition \ref{prop:main}] 
Our goal is to prove
\begin{align}\label{eq:varphiT}
\varphi(\nu)= \frac{1}{T^d}\inf\left\{E(u,Q_T) \colon u \colon \mathcal{L}\to \mathbb{R},  u(\cdot)-\langle\nu,\cdot\rangle \BBB \in \mathcal{A}_{\mathrm{per}}(Q_T;\mathbb{R})\EEE\right\}\,
\end{align}
for all $\nu \in \mathbb{R}^d$. Due to Lemma \ref{lemma:sureqaff}, Lemma \ref{lemma:QnueqQ}, and Lemma \ref{lemma: per=aff}, we have
\begin{align}\label{eq:varphi=phiper}
\varphi(\nu) = \psi(\nu)=\phi(\nu) = \phi_{\rm per}(\nu)\,.
\end{align}
Additionally, Lemma \ref{lemma:Treduction} ensures that
\begin{align*}
\phi_{\rm per}(\nu) = \frac{1}{T^d}\inf\left\{E(u,Q_T) \colon u \colon \mathcal{L}\to \mathbb{R},  u(\cdot)-\langle\nu,\cdot\rangle\BBB \in \mathcal{A}_{\mathrm{per}}(Q_T;\mathbb{R})\right\}\,.
\end{align*}
This shows \eqref{eq:varphiT} and concludes the proof.
\end{proof}

\section{Crystallinity of the homogenized surface energy density}\label{sec:Crystallinity} This section is devoted to the proof of Theorem \ref{theorem:main}. We assume throughout this section that assumptions \BBB {\rm (L1)}, {\rm (L2)} \EEE and {\rm (H1)}, {\rm (H3)} are satisfied.

We define the set of edges $\mathcal{E}$ by 
\begin{align}\label{def:edges}
\mathcal{E}=\{(i,j) \in (\mathcal{L}\cap Q_T) \times \mathcal{L} : c_{i,j}\neq 0\} \text{ and } N = \# \mathcal{E}\,.
\end{align}

\begin{proof}[Proof of Theorem \ref{theorem:main}]  We divide the proof into three steps. First, we derive a dual representation of $\varphi$. Then, using this representation, we show that $\varphi$ is crystalline. \\
\noindent \begin{step}{1}(Dual representation) 
 We define
\begin{align}\label{def:C}
\begin{split}
\mathcal{C}=\Big\{\BBB \alpha_{i,j}\in [0,c_{i,j}]  \EEE\colon &\alpha_{i+Tz,j+Tz}=\alpha_{i,j}\text{ for all } z \in \mathcal{L},\\&\sum_{j \in \mathcal{L}}(\alpha_{j,i}-\alpha_{i,j})=0 \text{ for all } i \in Q_T\cap\mathcal{L} \Big\}\,.
\end{split}
\end{align}
Our goal is to prove
\begin{align}\label{eq:dual representation}
\varphi(\nu) =\frac{1}{T^d} \sup_{(\alpha_{i,j})_{i,j} \in \mathcal{C}} \left\langle \nu, \sum_{i \in \mathcal{L} \cap Q_T }\sum_{j \in \mathcal{L}}\alpha_{i,j}(i-j)\right\rangle\,.
\end{align}
Let $\nu \in \mathbb{R}^d$.
Due to Proposition, \ref{prop:main} there holds
\begin{align*}
\varphi(\nu) = \phi_{\mathrm{per}}(\nu) &= \frac{1}{T^d}\inf\left\{E(u,Q_T) \colon  u(\cdot)-\langle\nu,\cdot\rangle \BBB \in \mathcal{A}_\mathrm{per}(Q_T;\mathbb{R})\EEE\right\} \\&= \frac{1}{T^d}\inf\left\{E(u+\langle\nu,\cdot\rangle,Q_T) \colon u\BBB \in \mathcal{A}_\mathrm{per}(Q_T;\mathbb{R})\right\}
\\&=\BBB \frac{1}{T^d}\inf_{u \in \mathcal{A}_{\mathrm{per}}(Q_T;\mathbb{R})} \sum_{i \in Q_T} \sum_{j \in \mathcal{L}} c_{i,j}(u(i)-u(j)+\langle\nu,i-j\rangle)^+\,. \EEE
\end{align*}
Note that, we can write
\begin{align}\label{eq:minmaxGammanu}
\phi_{\mathrm{per}}(\nu) = \frac{1}{T^d} \BBB\inf_{u \in \mathcal{A}_{\mathrm{per}}(Q_T;\mathbb{R})} \EEE\, \underset{\alpha_{i+Tz,j+Tz}=\alpha_{i,j}}{\sup_{0 \leq \alpha_{i,j} \leq c_{i,j}}}\sum_{i\in \mathcal{L}\cap Q_T} \sum_{j \in \mathcal{L}} \alpha_{i,j}(u(i)-u(j)+\langle\nu,i-j\rangle)\,.
\end{align}
\BBB To see this, we observe that for all $u  \in \mathcal{A}_\mathrm{per}(Q_T;\mathbb{R})$ and all $0\leq \alpha_{i,j}\leq c_{i,j}$ such that $\alpha_{i+Tz,j+Tz}=\alpha_{i,j}$ for all $z \in \mathbb{Z}^d$ we have
\begin{align*}
\alpha_{i,j}(u(i)-u(j)+\langle\nu,i-j\rangle)  \leq c_{i,j} (u(i)-u(j)+\langle\nu,i-j\rangle)^+
\end{align*}
with equality for
\begin{align*}
\alpha_{i,j} = \begin{cases} c_{i,j} &\text{if } (u(i)-u(j)+\langle\nu,i-j\rangle) \geq 0,\\
0 &\text{otherwise.}
\end{cases}
\end{align*}\EEE
Given $0 \leq \alpha_{i,j}\leq c_{i,j}$ such that $\alpha_{i+Tz,j+Tz}=\alpha_{i,j}$ for all $z\in \mathbb{Z}^d$, and $u \colon \mathcal{L}\to \mathbb{R}$ $T$-periodic, we have
\begin{align*}
\sum_{i\in\mathcal{L}\cap Q_T} \sum_{j \in \mathcal{L}}\alpha_{i,j}(u(i)-u(j)) &= \sum_{i\in \mathcal{L}\cap Q_T} \sum_{j \in \mathcal{L}} \alpha_{i,j}u(i)- \sum_{i\in \mathcal{L}\cap Q_T} \sum_{j \in \mathcal{L}} \alpha_{i,j}u(j) \\&= \sum_{i\in \mathcal{L}\cap Q_T} \sum_{j \in \mathcal{L}} \alpha_{i,j}u(i) - \sum_{j \in \mathcal{L}\cap Q_T}\sum_{z \in \mathbb{Z}^d} \sum_{i\in \mathcal{L}\cap Q_T} \alpha_{i,j+Tz}u(j+Tz) \\&= \sum_{i\in \mathcal{L}\cap Q_T} \sum_{j \in \mathcal{L}} \alpha_{i,j}u(i) - \sum_{j \in \mathcal{L}\cap Q_T}\sum_{z \in \mathbb{Z}^d} \sum_{i\in \mathcal{L}\cap Q_T} \alpha_{i-Tz,j}u(j)\\&=
 \sum_{i\in \mathcal{L}\cap Q_T} \sum_{j \in \mathcal{L}} \alpha_{i,j}u(i) - \sum_{j \in \mathcal{L}\cap Q_T}\sum_{i\in \mathcal{L}} \alpha_{i,j}u(j)
 \\&=\sum_{i\in \mathcal{L}\cap Q_T} \sum_{j \in \mathcal{L}} (\alpha_{i,j}-\alpha_{j,i})u(i)\,.
\end{align*}
Note that, since in all steps the sum over $i$ and, due to {\rm (H3)}, the sum over $j$ runs over a finite index set, the order of summation can be changed without changing the value of the various sums.
This implies that, given $0 \leq \alpha_{i,j}\leq c_{i,j}$ such that $\alpha_{i+Tz,j+Tz}=\alpha_{i,j}$ for all $z\in \mathcal{L}$, we have
\begin{align}\label{eq:alphaix}
\BBB \inf_{u \in \mathcal{A}_{\mathrm{per}}(Q_T;\mathbb{R})} \sum_{i\in\mathcal{L}\cap Q_T} \sum_{j \in \mathcal{L}}\alpha_{i,j}\EEE (u(i)-u(j)) = \begin{cases} 0 &\text{if }\displaystyle \sum_{j \in \mathcal{L}}(\alpha_{i,j}-\alpha_{j,i})=0 \text{ for all } i \in Q_T\cap\mathcal{L}\,,\\
-\infty &\text{otherwise.}
\end{cases}
\end{align}
Hence, using  \eqref{eq:varphi=phiper}, \eqref{def:C}, \eqref{eq:minmaxGammanu}, and \eqref{eq:alphaix}, we obtain \BBB
\begin{align}\label{dualineq}
\varphi(\nu) \geq \frac{1}{T^d} \sup_{(\alpha_{i,j})_{i,j} \in \mathcal{C}} \left\langle \nu, \sum_{i \in \mathcal{L} \cap Q_T }\sum_{j \in \mathcal{L}}\alpha_{i,j}(i-j)\right\rangle\,.
\end{align} 
As for the other inequality in finite dimension, that is when $c_{i,j}^R$ is such that $c_{i,j}^R=c_{i,j}$ if $|i-j|<R$ and $c_{i,j}^R=0$ if $|i-j|\geq R$, the equality is true due to \cite[Corollary 31.2.1]{Rockafellar}. More precisely, 
we obtain:\begin{multline}\label{eq:R}
              \varphi_R(\nu) =  \frac{1}{T^d}\inf_{u \in \mathcal{A}_{\mathrm{per}}(Q_T;\mathbb{R})} \sum_{i \in Q_T} \sum_{j \in \mathcal{L}} c_{i,j}^R(u(i)-u(j)+\langle\nu,i-j\rangle)^+
              \\= \frac{1}{T^d} \sup_{(\alpha_{i,j})_{i,j} \in \mathcal{C}_R} \left\langle \nu, \sum_{i \in \mathcal{L} \cap Q_T }\sum_{j \in \mathcal{L}}\alpha_{i,j}(i-j)\right\rangle\,,
\end{multline}
where $\mathcal{C}_R=\mathcal{C}\cap \prod_{(i,j)}[0,c_{i,j}^R]$.
% \begin{align*}
% \mathcal{C}_R= \Big\{\BBB \alpha_{i,j}\in [0,c_{i,j}^R]  \EEE\colon &\alpha_{i+Tz,j+Tz}=\alpha_{i,j}\text{ for all } z \in \mathcal{L},\\&\sum_{j \in \mathcal{L}}(\alpha_{j,i}-\alpha_{i,j})=0 \text{ for all } i \in Q_T\cap\mathcal{L} \Big\}.
% \end{align*}
Note that $\mathcal{C}_R\subset \mathcal{C}$ and thus by~\eqref{dualineq} and~\eqref{eq:R}, we have
\begin{align*}
  \varphi_R(\nu) \leq  \frac{1}{T^d} \sup_{(\alpha_{i,j})_{i,j} \in \mathcal{C}} \left\langle \nu, \sum_{i \in \mathcal{L} \cap Q_T }\sum_{j \in \mathcal{L}}\alpha_{i,j}(i-j)\right\rangle
\leq   \varphi(\nu) \,.
\end{align*}
In addition, for all $u \in \mathcal{A}_{\mathrm{per}}(Q_T;\mathbb{R})$ we have
\begin{align*}
 \lim_{R\to +\infty}\sum_{i \in Q_T} \sum_{j \in \mathcal{L}} c_{i,j}^R(u(i)-u(j)+\langle\nu,i-j\rangle)^+= \sum_{i \in Q_T} \sum_{j \in \mathcal{L}} c_{i,j}(u(i)-u(j)+\langle\nu,i-j\rangle)^+\,,
\end{align*}
monotonically in $R$. Hence, due to $\Gamma$-convergence of monotone sequences we have
\begin{align*}
\varphi(\nu)=\lim_{R\to +\infty} \varphi_R(\nu) \leq \frac{1}{T^d} \sup_{(\alpha_{i,j})_{i,j} \in \mathcal{C}} \left\langle \nu, \sum_{i \in \mathcal{L} \cap Q_T }\sum_{j \in \mathcal{L}}\alpha_{i,j}(i-j)\right\rangle\,.
\end{align*}
This shows \eqref{eq:dual representation}.
\EEE
\end{step}\\ 
\noindent \begin{step}{2}(Crystallinity) By Remark \ref{rem:duality}, we have
\begin{align*}
\varphi(\nu) = \sup_{\zeta \in W_{\varphi}} \langle\nu,\zeta\rangle\,.
\end{align*}
So that, by \eqref{eq:dual representation} 
\begin{align}\label{eq:Wvarphi}
W_\varphi = \left\{\frac{1}{T^d}\sum_{i \in \mathcal{L}\cap Q_T} \sum_{j \in \mathcal{L}}\alpha_{i,j}(i-j)\colon (\alpha_{i,j})_{i,j} \in \mathcal{C}  \right\}\,,
\end{align}
with $\mathcal{C}$ given in \eqref{def:C}. 
 Recall $N$ and $\mathcal{E}$ defined in \eqref{def:edges}. Define $L \colon \mathbb{R}^N \to \mathbb{R}^d$ by
\begin{align}\label{def:L}
L\left(\alpha_{i,j}\right)_{(i,j) \in \mathcal{E}} =\frac{1}{T^d} \sum_{(i,j) \in \mathcal{E}} \alpha_{i,j}(i-j)\,.
\end{align}
\BBB Hence, we observe that 
\begin{align*}
W:=\left\{\frac{1}{T^d}\sum_{i \in \mathcal{L}\cap Q_T} \sum_{j \in \mathcal{L}}\alpha_{i,j}(i-j)\colon (\alpha_{i,j})_{i,j} \in \mathcal{C}  \right\} = L\left(\mathcal{C}\right)\,,
\end{align*}
where $\mathcal{C}$ is a convex closed set as the intersection of two convex and closed sets. Thus, as the image of the convex and closed set $\mathcal{C}$ through the linear map $L$, $W$ is a closed and convex set. Then \eqref{eq:Wvarphi} follows by fenchel duality, since
\begin{align*}
\varphi(\nu) = \mathrm{I}_{W}^*(\nu) = \mathrm{I}_{W_\varphi}^*(\nu)\,,
\end{align*}
where $\mathrm{I}_A \colon \mathbb{R}^d\to [0,+\infty]$ denotes the support function denotes the support function of the set $A$ given by
\begin{align*}
\mathrm{I}_A(\zeta)=\begin{cases} 0 &\text{if } \zeta \in A\,,\\
+\infty &\text{otherwise.}
\end{cases}
\end{align*}
Now, since both $W_{\varphi}$ and $W$ are closed and convex, we have
\begin{align*}
\mathrm{I}_{W_{\varphi}}(\zeta) = \mathrm{I}_{W_{\varphi}}^{**}(\zeta) = \varphi^*(\zeta)= \mathrm{I}_{W}^{**} = \mathrm{I}_{W}\,.
\end{align*}
This shows  \eqref{eq:Wvarphi}.
 \EEE
Furthermore, we find
\begin{align}\label{eq:WphiL}
W_{\varphi} = L \left( V\cap\prod_{(i,j) \in \mathcal{E}} [0,c_{i,j}]  \right)\,,
\end{align}
where $V \subset \mathbb{R}^N$ is a linear subspace of co-dimension $k:=\#(Q_T \cap \mathcal{L})-1$ given by
\begin{align}\label{def:V}
V=\left\{\alpha_{i,j} \in \mathbb{R}^N\colon \sum_{j \in \mathcal{L}}(\alpha_{i,j}-\alpha_{j,i})=0 \text{ for all } i \in Q_T\cap\mathcal{L}\right\}\,.
\end{align}
Hence, due to \eqref{eq:WphiL}, $W_{\varphi}$ is the image of the linear map $L$, given in \eqref{def:L}, of a $N$-dimensional polytope $\prod_{(i,j) \in \mathcal{E}}[0,c_{i,j}]$ intersected with the linear subspace $V$, given in \eqref{def:V}. The intersection of a cube with a linear subspace is a polytope, and thus also its image through a linear map. This proves that $\varphi$ is crystalline. \end{step}\\
\noindent \begin{step}{3}(Estimate on the number of vertices) Our goal is to prove that
\begin{align}\label{ineq:cardboundedges}
\#\mathrm{extreme}(W_{\varphi}) \leq 3^N\,,
\end{align}
where we recall $N$ defined in \eqref{def:edges}.
 Let us note that, due to the Krein-Milman Theorem (cf.~\cite{Brezis}, Theorem 1.13) and \eqref{eq:WphiL}, it is easy to see that there holds
\begin{align*}
\#\mathrm{extreme}(W_{\varphi}) = \#\mathrm{extreme}\left(L \left(V\cap\prod_{(i,j) \in \mathcal{E}}[0,c_{i,j}]\right) \right)\leq \#\mathrm{extreme}\left(V\cap \prod_{(i,j) \in \mathcal{E}}\left[0,c_{i,j}\right]\right)\,.
\end{align*}
In order to show \eqref{ineq:cardboundedges}, it remains to show
\begin{align}\label{ineq:cardboundintersection}
\#\mathrm{extreme}\left(V\cap \prod_{(i,j) \in \mathcal{E}}\left[0,c_{i,j}\right]\right)\leq 3^N \,.
\end{align}
\BBB In order to obtain this estimate we note that the extreme points of $V\cap \prod_{(i,j) \in \mathcal{E}}\left[0,c_{i,j}\right]$ lie on the $k$-dimensional (here $k$ is the co-dimension of the linear subspace $V$) facets of $\prod_{(i,j) \in \mathcal{E}}\left[0,c_{i,j}\right]$. Furthermore, we can find an injective relation between extreme points and these facets. In fact, if the matrix determining the intersection of the facet with $V$ is full rank, then the point of intersection is unique. If that is not the case, then the solution set is itself a subspace and one can add one additional condition to obtain the extreme point. This implies that here the extreme point is shared by more $k$-dimensional facets on a lower dimensional facet.  Note that there are at most $\binom{N}{k}2^{N-k}$ such facets and by the binomial formula we have that
\begin{align*}
\binom{N}{k}2^{N-k} \leq \sum_{j=0}^N \binom{N}{j}2^{N-j} = 3^N\,. 
\end{align*}\EEE
This concludes Step 3.
\end{step}
\end{proof}

\BBB
\section{Differentiability of the effective surface tension}\label{sec:dif}

In this Section, we prove Proposition~\ref{prop:dif} which states that $\varphi$ is
differentiable in totally irrational directions.
It is a corollary of the two lemmas which we state and prove below.
\begin{lemma}\label{lem:dif}
  Let  $\nu\in\mathbb{S}^{d-1}$,  let $u$ be a minimizer in~\eqref{eq:propvarphi} and assume that for any $s\in\R$,
  the set $\{u=s\}$ is finite. Then $\varphi$ is differentiable in $\nu$.
\end{lemma}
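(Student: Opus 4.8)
The statement to prove is Lemma~\ref{lem:dif}: if $u$ is a minimizer in \eqref{eq:propvarphi} for the direction $\nu$ and each level set $\{u=s\}$ is finite, then $\varphi$ is differentiable at $\nu$. Since $\varphi$ is convex and positively $1$-homogeneous, differentiability at $\nu$ is equivalent to the subdifferential $\partial\varphi(\nu)$ being a singleton; by convex duality (Remark~\ref{rem:duality} and Step~1 of the proof of Theorem~\ref{theorem:main}) we have $\partial\varphi(\nu) = \operatorname{argmax}_{\zeta\in W_\varphi}\langle\nu,\zeta\rangle$, and by \eqref{eq:dual representation}--\eqref{eq:Wvarphi} this is exactly the set of vectors $\frac{1}{T^d}\sum_{i\in\mathcal{L}\cap Q_T}\sum_{j\in\mathcal{L}}\alpha_{i,j}(i-j)$ where $(\alpha_{i,j})\in\mathcal{C}$ achieves equality in the primal-dual relation \eqref{eq:minmaxGammanu}. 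So the goal reduces to showing: \emph{all optimal dual variables $(\alpha_{i,j})\in\mathcal{C}$ produce the same value of $L((\alpha_{i,j}))=\frac{1}{T^d}\sum_{(i,j)}\alpha_{i,j}(i-j)$.}

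The key is the complementary slackness coupling between a primal minimizer $u$ (with $u(\cdot)-\langle\nu,\cdot\rangle$ $T$-periodic) and any optimal dual $(\alpha_{i,j})$. From the saddle-point structure in \eqref{eq:minmaxGammanu} one reads off that for every edge $(i,j)$ with $c_{i,j}>0$: if $u(i)-u(j)+\langle\nu,i-j\rangle>0$ then $\alpha_{i,j}=c_{i,j}$, and if it is $<0$ then $\alpha_{i,j}=0$; only on the "contact set" $\mathcal{E}_0:=\{(i,j):u(i)-u(j)+\langle\nu,i-j\rangle=0\}$ is $\alpha_{i,j}$ free in $[0,c_{i,j}]$. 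Hence two optimal duals $\alpha,\alpha'$ differ only on $\mathcal{E}_0$, and $\beta:=\alpha-\alpha'$ is a $T$-periodic, divergence-free ($\sum_j(\beta_{i,j}-\beta_{j,i})=0$ for all $i$) flow supported on $\mathcal{E}_0$. I must show $\frac1{T^d}\sum_{(i,j)}\beta_{i,j}(i-j)=0$. The idea, following \cite{CGN14}, is that $\beta$ lives on the subgraph $G_0$ of $\mathcal{L}$ (mod $T$) whose edges are those that, after unrolling to $\mathcal{L}$, lie in $\mathcal{E}_0$; since $u(i)=u(j)$ exactly when $i,j$ are in the same level set and on $\mathcal{E}_0$ we have $u(i)=u(j)$ whenever the two endpoints sit in the same $T$-cell and are connected, the hypothesis that each $\{u=s\}$ is finite forces every connected component of the unrolled graph $G_0$ (as a subgraph of the infinite graph on $\mathcal{L}$) to be finite — a "bounded" component. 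A $T$-periodic divergence-free flow supported on finite components has, on each such finite component $C$, the property that $\sum_{(i,j):\,i\in C}\beta_{i,j}(i-j)$ can be rewritten via summation by parts as $\sum_{i\in C}\big(\sum_j\beta_{i,j}-\sum_j\beta_{j,i}\big)\,x_i = 0$ using divergence-freeness; more precisely one writes $\sum_{(i,j)}\beta_{i,j}(i-j)=\sum_i\big(\sum_j\beta_{j,i}-\sum_j\beta_{i,j}\big)i$ after reindexing (exactly the computation in Step~1 of Theorem~\ref{theorem:main}, replacing $u(i)$ by the coordinate function $i\mapsto i$, which is legitimate since the finiteness of components makes all sums absolutely convergent), and the divergence condition makes it vanish.

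I would organize the proof as: (1) record the complementary-slackness characterization of optimal duals from \eqref{eq:minmaxGammanu}; (2) define the contact graph $G_0$ and use the finite-level-set hypothesis to show its unrolled connected components are finite; (3) given two optimal duals, write their difference $\beta$ as a periodic divergence-free flow supported on $G_0$ and run the summation-by-parts identity componentwise to conclude $L(\beta)=0$; (4) conclude $\partial\varphi(\nu)$ is a single point, i.e.\ $\varphi$ is differentiable at $\nu$. The main obstacle is step (2)–(3): one must be careful that "finite level sets" genuinely yields finiteness (not merely slenderness) of the connected components of the contact graph lifted to $\mathcal{L}$ — the point being that within one period cell two contact-connected vertices share a $u$-value, and periodicity plus finiteness of each $\{u=s\}$ then bounds how far a contact path can wander — and that this finiteness is exactly what is needed to justify the rearrangement of the (a priori infinite) sums defining $L(\beta)$ and to apply the divergence condition edge-by-edge. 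The convex-duality bookkeeping in steps (1) and (4) is routine given what is already proved in Theorem~\ref{theorem:main}.
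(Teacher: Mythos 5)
Your plan reproduces the paper's proof in all essential steps: identify $\partial\varphi(\nu)$ with the image under $L$ of the dual optimizers from \eqref{eq:dual representation}, use complementary slackness to support $\beta=\alpha-\alpha'$ on level sets of $u$, and apply a finite summation-by-parts together with the divergence-free constraint from $\mathcal{C}$ to kill the contribution of each level set. The paper works directly with the level sets $J_s=\{u=s\}$ rather than components of a contact graph, and the one point to nail down in your steps (2)--(3) is the disjointness of the translates $(J_s-Tz)\cap Q_T$ for distinct $z$ --- established by noting that $i,i-Tz\in J_s$ forces $\langle\nu,z\rangle=0$, whence $\{i-kTz\}_{k\in\mathbb{Z}}\subset J_s$ would be infinite --- which is what legitimizes equating the period-cell sum in \eqref{eq:ppp} with the sums over individual level sets.
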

\begin{proof}
  The expression~\eqref{eq:dual representation} shows that $\varphi$ is a convex,
  one-homogeneous function with subgradient at $\nu$ given by
  \[
    \partial\varphi(\nu) = \left\{
      \frac{1}{T^d} \sum_{i \in \mathcal{L} \cap Q_T }\sum_{j \in \mathcal{L}}\alpha_{i,j}(i-j)\,:\,
      \alpha=(\alpha_{i,j})_{i,j}\in\mathcal{C} \textup{ maximizer in \eqref{eq:dual representation}}
      \right\}
  \]
  It is differentiable at $\nu$ if and only if the above set has exactly one element.
  
  Let $\alpha,\alpha'\in \mathcal{C}$ be two maximizers in~\eqref{eq:dual representation}. Classical
  optimality conditions guarantee that for any $i,j$, if $u(i)\neq u(j)$, then:
  \begin{equation}\label{eq:dualequal}
    \alpha_{i,j} =\alpha'_{i,j} = \begin{cases} c_{i,j} & \textup{ if }  u(i)-u(j)>0\\
      0 & \textup{ if } u(i)-u(j) < 0.
    \end{cases}
  \end{equation}
  Let us denote by $p,p'\in\partial\varphi(\nu)$ the subgradients given by the dual variables,
  respectively, $\alpha$ and $\alpha'$, we claim that $p=p'$. One has:
  \begin{equation}\label{eq:ppp}
    p-p' = \frac{1}{T^d}\sum_{i \in \mathcal{L} \cap Q_T }\sum_{j: u(j)=u(i)}
    (\alpha_{i,j}-\alpha'_{i,j})(i-j).
  \end{equation}
  
  Let $s\in\R$, $i_0\in\mathcal{L}\cap Q_T$ with $u(i_0)=s$ and such that the finite
  set $J_s:=\{j: u(j)=s\}$ has more than one element.
  For any $i,j$, let $\beta_{i,j}:=\alpha_{i,j}-\alpha'_{i,j}$.
  Then
  \begin{align*}
    \sum_{i\in J_s}\sum_{j\in J_s} \beta_{i,j} (i-j)
    &
      = \sum_{z\in\mathbb{Z}^d}\sum_{i\in J_s\cap (Tz+Q_T)} \sum_{j\in J_s}\beta_{i,j}(i-j) \\
    &= \sum_{z\in\mathbb{Z}^d} \sum_{i\in (J_s-Tz) \cap Q_T} \sum_{j\in J_s-Tz}\beta_{i,j}(i-j)
  \end{align*}
  where for the last line we have substituted $(i,j)$ with $(i-Tz,j-Tz)$ and
  used that $\beta$ is $Q_T$-periodic. In addition, we have that
  $u(i)=u(j)$ if and only if $u(i-Tz)=u(j-Tz)$ so that this can be rewritten:
  \[
    \sum_{i\in J_s}\sum_{j\in J_s} \beta_{i,j} (i-j)
    = \sum_{z\in\mathbb{Z}^d} \sum_{i\in (J_s-Tz) \cap Q_T} \sum_{j: u(j)=u(i)}\beta_{i,j}(i-j)
  \]
  By assumption, the sets $(J_s-Tz)\cap Q_T$, $z\in\mathbb{Z}^d$ are all disjoint.
  Otherwise, there would be $i,z$ with $s= u(i-Tz) = u(i) + T\langle \nu,z\rangle = s$,
  yielding in particular that $\langle\nu,z\rangle=0$, and one would deduce that
  $i-kTz\in J_s$ for all $k\in\mathbb{Z}$, a contradiction since we assumed $J_s$ was
  finite. As a consequence, showing that~\eqref{eq:ppp} vanishes is equivalent to
  showing that
  \begin{equation}\label{eq:level0}
    \sum_{i\in J_s}\sum_{j\in J_s} \beta_{i,j} (i-j)     = 0
  \end{equation}
  for any $s\in\mathbb{R}$ (such that $J_s$ is not empty and contains more than one point).
  Obviously, the expression in~\eqref{eq:level0} is also
  \[
     \sum_{i\in J_s}\sum_{j\in J_s} (\beta_{i,j}-\beta_{j,i}) i
  \]
  Thanks to the definition~\eqref{def:C} of $\mathcal{C}$, one has for any $i$ that
  $\sum_j \beta_{i,j}-\beta_{j,i}=0$, so that:
  \[
    \sum_{i\in J_s}\sum_{j\in J_s} (\beta_{i,j}-\beta_{j,i}) i =
    \sum_{i\in J_s}\sum_{j\not \in J_s} (\beta_{j,i}-\beta_{i,j}) i = 0
  \]
  thanks to~\eqref{eq:dualequal}. Hence, \eqref{eq:level0} holds and we deduce $p=p'$,
  which shows the lemma.  
\end{proof}
\begin{lemma}
  Let $\nu\in\mathbb{S}^{d-1}$ be totally irrational and let $u$ be a minimizer in~\eqref{eq:propvarphi}.
  Then for any $s\in\R$, the set $\{u=s\}$ is finite.
\end{lemma}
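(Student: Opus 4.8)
The plan is to show that the conclusion follows already from the constraint $u(\cdot)-\langle\nu,\cdot\rangle\in\mathcal{A}_{\mathrm{per}}(Q_T;\R)$ together with total irrationality of $\nu$; minimality of $u$ plays no role. First I would record two elementary facts. Since $\mathcal{L}$ is discrete, $\mathcal{L}\cap Q_T$ is finite (by Lemma~\ref{lemma:elementaryproperties}(v)), and since $\mathcal{L}+Tz=\mathcal{L}$ for all $z\in\mathbb{Z}^d$ while $Q_T$ is a fundamental domain for the action of $T\mathbb{Z}^d$ on $\R^d$, every $x\in\mathcal{L}$ has a unique representation $x=i+Tz$ with $i\in\mathcal{L}\cap Q_T$ and $z\in\mathbb{Z}^d$. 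Second, unfolding the definition~\eqref{def:T-periodic functions} of $\mathcal{A}_{\mathrm{per}}(Q_T;\R)$, the constraint on $u$ gives $u(i+Tz)=u(i)+T\langle\nu,z\rangle$ for all $i\in\mathcal{L}$ and $z\in\mathbb{Z}^d$.

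The key step is then a counting argument. Fix $s\in\R$ and $i\in\mathcal{L}\cap Q_T$. By the identity above, $u(i+Tz)=s$ holds if and only if $\langle\nu,z\rangle=(s-u(i))/T$. Because $\nu$ is totally irrational, the map $\mathbb{Z}^d\ni z\mapsto\langle\nu,z\rangle$ is injective: if $\langle\nu,z\rangle=\langle\nu,z'\rangle$ then $\langle\nu,z-z'\rangle=0$ with $z-z'\in\mathbb{Z}^d$, forcing $z-z'=0$. Hence for each $i$ there is at most one $z\in\mathbb{Z}^d$ with $u(i+Tz)=s$, i.e.\ at most one point of $\{u=s\}$ whose $Q_T$-representative equals $i$. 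Summing over $i\in\mathcal{L}\cap Q_T$ gives $\#\{u=s\}\le\#(\mathcal{L}\cap Q_T)<\infty$, which is the claim.

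I do not expect a genuine obstacle here: the statement is essentially a structural observation. The only substantive point is where total irrationality is used, and it cannot be dropped — if there existed $z_0\in\mathbb{Z}^d\setminus\{0\}$ with $\langle\nu,z_0\rangle=0$, then $u(x+kTz_0)=u(x)$ for every $k\in\mathbb{Z}$, so any nonempty level set of $u$ would be infinite; this is precisely the mechanism behind the expected failure of differentiability of $\varphi$ in non-totally-irrational directions discussed after Example~\ref{example:without(H3)}.
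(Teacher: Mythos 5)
Your proof is correct and follows essentially the same route as the paper's: reduce to representatives in $\mathcal{L}\cap Q_T$ via the periodicity constraint $u(i+Tz)=u(i)+T\langle\nu,z\rangle$, then use total irrationality to get injectivity of $z\mapsto\langle\nu,z\rangle$, yielding the bound $\#\{u=s\}\le\#(\mathcal{L}\cap Q_T)$. The paper phrases it dually (disjointness of the translated slices $J_s^z\subset Q_T$ rather than uniqueness of $z$ for each fixed $i$), and likewise does not invoke minimality of $u$, so your observation that only the admissibility constraint matters is accurate.
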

\begin{proof}
  Recalling the notation in the previous proof, let $s\in\mathbb{R}$ and consider the
  set $J_s:=\{u=s\}$. For $z\in\mathbb{Z}^d$, let $J_s^z = J_s\cap (Q_T+Tz)-Tz\subset Q_T$.
  For $i\in J_s^z$, $u(i) = s+ T\langle z,\nu\rangle$. Since $\nu$ is totally irrational,
  we deduce that $J_s^z\cap J_s^{z'} =\emptyset$ for any $z\neq z'$, showing that 
 all sets $J_s^z$ but a finite number must be empty. Hence $J_s$ is finite.
\end{proof}\EEE

\section{Numerical illustration}
% mention also Braides, Kreutz : DesignOfLatticeSurfaceEnergies

\subsection{A simplified framework}
In this section, we address, as an illustrative experiment, the following
issue. We consider a basic $2D$ cartesian graph $\{(i,j): 0\le i\le M-1,
0\le j\le N-1\}$, representing for instance the pixels of an image,
and we want to approximate on this discrete grid the two-dimensional
total variation $\int_{\Omega} |Du|$, $u\in BV(\Omega)$. Here it is assumed
that $\Omega\subset \R^2$ is a rectangle and that
$\{0,\dots, M-1\}\times \{0,\dots, N-1\}$ is a discretization of
$\Omega$ at a length scale $\sim 1/N\sim 1/M$.

There are of course many ways to do this, but we propose here
to consider a family of discrete ``graph'' total variations, defined
for a \BBB family \EEE  $(u_{i,j})_{i,j}\in\R^{M\times N}$ by:
\begin{equation}\label{eq:defdiscreteTV}
  \begin{aligned}
    J(u)=  \sum_{i,j} c^+_{i+\ha,j}&(u_{i+1,j}-u_{i,j})^++
    c^-_{i+\ha,j}(u_{i,j}-u_{i+1,j})^+
    \\
    & + c^+_{i,j+\ha}(u_{i,j+1}-u_{i,j})^++
    c^-_{i,j+\ha}(u_{i,j}-u_{i,j+1})^+
  \end{aligned}
\end{equation}
and  which involves only nearest-neighbour interactions
in horizontal and vertical directions.

We assume in addition that the weight $c^\pm$ are $T$-periodic
for some $T\in\mathbb{N}$, $T>0$, that is,
$  C^\pm_{a+kT,b+lT} = c^\pm_{a,b}$ for any $(k,l)\in\mathbb{Z}^2$,
$(a,b)=(i+\ha,j)$ or $(i,j+\ha)$, as long as the points
fall inside the grid.

For $T=1$, $c^\pm_{a,b}\equiv 1$, it is standard that~\eqref{eq:defdiscreteTV}
approximates, in the continuum limit, the anisotropic total
variation $\int_\Omega |\partial_1 u|+|\partial_2 u|$, which,
if used for instance as a regularizer for image denoising
or reconstruction,
may produce undesired artefacts (although hardly visible
on standard applications, see Figure~\ref{fig:Denoise}).

A standard way to mitigate this issue (besides, of course,
resorting to numerical analysis based on finite
differences or elements in order to define
more refined discretizations), %%~\cite{??}),
% Hintermüller-R-H, Condat, Caillaud-AC, ChPock-CR??
is to add to~\eqref{eq:defdiscreteTV} diagonal interactions, with
appropriate weights, in order to improve the isotropy of
the limit (see for instance~\cite{BoykovKolmogorov2003}), with the drawback
%https://www.cis.upenn.edu/~cis610/geodesics-Pami03-Boykov.pdf
%Computing Geodesics and Minimal Surfaces via Graph Cuts
% bibtex entries at the bottom of the file
of complexifying the graph and the optimization.
We show here that a similar effect can be attained by homogenization.
To illustrate this, let us first consider the simplest
situation, for $T=2$.

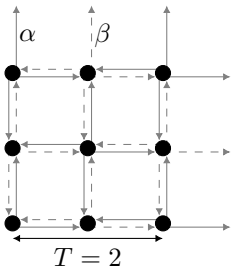
\begin{figure}[htb]
%{\color{red}\noindent [HERE GOES THE TIKZ PICTURE ``coefficients'' WHICH
% YOU COMPUTED]}% tikz picture here "coefficients"
  \begin{tikzpicture}

    \tikzset{>={Latex[width=1mm,length=1mm]}};
    
    % horizontal bonds forward
    \foreach \j in {0,2}{
      \foreach \i in {0,2}{
        \draw[->,gray](\i,\j-.05)--++(0:.9);
      }
    }
    \foreach \j in {1}{
      \foreach \i in {0,2}{
        \draw[dashed,->,gray](\i,\j-.05)--++(0:.9);
      }
    }
    \foreach \j in {0,2}{
      \foreach \i in {1}{
        \draw[dashed,->,gray](\i,\j-.05)--++(0:.9);
      }
    }
    \foreach \j in {1}{
      \foreach \i in {1}{
        \draw[->,gray](\i,\j-.05)--++(0:.9);
      }
    }

    % horizontal bonds backward
    \foreach \j in {0,2}{
      \foreach \i in {0}{
        \draw[dashed,->,gray](\i+1,\j+.05)--++(180:.9);
      }
    }
    \foreach \j in {0,2}{
      \foreach \i in {1}{
        \draw[->,gray](\i+1,\j+.05)--++(180:.9);
      }
    }
    \foreach \j in {1}{
      \foreach \i in {1}{
        \draw[dashed,->,gray](\i+1,\j+.05)--++(180:.9);
      }
    }
    \foreach \j in {1}{
      \foreach \i in {0}{
        \draw[->,gray](\i+1,\j+.05)--++(180:.9);
      }
    }

    % vertical bonds forward
    \foreach \j in {0,2}{
      \foreach \i in {0,2}{
        \draw[->,gray](\i+.05,\j)--++(90:.9);
      }
    }
    \foreach \j in {1}{
      \foreach \i in {0,2}{
        \draw[dashed,->,gray](\i+.05,\j)--++(90:.9);
      }
    }
    \foreach \j in {0,2}{
      \foreach \i in {1}{
        \draw[dashed,->,gray](\i+.05,\j)--++(90:.9);
      }
    }
    \foreach \j in {1}{
      \foreach \i in {1}{
        \draw[->,gray](\i+.05,\j)--++(90:.9);
      }
    }

    % vertical bonds backward
    \foreach \j in {0}{
      \foreach \i in {0,2}{
        \draw[dashed,->,gray](\i-.05,\j+1)--++(270:.9);
      }
    }
    \foreach \j in {0}{
      \foreach \i in {1}{
        \draw[->,gray](\i-.05,\j+1)--++(270:.9);
      }
    }
    \foreach \j in {1}{
      \foreach \i in {0,2}{
        \draw[->,gray](\i-.05,\j+1)--++(270:.9);
      }
    }
    \foreach \j in {1}{
      \foreach \i in {1}{
        \draw[dashed,->,gray](\i-.05,\j+1)--++(270:.9);
      }
    }

    % nodes
    \foreach \j in {0,1,2}{
      \foreach \i  in {0,1,2}{
        \draw[fill=black] (\i,\j) circle(.1);
      }
    }

    % alpha/beta, period
    \draw(0.2,2.5) node{$\alpha$};
    \draw(1.2,2.5) node{$\beta$};

    \draw[<->](0,-.2)--++(0:2);
    \draw(1,-.2) node[anchor=north]{$T=2$};

\end{tikzpicture}

\caption{The alternating $2$-periodic coefficients yielding
  the smallest anisotropy}\label{fig:coefficients}
\end{figure}
In that case, one can explicitly build coefficients $c^\pm_{a,b}$,
taking two values $\alpha,\beta$ (see Figure~\ref{fig:coefficients}),
which will yield the homogenized surface tension
\begin{equation}\label{eq:ST2}
  \varphi(\nu) = (\sqrt{2}-1)\left(|\nu_1|+|\nu_2|
  +\frac{|\nu_1+\nu_2|}{\sqrt{2}}  +\frac{|\nu_1-\nu_2|}{\sqrt{2}}\right)
\end{equation}
whose $1$-level set (or \textit{Frank diagram}) is shown
in Figure~\ref{fig:FD2}.
\begin{figure}[htb]
  \begin{center}
    \begin{tikzpicture}[scale=2,declare function={
        phi(\x,\y) = 0.4142136*(abs(\x)+abs(\y)
        +(abs(\x+\y)+abs(\x-\y))/1.4142136); }]
      \draw[lightgray,very thin] %(-1.9,-1.9) grid (2.9,3.9)
      [step=0.25cm] (-1,-1) grid (1,1);
      % \draw[blue] (1,-2.1) -- (1,4.1); % asymptote
      \draw[->] (-1.25,0) -- (1.25,0) node[right] {$\nu_1$};
      \draw[->] (0,-1.25) -- (0,1.25) node[right] {$\nu_2$};

      \foreach \pos in {-1,1}
      \draw[shift={(\pos,0)}] (0pt,2pt) -- (0pt,-2pt) node[below] {$\pos$};
      \foreach \pos in {-1,1}
      \draw[shift={(0,\pos)}] (2pt,0pt) -- (-2pt,0pt) node[left] {$\pos$};
      
      % \fill (0,0) circle (0.064cm);

      \draw[thick,variable=\t,domain=0:360,samples=720]
      plot ({cos(\t)/phi(cos(\t),sin(\t))},{sin(\t)/phi(cos(\t),sin(\t))});
      % plot ({\t*sin((1/\t) r)},{\t*cos((1/\t) r)})
      % node[right] {$\bigl(x(t),y(t)\bigr) = (t\sin \frac{1}{t}, t\cos \frac{1}{t})$};
      
      % \fill[red] (0.63662,0) circle (2pt)
      % node [below right,fill=white,yshift=-4pt] {$(\frac{2}{\pi},0)$};
    \end{tikzpicture}
  \end{center}
  \caption{The Frank diagram $\{\nu:\varphi(\nu)\le 1\}$
    given by~\eqref{eq:ST2}} \label{fig:FD2}
\end{figure}
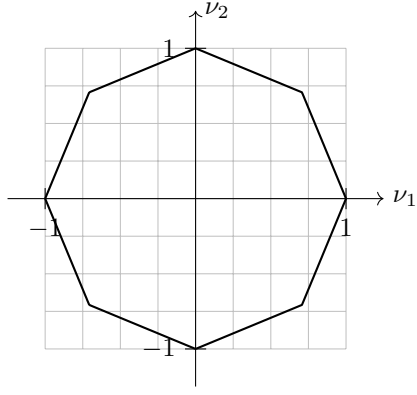
Observe that this is the same anisotropy which
would be obtained by using constant coefficients and adding
interactions along the edges $((i,j),(i+1,j+1))$ and $((i,j),(i+1,j-1))$.

In order to obtain~\eqref{eq:ST2}, one needs to tune $\alpha,\beta$ so
that a vertical edge and a diagonal edge, in the most favorable position,
have the same length (with a $\sqrt{2}$ factor for the diagonal, whose
intersection with the periodicity cell is of course longer).
This is ensured if $\alpha+\beta = 4\alpha/\sqrt{2}$, that is,
$\beta = (2\sqrt{2}-1)\alpha$. We find that choosing
\begin{equation}\label{eq:optab2}
  \begin{cases} \alpha =\frac{1}{4\sqrt{2}} \approx 0.1768\\
    \beta = (2\sqrt{2}-1)\alpha \approx 0.3232
  \end{cases}
\end{equation}
yields~\eqref{eq:ST2}, as an effective homogenized anisotropy.
%tensor([[[0.1888, 0.3375],
%         [0.3648, 0.1625]],
%        [[0.1908, 0.3663],
%         [0.3351, 0.1649]]]) 
%tensor([[[0.3672, 0.1624],
%         [0.1890, 0.3376]],
%        [[0.3652, 0.1909],
%         [0.1648, 0.3352]]])

For larger periodicity cells, it seems difficult to do a similar analysis,
first of all, because one should not expect the optimal minimizers, in
most directions (if not all), to be given by straight lines, but rather
by periodic perturbations of straigth lines. We propose an optimization
process in order to compute the optimal weights $c^\pm_{a,b}$.

\subsection{The optimization method}

The effective surface tension is obtained by solving the cell
problem:
\begin{equation}\label{eq:cellp}
  \begin{aligned}
    \phi(\nu) = \min_u \Big\{\sum_{(i,j)\in Y}&c^+_{i+\ha,j}(u_{i+1,j}-u_{i,j})^++
    c^-_{i+\ha,j}(u_{i,j}-u_{i+1,j})^+
    \\
    +\ & c^+_{i,j+\ha}(u_{i,j+1}-u_{i,j})^++
    c^-_{i,j+\ha}(u_{i,j}-u_{i,j+1})^+\ :
    \\ & u_{i,j}- \nu \cdot {\begin{pmatrix} i \\ j \end{pmatrix}}\ \textup{$Y$-periodic}
    \ \Big\}
  \end{aligned}
\end{equation}
where $Y=\mathbb{Z}^2\cap ([0,T)\times [0,T))$ is the periodicity cell.
This is easily solved, for instance by a saddle-point algorithm~\cite{ChaPock-JMIV} which aims at finding a solution to:
\begin{multline*}
  \phi(\nu) =
  \min_{\BBB v \, Y\text{-periodic}\EEE} \max_{0 \le w^\pm_{\bullet}\le 1}
  \sum_{(i,j)\in Y}
  (w^+_{i+\ha,j}c^+_{i+\ha,j}- w^-_{i+\ha,j}c^-_{i+\ha,j})(v_{i+1,j}-v_{i,j}+\nu_1)
\\  +  (w^-_{i,j+\ha}c^-_{i,j+\ha}- w^+_{i,j+\ha}c^+_{i,j+\ha})(v_{i,j+1}-v_{i,j}+\nu_2),
\end{multline*}
where we have replaced the variable $u$ with
the periodic vector $v_{i,j}=u_{i,j}-\nu\cdot(i,j)^T$.
For technical reasons, we need to ``regularize'' slightly this problem
in order to make it differentiable with respect to the coefficients
$\mathbf{c}=(c^\pm_{\bullet})$. This is done by introducing $\e>0$ a (very) small
parameter and adding to the previous objective the penalization
\[
  - \frac{\e}{2} \sum_{(i,j)\in Y}
  (w^+_{i+\ha,j})^2 + (w^-_{i+\ha,j})^2 + (w^-_{i,j+\ha})^2 + (w^+_{i,j+\ha})^2
  + \frac{\e}{2} \sum_{(i,j)\in Y} v_{i,j}^2
\]
which makes the problem strongly convex/concave and the solutions $w,v$
unique. We call $\phi_\e(\nu)[\mathbf{c}]$ the corresponding value.
The advantage of this regularization is that one can easily show that
$\mathbf{c}\mapsto\phi_\e(\nu)[\mathbf{c}]$ is locally $C^{1,1}$, with a
gradient given by:
\begin{multline*}
  \lim_{t\to 0}\frac{\phi_\e(\nu)[\mathbf{c}+t\mathbf{d}]-
  \phi_\e(\nu)[\mathbf{c}]}{t} =
  \sum_{(i,j)\in Y}
  (w^+_{i+\ha,j}d^+_{i+\ha,j}- w^-_{i+\ha,j}d^-_{i+\ha,j})(v_{i+1,j}-v_{i,j}+\nu_1)
 \\ +  (w^-_{i,j+\ha}d^-_{i,j+\ha}- w^+_{i,j+\ha}d^+_{i,j+\ha})(v_{i,j+1}-v_{i,j}+\nu_2)
\end{multline*}
where $(w,v)$ solves the saddle-point problem which defines $\phi_\e(\nu)[\mathbf{c}]$.

Then, to find coefficients which ensure that $\phi$ is as ``isotropic''
as possible, one fixes a finite set of directions
$(\nu_1,\dots,\nu_k)$ (typically, $(\cos (2\ell\pi/k),\sin (2\ell\pi/k))$
for $\ell=1,\dots,k$), and \BBB uses \EEE a first
order gradient descent algorithm to optimize:
\[
  \mathscr{L}(\mathbf{c}) = \sum_{\ell=1}^k \left(\phi_\e(\nu_\ell)[\mathbf{c}]
  -1\right)^2
\]
The problem is easily solved for $Y=\{0,1\}\times\{0,1\}$,
$k=8$ and $(\nu_\ell)_{\ell=1}^8$ given as above. For larger periodicity
cells and more directions, it easily gets trapped in local minima and we
use a random initialization in order to be able to find satisfactory
solutions. We then test the result by computing
the un-regularized surface tension $\phi$ with the resulting coefficients
$\mathbf{c}$. We show some results in the next section.
Of course, taking a large value of $\e$ will make the problem easier
to solve, but the learned coefficients will not allow to reconstruct
a satisfactory surface tension: we need to choose $\e$ small, an order
of magnitude below the error which we expect on the anisotropy of $\phi$.

\subsection{Numerical results}

We show the outcome of the optimization, in the
periodicity cell $Y=\{0,\dots T-1\}\times \{0,\dots,T-1\}$ for $T=2,4,6,8$.
We plot first the set $\{\varphi\le 1\}$ or Frank diagram for the effective
surface tensions.
\begin{figure}[!htbp]
  \includegraphics[width=.3\textwidth]{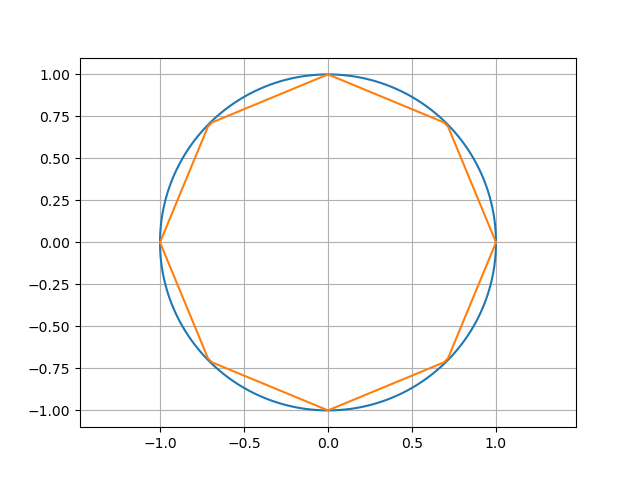}
  \includegraphics[width=.3\textwidth]{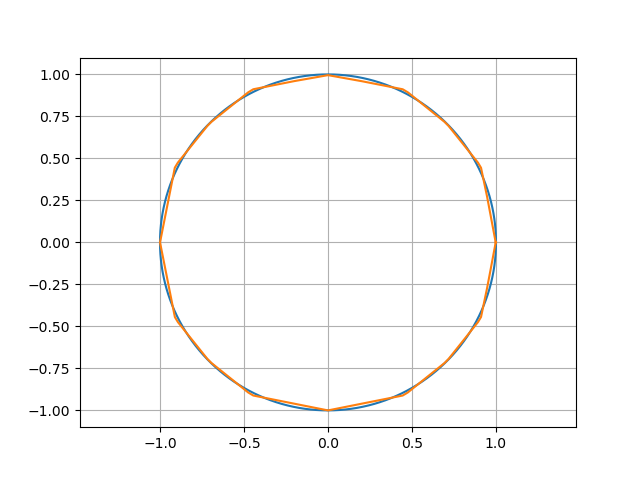}
  \includegraphics[width=.3\textwidth]{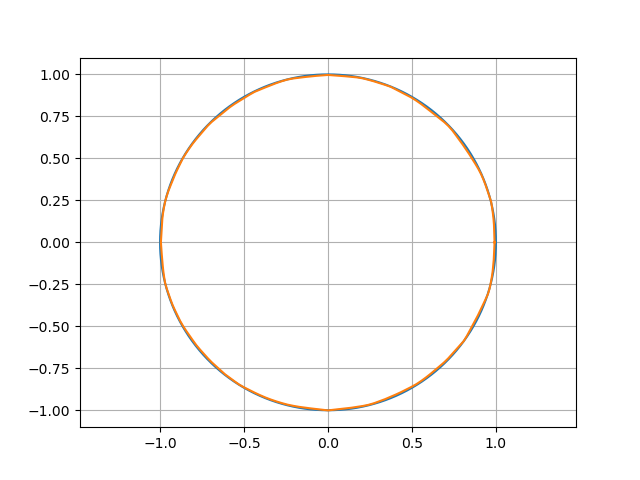}
  \caption{Frank diagrams of the effective anisotropies for $T=2,4,8$.}
  \label{fig:Frank}
\end{figure}
Figure~\ref{fig:Frank} shows the diagram obtained,
for $T=2,4,8$. For $T=2$, the optimization yields the same
anisotropy as our \BBB construction in \eqref{eq:optab2} which gives the anistropy \eqref{eq:ST2} and which we conjecture to be optimal for $\mathscr{L}(\mathbf{c})$ taking $k=8$ \EEE (compare with Fig.~\ref{fig:FD2}).
However, except when initialized with the values in~\eqref{eq:optab2},
the algorithm usually outputs different values with the
same effective anisotropy, see Fig.~\ref{fig:graph2}
(the values in~\eqref{eq:optab2} are in some sense better, as for instance
a vertical edge will always have the same effective energy with these values,
while with the \BBB computed \EEE values displayed in Fig.~\ref{fig:graph2}, it
will need to pass through the edges in the second column of the cell in
order to get the minimal energy).
%% typical experiment
%iter = 29 La =  7.629 e =  2.231e-09 -> 3.165e-10 err =  1.000 
% tensor([1.0000, 1.0000, 1.0000, 1.0000, 1.0000, 1.0000, 1.0000, 1.0000])
%tensor([[[0.1907, 0.3372],
%         [0.3719, 0.1628]],
%
%        [[0.1586, 0.3414],
%         [0.3698, 0.1950]]]) 
% tensor([[[0.3679, 0.1595],
%         [0.1941, 0.3405]],
%
%        [[0.3664, 0.1909],
%         [0.1627, 0.3373]]])
%erreur :  8.24 %

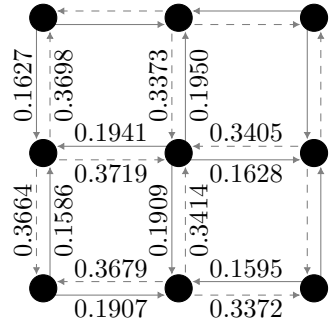
\begin{figure}[!htbp]
%{\color{red}\noindent [HERE GOES THE TIKZ PICTURE ``coefficients'' WHICH
% YOU COMPUTED]}% tikz picture here "coefficients"
  \begin{tikzpicture}[scale=1.8]

    \tikzset{>={Latex[width=1mm,length=1mm]}};
    
    % horizontal bonds forward
    \foreach \j in {0,2}{
      \foreach \i in {0}{
        \draw[->,gray](\i,\j-.05)--++(0:.9);
      }
    }
    \foreach \j in {1}{
      \foreach \i in {0}{
        \draw[dashed,->,gray](\i,\j-.05)--++(0:.9);
      }
    }
    \foreach \j in {0,2}{
      \foreach \i in {1}{
        \draw[dashed,->,gray](\i,\j-.05)--++(0:.9);
      }
    }
    \foreach \j in {1}{
      \foreach \i in {1}{
        \draw[->,gray](\i,\j-.05)--++(0:.9);
      }
    }

    % horizontal bonds backward
    \foreach \j in {0,2}{
      \foreach \i in {0}{
        \draw[dashed,->,gray](\i+1,\j+.05)--++(180:.9);
      }
    }
    \foreach \j in {0,2}{
      \foreach \i in {1}{
        \draw[->,gray](\i+1,\j+.05)--++(180:.9);
      }
    }
    \foreach \j in {1}{
      \foreach \i in {1}{
        \draw[dashed,->,gray](\i+1,\j+.05)--++(180:.9);
      }
    }
    \foreach \j in {1}{
      \foreach \i in {0}{
        \draw[->,gray](\i+1,\j+.05)--++(180:.9);
      }
    }

    % vertical bonds forward
    \foreach \j in {0}{
      \foreach \i in {0,2}{
        \draw[->,gray](\i+.05,\j)--++(90:.9);
      }
    }
    \foreach \j in {1}{
      \foreach \i in {0,2}{
        \draw[dashed,->,gray](\i+.05,\j)--++(90:.9);
      }
    }
    \foreach \j in {0}{
      \foreach \i in {1}{
        \draw[dashed,->,gray](\i+.05,\j)--++(90:.9);
      }
    }
    \foreach \j in {1}{
      \foreach \i in {1}{
        \draw[->,gray](\i+.05,\j)--++(90:.9);
      }
    }

    % vertical bonds backward
    \foreach \j in {0}{
      \foreach \i in {0,2}{
        \draw[dashed,->,gray](\i-.05,\j+1)--++(270:.9);
      }
    }
    \foreach \j in {0}{
      \foreach \i in {1}{
        \draw[->,gray](\i-.05,\j+1)--++(270:.9);
      }
    }
    \foreach \j in {1}{
      \foreach \i in {0,2}{
        \draw[->,gray](\i-.05,\j+1)--++(270:.9);
      }
    }
    \foreach \j in {1}{
      \foreach \i in {1}{
        \draw[dashed,->,gray](\i-.05,\j+1)--++(270:.9);
      }
    }

    % nodes
    \foreach \j in {0,1,2}{
      \foreach \i  in {0,1,2}{
        \draw[fill=black] (\i,\j) circle(.1);
      }
    }

    % alpha/beta, period
%tensor([[[0.1907, 0.3372],
%         [0.3719, 0.1628]],
%
%        [[0.1586, 0.3414],
%         [0.3698, 0.1950]]]) 
% tensor([[[0.3679, 0.1595],
%         [0.1941, 0.3405]],
%
%        [[0.3664, 0.1909],
%         [0.1627, 0.3373]]])

%         \draw(0.2,2.5) node{$\beta$};
%         \draw(1.2,2.5) node{$\alpha$};
    \draw(.5,.15) node{$0.3679$};
    \draw(.5,-.15) node{$0.1907$};
    \draw(1.5,.15) node{$0.1595$};
    \draw(1.5,-.15) node{$0.3372$};
    \draw(.5,1.15) node{$0.1941$};
    \draw(.5,.85) node{$0.3719$};
    \draw(1.5,1.15) node{$0.3405$};
    \draw(1.5,.85) node{$0.1628$};
    \draw(-0.15,.5) node[rotate=90]{$0.3664$};
    \draw(0.15,.5) node[rotate=90]{$0.1586$};
    \draw(-0.15,1.5) node[rotate=90]{$0.1627$};
    \draw(0.15,1.5) node[rotate=90]{$0.3698$};
    \draw(.85,.5) node[rotate=90]{$0.1909$};
    \draw(1.15,.5) node[rotate=90]{$0.3414$};
    \draw(.85,1.5) node[rotate=90]{$0.3373$};
    \draw(1.15,1.5) node[rotate=90]{$0.1950$};

%    \draw[<->](0,-.2)--++(0:2);
%    \draw(1,-.2) node[anchor=north]{$T=2$};

\end{tikzpicture}

\caption{An example of optimized $2$-periodic coefficients yielding
  the same anisotropy as the choice~\eqref{eq:optab2}}\label{fig:graph2}
\end{figure}
\begin{figure}[!htbp]
  \includegraphics[width=.25\textwidth]{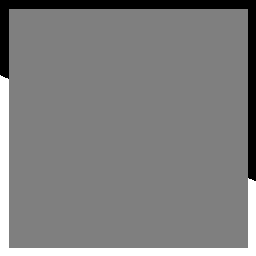} 
  \hspace{2mm}
  \includegraphics[width=.25\textwidth]{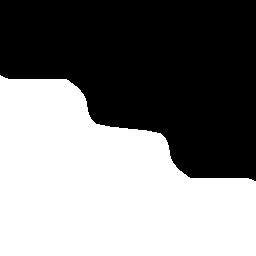}\hspace{2mm}
  \includegraphics[width=.25\textwidth]{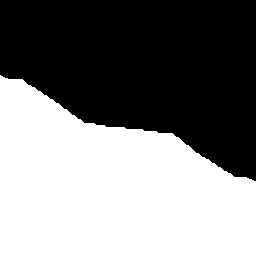}\\
  \includegraphics[width=.25\textwidth]{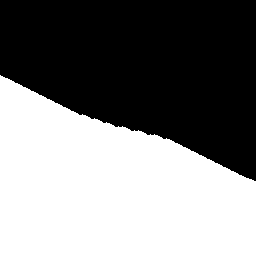}\hspace{2mm}
  \includegraphics[width=.25\textwidth]{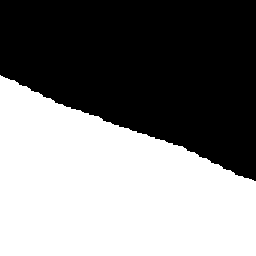}\hspace{2mm}
  \includegraphics[width=.25\textwidth]{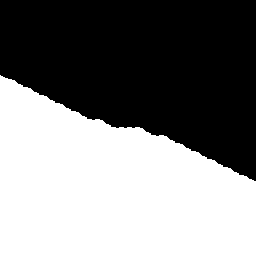}
  \caption{A minimal half-plane in the orientation
    $(\cos 3\pi/8,\sin 3\pi/8)$. Top left, boundary datum, the region
    where the perimeter is minimized is in gray. Top, middle:
    $\varphi(\nu)=|\nu_1|+|\nu_2|$. Top, right: optimal effective $\varphi$ for $T=2$.
    Bottom: for $T=4$, $6$, $8$.}  \label{fig:Inpaint}
\end{figure}
For $T=4$, one sees that the behavior is almost isotropic, while for
$T=8$, the relative error with the perfect unit disk is  about $1\%$. \BBB
Here, we estimated this error as
$(\max_\ell\phi(\nu_\ell)-\min_\ell\phi(\nu_\ell))/\min_{\ell} \phi(\nu_\ell)$,
% Here, we estimated the error as
% \begin{align*}
%  \frac{\max_\ell\phi(\nu_\ell)-\min_\ell\phi(\nu_\ell)}{\min_{\ell} \phi(\nu_\ell)}\,,
% \end{align*}
where  $\ell \in \{1,\ldots,k\}$ and $k=180$. \EEE
We illustrate this on an ``inpainting'' example, which consists in finding the
minimal line in a given direction. 
We consider as an example the direction $(\cos 3\pi/8,\sin 3\pi/8)$, which is
irrational, so that there cannot be a fully periodic solution. The figure~\ref{fig:Inpaint}
displays several minimal half-planes in this orientation. Observe that for this orientation,
the results for $T=4$ or $6$ look nicer than the result obtained for $T=8$.

We also show a denoising example based on the ``ROF'' method (which
consists simply in minimizing the total variation (defined
by the surface tension $\varphi$) of an image with a quadratic
penalization of the distance to a noisy data, in order to produce a
denoised version, see~\cite{RudinOsherFatemi}) % bib entry below
with the anisotropic tension $\varphi(\nu)=|\nu_1|+|\nu_2|$ (``$T=1$'')
and the optimized homogenized surface tension for $T=4$.
The original image is degraded with a Gaussian noise with $10\%$  standard
deviation (with respect to the range of the values).
Here, the difference between the two regularizers is hardly perceptible (since the data term strongly influences the position of
the discontinuities), yet a close-up (bottom row) allows to see a slight difference, for instance on the cheek where the $T=1$ anisotropy
produces block structures.
\begin{figure}[!htbp]
  \includegraphics[width=.25\textwidth]{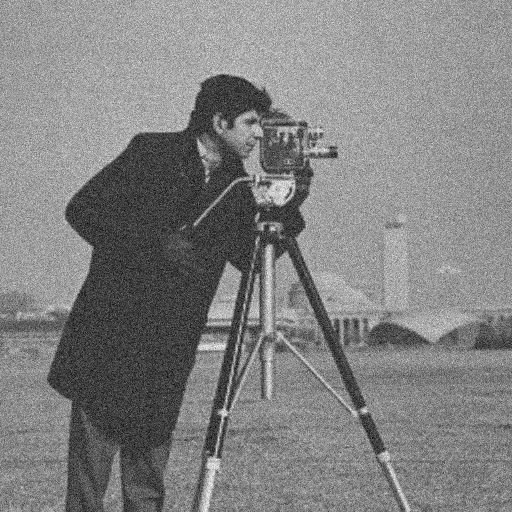}\hspace{2mm}
  \includegraphics[width=.25\textwidth]{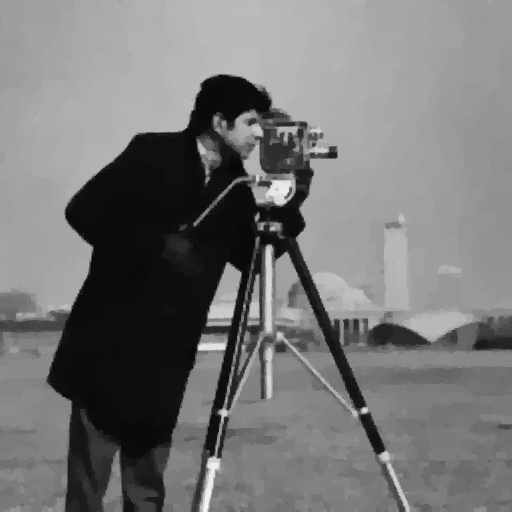}\hspace{2mm}
  \includegraphics[width=.25\textwidth]{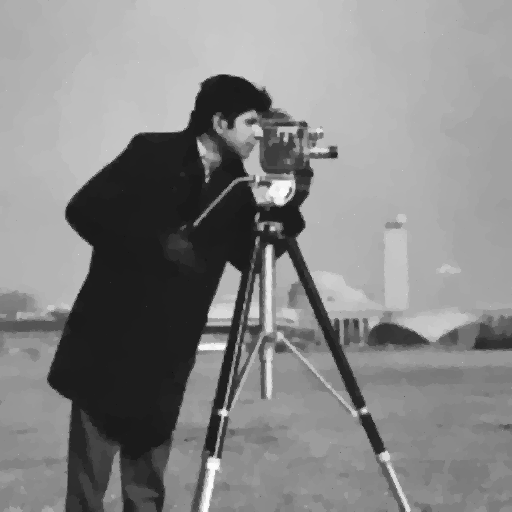}\\[2mm]
  \includegraphics[width=.25\textwidth,trim=6cm 10.5cm 7cm 2.5cm,clip]{camera_noise}\hspace{2mm}
  \includegraphics[width=.25\textwidth,trim=6cm 10.5cm 7cm 2.5cm,clip]{camera_denoise_l1}\hspace{2mm}
  \includegraphics[width=.25\textwidth,trim=6cm 10.5cm 7cm 2.5cm,clip]{camera_denoise_4}
  \caption{``ROF'' denoising example. Left: noisty image.
    Middle, denoised with
    $\varphi(\nu)=|\nu_1|+|\nu_2|$. Right: with the effective
    tension computed for $T=4$.}  \label{fig:Denoise}
\end{figure}

\section*{Acknowledgements} 
This work was supported by the Deutsche Forschungsgemeinschaft (DFG, German Research Foundation) under Germany's Excellence Strategy EXC 2044 -390685587, Mathematics M\"unster: Dynamics--Geometry--Structure. The authors thank
the reviewers for their careful reading of the paper and comments
which led to much improvement with respect to the first version.

\end{document}